\tikzset{>={Latex[width=2mm,length=2mm]}}
\tikzstyle{vertex}=[circle, draw, inner sep=0pt, minimum size=6pt]
\newcommand{\IGNORE}[1]{}
\newcommand{\noi}{{\noindent}}
\newcommand{\ds}{\ensuremath{\displaystyle}}
\newcommand{\wt}{\ensuremath{\widetilde}}
\newcommand{\wh}{\ensuremath{\widehat}}
\newcommand{\opt}{\textsl{opt}}
\newcommand{\optsol}{\textsl{OPT}}
\newcommand{\ALGO}{\textsl{ALGO}}
     \newcommand\ubpcomp[3]{{#1}^{\{#2,#3\}}}
     \newcommand\sbpcomp[3]{{#1}^{\oslash}}
     \newcommand\vbpcomp[3]{{#1}}
\newcommand{\nbp}{\ensuremath{\#\textup{bp}}}
\newcommand{\ncomp}{\ensuremath{\#\textup{comp}}}
\newcommand{\pstart}{a_0}
\newcommand{\pend}{z_0}
\newcommand{\auxpstart}{a_{\star}}
\newcommand{\auxpend}{z_{\star}}
\newcommand{\earP}{\widehat{P}}
\newcommand{\dblearP}{\widetilde{P_{\star}}}
\newcommand{\LB}{{\mathit{l}}{\mathit{b}}}
\newtheorem{theorem}{Theorem} 
\newtheorem{proposition}[theorem]{Proposition}
\newtheorem{lemma}[theorem]{Lemma}
\newtheorem{fact}[theorem]{Fact}
\newtheorem{defn}{Definition} 
\newcommand{\authremark}[1]{\medskip\noi\textbf{Remark}:{#1}\newline\medskip}
\newcommand{\floor}[1]{\lfloor{#1}\rfloor}
\newcommand{\ceiling}[1]{\lceil{#1}\rceil}
\newcommand{\DTWO}{{\textup{D2}\/}}
\newcommand{\dtwocost}{\tau}
\newcommand{\ndtwocost}{\widehat{\tau}}
\newcommand{\cost}{\textup{cost}}
\newcommand{\credits}{\textup{credits}}
\def\DEBUG{true}
 \newcommand{\jos}[1]{\textcolor{orange}{\fbox{#1}}}
 \newcommand{\vishnu}[1]{\textcolor{blue}{#1}}
  \def\rem#1{{\marginpar{\raggedright\scriptsize #1}}}
  \newcommand{\josrt}[1]{\rem{\textcolor{orange}{$\bullet$ #1}}}
  \newcommand{\vishnurt}[1]{\rem{\textcolor{blue}{$\bullet$ #1}}}
  \newcommand{\jos}[1]{}
  \newcommand{\vishnu}[1]{}
  \newcommand{\josrt}[1]{}
  \newcommand{\vishnurt}[1]{}
\begin{document}

\title
{
The Matching Augmentation Problem:
\\
A $\frac74$-Approximation Algorithm
}
\author{
J.Cheriyan	\thanks{University of Waterloo, Waterloo, Canada}
\and
J.Dippel	\thanks{University of Waterloo, Waterloo, Canada}
\and
F.Grandoni	\thanks{IDSIA, USI-SUPSI, Lugano, Switzerland}
\and
A.Khan		\thanks{Technical University of Munich, Garching, Germany
	}
\and
V.V.Narayan	\thanks{McGill University, Montreal, Canada}
}

  \date{December 4, 2018}

\maketitle

\vspace{-0.25in}

\begin{abstract}
We present a $\frac74$ approximation algorithm for the matching augmentation
problem (MAP): given a multi-graph with edges of cost either zero or one
such that the edges of cost~zero form a matching, find a 2-edge connected
spanning subgraph (2-ECSS) of minimum cost.

We first present a reduction of any given MAP instance to a collection of
well-structured MAP instances such that the approximation guarantee is
preserved. Then we present a $\frac74$ approximation algorithm for a
well-structured MAP instance. The algorithm starts with a min-cost 2-edge
cover and then applies ear-augmentation steps. We analyze the cost of the
ear-augmentations using an approach similar to the one proposed by Vempala
 \& Vetta for the (unweighted) min-size 2-ECSS problem (``Factor 4/3
approximations for minimum 2-connected subgraphs,'' APPROX 2000, LNCS 1913,
pp.262--273).

\medskip

\noindent
{\bf Keywords}:
2-edge~connected graph,
2-edge~covers,
approximation algorithms,
bridges,
connectivity augmentation,
forest   augmentation problem,
matching augmentation problem,
network design.
\end{abstract}

\bigskip

\renewcommand{\thefootnote}{\fnsymbol{footnote}}
\setcounter{footnote}{6}

\section{ \label{s:intro}  Introduction}

A basic goal in the area of \emph{survivable network design} is to
design real-world networks of low cost that provide connectivity
between pre-specified pairs of nodes even after the failure of a
few edges/nodes.
Many of the problems in this area are NP-hard, and significant
efforts have been devoted in the last few decades to the design of
approximation algorithms, see \cite{WS:book}.

One of the fundamental problems in the area is the minimum-cost
2-edge connected spanning subgraph problem (abbreviated as min-cost 2-ECSS):
given a graph together with non-negative costs for the edges,
find a 2-edge connected spanning subgraph (abbreviated as 2-ECSS) of minimum cost.
This problem is closely related to the famous Traveling Salesman
Problem (TSP), and some of the earliest papers in the area of
approximation algorithms address the min-cost 2-ECSS problem \cite{FJ:sicomp81,FJ:tcs82}.
In the context of approximation algorithms,
this research led to the discovery of algorithmic paradigms
such as the \textit{primal-dual method} \cite{GW95,WS:book}
and the \textit{iterative rounding method} \cite{J01,LRS:book},
and led to dozens of publications.
Under appropriate assumptions, these methods achieve an approximation guarantee of~2
for several key problems in survivable network design, including min-cost 2-ECSS.
Unfortunately, these generic methods do not achieve approximation guarantees below~2.
Significant research efforts have been devoted to achieving approximation guarantees
below~2 for specific problems in the area of survivable network design.
For example, building on earlier work, an approximation guarantee of $\frac43$ has been
achieved for unweighted (\textit{min-size}) 2-ECSS \cite{SV:cca},
where each edge of the input graph has cost~one and the goal is to find
a 2-ECSS with the minimum number of edges.

There is an important obstacle beyond unweighted problems, namely,
the special case of min-cost 2-ECSS where the (input) edges have cost of zero or one,
and the aim is to design an algorithm that achieves an approximation guarantee below~2.
This problem is called the \textit{Forest Augmentation Problem} (FAP).
In more detail, we are given an undirected graph $G=(V,\;E_0\cup E_1)$,
where each edge in $E_0$ has cost~zero and each edge in $E_1$ has cost~one;
the goal is to compute a 2-ECSS $H=(V,F)$ of minimum cost.
We denote the cost of an edge $e$ of $G$ by $\cost(e)$,
and for a subgraph $G'$ of $G$, $\cost(G')$ denotes $\sum_{e\in E(G')}\cost(e)$.
Observe that $\cost(H)=|F\cap E_1|$,
so the goal is to augment $E_0$ to a 2-ECSS by adding the minimum
number of edges from $E_1$.
Intuitively, the zero-edges define some existing network that we
wish to augment (with edges of cost~one) such that the augmented
network is resilient to the failure of any one edge.
Without loss of generality (w.l.o.g.) we may contract each of the 2-edge~connected subgraphs formed by the 
zero-edges, and hence, we may assume that $E_0$ induces a forest:
this motivates the name of the problem.

A key special case of FAP is the \textit{Tree Augmentation Problem} (TAP),
where the edges of cost~zero form a spanning tree.
Nagamochi \cite{Na03} first obtained an approximation guarantee below~2 for TAP,
and since then there have been several advances
including recent work, see \cite{Adj17,EFKN09,FGKS17,GKZ18,KN:talg16}.

We focus on a different special case of FAP called the \textit{matching
augmentation problem} (MAP): given a multi-graph with edges of cost
either zero or one such that the edges of cost~zero form a matching,
find a 2-ECSS of minimum cost.  Note that loops are not allowed;
multi-edges (parallel edges) are allowed.
From the view-point of approximation algorithms,
MAP is ``orthogonal'' to TAP
in the sense that the forest of zero-cost edges
has many connected~components in MAP,
whereas this forest has only one connected~component in TAP.
In our opinion, MAP (like TAP) is an important special case of FAP
in the sense that none of the previous approaches
(including approaches developed for TAP over two decades)
give an approximation guarantee below~2 for MAP.

\subsection{Previous literature \& possible approaches for MAP}

There is extensive literature on
approximation algorithms for problems in survivable network design, and also on
the minimum-cost 2-ECSS problem including its key special cases
(including the unweighted problem, TAP, etc.).
To the best of our knowledge,
there is no previous publication on FAP or MAP,
although the former is well known to the researchers working in this area.

Let us explain briefly why previous approaches do not help for obtaining
an approximation guarantee below~2 for MAP.
Let $G$ denote the input~graph, and let $n$ denote $|V(G)|$.
Let $\opt$ denote the optimal value, i.e., the minimum cost of a 2-ECSS
of the given instance.
Recall that the standard cut-covering LP~relaxation of the min-cost 2-ECSS problem
has a non-negative variable $x_e$ for each edge $e$ of $G$,
and for each nonempty set of nodes $S$, $S\not=V$, there is a constraint
requiring the sum of the $x$-values in the cut $(S,V-S)$ to be $\ge2$;
the objective is to minimize $\sum_{e\in{E}} \cost(e) x_e$.

The primal-dual method and the iterative rounding method
are powerful and versatile methods for rounding LP~relaxations, but
in the context of FAP,
these methods seem to be limited to proving approximation guarantees
of at~least~2.

Several intricate combinatorial methods that may
also exploit lower-bounds from LP~relaxations
have been developed for approximation algorithms for
unweighted 2-ECSS,
e.g., the $\frac43$-approximation algorithm of \cite{SV:cca}.
For unweighted 2-ECSS, there is a key lower bound of $n$ on $\opt$
(since any solution must have $\ge n$ edges, each of cost~one).
This no longer holds for MAP; indeed, the analogous lower bound on $\opt$
is $\frac12 n$ for MAP.
It can be seen that an $\alpha$-approximation algorithm for unweighted
2-ECSS implies a $(3\alpha-2)$-approximation algorithm for MAP.
(We sketch the reduction:
let $M$ denote the set of zero-cost edges in an instance of MAP;
observe that $|M|\leq\opt$;
we subdivide (once) each edge in $M$, then we change all edge costs to
one, then we apply the algorithm for unweighted 2-ECSS, and finally we
undo the initial transformation; the optimal cost of the unweighted
2-ECSS instance is $\le\opt+2|M|$, hence, the solution of the MAP
instance has cost
$\leq\alpha(\opt+2|M|)-2|M|=\alpha\opt+(2\alpha-2)|M|\leq(3\alpha-2)\opt$.)
Thus the $\frac43$-approximation algorithm of \cite{SV:cca} for
unweighted 2-ECSS gives a $2$-approximation algorithm for MAP.
(Although preliminary results and claims have been published on
achieving approximation guarantees below~$\frac43$ for unweighted
2-ECSS, there are no refereed publications to date, see \cite{SV:cca}.)

Over the last two decades, starting with the work of \cite{Na03},
a few methods have been developed to obtain approximation guarantees
below~2 for TAP.
The recent methods of \cite{Adj17,FGKS17} rely on so-called bundle
constraints defined by paths of zero-cost edges.
Unfortunately, these methods (including methods that use the bundle
constraints) rely on the fact that the set of zero-cost edges forms a
connected graph that spans all the nodes, see \cite{FGKS17,KN:talg16,Na03}.
 Clearly, this property does not hold for MAP.

\subsection{Hardness of approximation of MAP and FAP}

MAP is a generalization of the unweighted 2-ECSS problem
(consider the special case of MAP with $M=\emptyset$).
The latter problem is known to be APX-hard;
thus, it has a ``hardness of approximation'' threshold of $1+\epsilon$
where $\epsilon>0$ is a constant, see \cite{GGTW09}.
Hence, MAP is APX-hard.

Given the lack of progress on approximation algorithms for FAP, one
may wonder whether there is a ``hardness of approximation'' threshold
that would explain the lack of progress.
Unfortunately, the results and techniques from the area of ``hardness
of approximation'' are far from the known approximation guarantees
for many problems in network design.
For example, even for the notorious Asymmetric TSP (ATSP),
the best ``hardness of approximation'' lower~bound known
is around $\frac{75}{74}\approx1.014$, see \cite{KLS15}.

\subsection{Our method for MAP}
We first present a reduction of any given instance of MAP to a
collection of well-structured MAP instances such that the approximation
guarantee is preserved, see
Sections~\ref{s:prelims},~\ref{s:algo},~\ref{s:pre-proc1}.
Then we present a $\frac74$ approximation algorithm for a
well-structured MAP instance,
see Sections~\ref{s:algo},~\ref{s:path-thicken},~\ref{s:algo-last}.
Our algorithm starts with a so-called \DTWO\
(this is a min-cost 2-edge cover) and
then applies ear-augmentation steps.
We analyze the cost of the
ear-augmentations using an approach similar to the one proposed by
Vempala \& Vetta for the unweighted 2-ECSS problem \cite{VV00}.
Our presentation is self-contained and formally independent of
Vempala \& Vetta's manuscript; also, we address a weighted version of
the 2-ECSS problem and our challenge is to improve on the approximation
guarantee of~2,
whereas Vempala \& Vetta's goal is to achieve an approximation
guarantee of~$\frac43$ for the unweighted 2-ECSS problem.

For the sake of completeness, we have included the proofs of several
basic results (e.g., so-called Facts); these should not be viewed as new
contributions.

An outline of the paper follows.
Section~\ref{s:prelims} has standard definitions and some preliminary results.
Section~\ref{s:algo} presents an outline of our algorithm for MAP,
and explains what is meant by a well-structured MAP instance.
Section~\ref{s:pre-proc1} presents the
pre-processing steps that give an approximation preserving reduction
from any instance of MAP to a collection of well-structured MAP instances;
some readers may prefer to skip this section
(and refer back to the results/details as needed).
Sections~\ref{s:path-thicken},~\ref{s:algo-last} present the
$\frac74$ approximation algorithm for 
well-structured MAP instances, and prove the approximation guarantee.
Section~\ref{s:lowerbounds} presents examples that give
lower~bounds on our results on MAP.
The first example gives a construction
such that $\opt \approx \frac74 \dtwocost$,
where $\dtwocost$ denotes the minimum cost of a 2-edge~cover.
The second example gives a construction such that the cost of the
solution computed by our algorithm is $\approx\frac74\opt$.

\section{ \label{s:prelims}  Preliminaries}

This section has definitions and preliminary results.
Our notation and terms are consistent with \cite{Diestel},
and readers are referred to that text for further information.

Let $G=(V,E)$ be a (loop-free) multi-graph with edges of cost either zero or one
such that the edges of cost~zero form a matching.
We take $G$ to be the input graph, and
we use $n$ to denote $|V(G)|$.
Let $M$ denote the set of edges of cost~zero.
Throughout, the reader should keep in mind that $M$ is a matching;
this fact is used in many of our proofs without explicit reminders.
We call an edge of $M$ a \textit{zero-edge}
and we call an edge of $E-M$ a \textit{unit-edge}.
We call a pair of parallel edges a $\{0,1\}$-edge-pair
if one of the two edges of the pair
has cost~zero and the other one has cost~one.

We use the standard notion of contraction of an edge, see \cite[p.25]{Schrijver}:
Given a multi-graph $H$ and an edge $e=vw$,
the contraction of $e$ results in the multi-graph $H/(vw)$ obtained from $H$
by deleting $e$ and its parallel copies and identifying the nodes $v$ and $w$.
(Thus every edge of $H$ except for $vw$ and its parallel copies
is present in $H/(vw)$; we disallow loops in $H/(vw)$.)

For a graph $H$ and a set of nodes $S\subseteq V(H)$,
$\delta_H(S)$ denotes the set of edges that have one end node in
$S$ and one end node in $V(H)-S$;
moreover, $H-S$ denotes $H[V(H)-S]$, the subgraph of $H$ induced by $V(H)-S$.
For a graph $H$ and a set of edges $F\subseteq E(H)$,
$H-F$ denotes the graph $(V(H),~E(H)-F)$.
We use relaxed notation for singleton sets, e.g.,
we use $\delta_H(v)$ instead of $\delta_H(\{v\})$,
we use $H-v$ instead of $H-\{v\}$, and
we use $H-e$ instead of $H-\{e\}$.

We denote the cost of an edge $e$ of $G$ by $\cost(e)$.
For a set of edges $F\subseteq E(G)$, $\cost(F):=\sum_{e\in F}\cost(e)$,
and for a subgraph $G'$ of $G$, $\cost(G'):=\sum_{e\in E(G')}\cost(e)$.

For ease of exposition, we often denote an instance $G,M$ by $G$;
then, we do not have explicit notation for the edge~costs of the instance,
but the edge~costs are given implicitly by $\cost:E(G)\rightarrow\{0,1\}$,
and $M$ is given implicitly by $\{e\in{E(G)}:\cost(e)=0\}$.
Also, we may not distinguish between a subgraph and its node~set;
for example, given a subgraph $H$ that contains nodes $v_1,v_2,v_3,v_4,\dots$
we may say that $\{v_1,v_2,v_3\}$ is contained in $H$.

\subsection{2EC, 2NC, bridges and \DTWO}

A multi-graph $H$ is called $k$-edge connected if $|V(H)|\ge2$ and for
every $F\subseteq E(H)$ of size $<k$, $H-F$ is connected.
Thus, $H$ is 2-edge connected if it has $\ge2$ nodes and the deletion
of any one edge results in a connected graph.
A multi-graph $H$ is called $k$-node connected if $|V(H)|>k$ and for
every $S\subseteq V(H)$ of size $<k$, $H-S$ is connected.
We use the abbreviations \textit{2EC} for ``2-edge connected," and
\textit{2NC} for ``2-node connected."

We assume w.l.o.g.\ that the input $G$ is 2-edge~connected.
Moreover, we assume w.l.o.g.\ that
there are $\leq 2$ copies of each edge (in any multi-graph that we consider);
this is justified since an edge-minimal 2-ECSS cannot have
three or more copies of any edge (see Proposition~\ref{propo:2ecdiscard} below).

For any instance $H$, let $\opt(H)$ denote the minimum cost of a
2-ECSS of $H$.  When there is no danger of ambiguity, we use $\opt$
rather than $\opt(H)$.

By a \textit{bridge} we mean a cut~edge,
i.e., an edge of a connected (sub)graph whose removal results in two
connected~components, and by a \textit{cut~node} we mean a node of a
connected (sub)graph whose deletion results in
two or more connected~components.
We call a bridge of cost~zero a \textit{zero-bridge} and
we call a bridge of cost~one a \textit{unit-bridge}.

By a \textit{2ec-block} we mean a
maximal connected subgraph with two or more nodes that has no bridges.
(Observe that each 2ec-block of a graph $H$ corresponds to a
connected~component of order $\ge2$ of the graph obtained from $H$
by deleting all bridges.)
We call a 2ec-block \textit{pendant} if it is incident to
exactly one bridge.

The next result characterizes edges that are not essential for 2-edge~connectivity.

\begin{proposition} \label{propo:2ecdiscard}
Let $H$ be a 2EC graph and let $e=vw$ be an edge of $H$.
If $H-e$ has two edge-disjoint $v,w$~paths, then $H-e$ is 2EC.
\end{proposition}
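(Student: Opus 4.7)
The plan is to verify 2-edge-connectedness of $H-e$ directly from the definition: we have $|V(H-e)|=|V(H)|\geq 2$ since $H$ is 2EC, so it suffices to show that for every edge $f$ of $H-e$, the graph $(H-e)-f = H-\{e,f\}$ is connected. I would fix such an $f$ and argue in two steps: first connect $v$ to $w$ in $H-\{e,f\}$, and then connect every other node to $v$.

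Let $P_1,P_2$ be the two edge-disjoint $v,w$-paths in $H-e$ guaranteed by hypothesis. Since they are edge-disjoint, $f$ belongs to at most one of them; let $P$ denote whichever of $P_1,P_2$ avoids $f$. Then $P$ avoids both $e$ and $f$, so $P$ lies entirely in $H-\{e,f\}$, putting $v$ and $w$ in the same component. Now consider an arbitrary node $u \in V(H)$. Since $H$ is 2EC, $H-f$ is connected, so there is a $u,v$-path $Q$ in $H-f$. If $Q$ does not traverse $e$, then $Q \subseteq H-\{e,f\}$; otherwise, replace the single occurrence of $e$ in $Q$ by the path $P$ (traversed in whichever direction between $v$ and $w$ matches the orientation of $e$ inside $Q$). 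The resulting $u,v$-walk lies in $H-\{e,f\}$, so $u$ is connected to $v$, and the claim follows.

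There is no real obstacle beyond spotting the single key observation: edge-disjointness of $P_1$ and $P_2$ forces at least one of them to avoid $f$, and this is exactly what permits the rerouting around $e$. I would therefore keep the write-up brief, giving the splicing argument explicitly but letting the trivial verification that $|V(H-e)|\geq 2$ pass without comment.
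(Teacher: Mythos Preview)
Your proof is correct. The paper actually states this proposition without proof (it is presented as a basic preliminary result that characterizes edges not essential for 2-edge connectivity), so there is no argument in the paper to compare against. Your direct verification from the definition---fixing an arbitrary edge $f$ of $H-e$, using edge-disjointness of $P_1,P_2$ to retain a $v,w$-path avoiding both $e$ and $f$, and then rerouting any $u,v$-path of $H-f$ around $e$ via that surviving path---is exactly the kind of short argument one would expect here, and it goes through without issue. One small remark: since $v$ is an endpoint of the path $Q$, the edge $e=vw$ (if present in $Q$) must be its final edge, so the splicing is particularly clean; you could mention this to make the ``replace the single occurrence of $e$'' step fully transparent.
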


By a \textit{2-edge~cover} (of $G$) we mean
a set of edges $F$ of $G$ such that
each node $v$ is incident to at least two edges of $F$
(i.e., $F\subseteq E(G): |\delta_F(v)|\ge2, \forall v\in{V(G)}$).
By $\DTWO(G)$ we mean any minimum-cost 2-edge~cover of $G$
($G$ may have several minimum-cost 2-edge~covers, and $\DTWO(G)$
may refer to any one of them);
we use $\dtwocost(G)$ to denote the cost of $\DTWO(G)$;
when there is no danger of ambiguity, we use \DTWO\ rather than $\DTWO(G)$,
and we use $\dtwocost$ rather than $\dtwocost(G)$.
Note that \DTWO\ may have several connected~components, and
each may have one or more bridges;
moreover, if a connected~component of \DTWO\ has a bridge,
then it has two or more pendant 2ec-blocks.

The next result follows from Theorem~34.15 in \cite[Chapter~34]{Schrijver}.

\begin{proposition} \label{thm:computeD2}
There is a polynomial-time algorithm for computing \DTWO.
\end{proposition}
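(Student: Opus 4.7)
The plan is to reduce the computation of a minimum-cost 2-edge cover of $G$ to a problem that is classically known to be solvable in polynomial time, namely maximum-weight $b$-matching in a multigraph. First, I would observe that a 2-edge cover $F\subseteq E(G)$ is exactly an $f$-edge-cover with $f(v)=2$ for every $v\in V(G)$. Since $G$ is assumed to be 2-edge connected, each vertex has degree at least two, so feasibility is immediate.

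Second, I would invoke the standard duality between $f$-edge-covers and $b$-matchings on the same edge set. Letting $d(v)$ denote the degree of $v$ in $G$, a set $F\subseteq E(G)$ is an $f$-edge-cover if and only if its complement $F':=E(G)\setminus F$ satisfies $|\delta_{F'}(v)|\le d(v)-2$ for every $v$, i.e., $F'$ is a $b$-matching with $b(v)=d(v)-2$. Moreover, $\cost(F)=\cost(E(G))-\cost(F')$, so minimizing $\cost(F)$ over 2-edge covers is equivalent to maximizing $\cost(F')$ over $b$-matchings in $G$ with the same edge weights; afterward one returns $F=E(G)\setminus F'$.

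Finally, I would apply a polynomial-time maximum-weight $b$-matching algorithm in a multigraph, such as the extension of Edmonds' blossom algorithm to $b$-matching due to Pulleyblank, or one of its more efficient refinements; see Chapters~32--34 of \cite{Schrijver} for a comprehensive treatment. These algorithms run in time polynomial in $|V(G)|$ and $|E(G)|$ (crucially not in the values $b(v)$, which could otherwise be as large as $n-3$), and therefore produce the desired \DTWO\ in polynomial time.

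There is no real conceptual obstacle -- the argument is a textbook application of matching theory. The only technical care needed is that MAP instances allow parallel edges, but the $b$-matching framework is defined on multigraphs and handles this without any modification, and the zero/one structure of the edge costs plays no role beyond nonnegativity.
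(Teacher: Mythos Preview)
Your argument is correct and is essentially the same as the paper's: the paper simply cites Theorem~34.15 in \cite[Chapter~34]{Schrijver} (on minimum-weight $b$-edge covers) without further elaboration, and the complementation reduction to maximum-weight $b$-matching that you spell out is precisely the content underlying that theorem. In short, you have unpacked the citation rather than taken a different route.
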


The next result states the key lower~bound used by our approximation algorithm.

\begin{fact} \label{fact:dtwolb}
Let $H$ be any 2EC graph. Then we have $opt(H) \geq \dtwocost(H)$.
\end{fact}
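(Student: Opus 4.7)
The plan is to observe that every 2-edge connected spanning subgraph is, in particular, a 2-edge cover, and then invoke minimality. Concretely, I would fix an optimal 2-ECSS $H^*=(V(H), F^*)$ of $H$ with $\cost(F^*)=\opt(H)$, and show that $F^*$ satisfies the definition of a 2-edge cover.

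The key step is the degree bound. Since $H^*$ is 2-edge connected, by definition $|V(H^*)|\geq 2$ and $H^*-e$ is connected for every $e\in F^*$. I would argue that this forces $|\delta_{F^*}(v)|\geq 2$ for every $v\in V(H)$: if some vertex $v$ had $|\delta_{F^*}(v)|\leq 1$, then either $v$ is isolated in $H^*$ (contradicting connectivity, as $|V(H^*)|\geq 2$), or $v$ is incident to a unique edge $e\in F^*$, in which case $H^*-e$ disconnects $v$ from the rest, contradicting 2-edge connectivity. Hence $F^*$ is a 2-edge cover of $H$.

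From there the conclusion is immediate: $F^*$ being a feasible 2-edge cover means $\cost(F^*)\geq \dtwocost(H)$ by definition of $\DTWO(H)$ as a minimum-cost 2-edge cover. Combining with $\cost(F^*)=\opt(H)$ gives $\opt(H)\geq \dtwocost(H)$.

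There is no real obstacle here; the only thing to be slightly careful about is the degenerate boundary case $|V(H)|=1$, which is ruled out by the 2EC assumption $|V(H)|\geq 2$ built into the definition, and the requirement that we consider spanning subgraphs so that $V(H^*)=V(H)$ and the 2-edge cover condition is checked on every node of the input. Thus the proof amounts to one definitional implication plus the LP-free observation that 2-edge covers form a relaxation of 2-ECSS.
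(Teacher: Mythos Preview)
Your argument is correct and is exactly the intended one: any 2-ECSS has minimum degree at least two, hence is a feasible 2-edge cover, so its cost is bounded below by $\dtwocost(H)$. The paper in fact states this result as a basic fact without an explicit proof, so your write-up supplies precisely the missing (and standard) justification.
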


By a \textit{bridgeless 2-edge~cover} (of $G$) we mean a 2-edge~cover
(of $G$) that has no bridges; note that we have no requirements on
the cost of a bridgeless 2-edge~cover.  We mention that the problem
of computing a bridgeless 2-edge~cover of minimum cost is NP-hard
(there is a reduction from TAP), and there is no approximation
algorithm known for the case of nonnegative costs.

\subsection{Ear decompositions}

An \textit{ear decomposition} of a graph 
is a partition of the edge set into paths or cycles,
$P_0,P_1,\dots,P_k$, such that
$P_0$ is the trivial path with one node,
and each $P_i$ ($1\leq i\leq k$) is either
(1)~a path that has both end nodes in
$V_{i-1} = V(P_0) \cup V(P_1) \cup \ldots \cup V(P_{i-1})$
but has no internal nodes in $V_{i-1}$, or
(2)~a cycle that has exactly one node in $V_{i-1}$.
Each of $P_1,\ldots,P_k$ is called an \textit{ear};
note that $P_0$ is not regarded as an ear.
We call $P_i, i\in\{1,\dots,k\},$ an \textit{open ear} if it is a path,
and we call it a \textit{closed ear} if it is a cycle.
An {\it open} ear decomposition $P_0,P_1,\ldots,P_k$
is one such that all the ears $P_2,\ldots,P_k$ are open.
(The ear $P_1$ is always closed.)

\begin{proposition}[Whitney \cite{W32}]
\label{propo:whitney}
\begin{itemize}
\item[(i)]
A graph is 2EC iff it has an ear decomposition.
\item[(ii)]
A graph is 2NC iff it has an open ear decomposition.
\end{itemize}
\end{proposition}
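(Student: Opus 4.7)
The plan is to prove both directions of (i) and (ii) using parallel inductive arguments, since the open-ear-decomposition version is essentially a refinement of the ordinary one.

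For the ``if'' directions, I would proceed by induction on the number of ears $k$, showing that each intermediate graph $G_i := P_0 \cup P_1 \cup \dots \cup P_i$ is 2EC (respectively 2NC). The base case $G_1$ is a cycle through the single node of $P_0$, which is 2EC (and 2NC whenever it has at least three nodes, which is the case in any simple-graph setting). For the inductive step of (i): if $P_i$ is an open ear with endpoints $u,v \in V_{i-1}$ and internal nodes $w$, then each such $w$ has two edge-disjoint paths to $V_{i-1}$ along the two arms of $P_i$; using the 2EC of $G_{i-1}$, Proposition~\ref{propo:2ecdiscard} (or a direct argument) yields that $G_i$ is 2EC. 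The closed-ear case is analogous via the attachment vertex. For the inductive step of (ii): openness of $P_i$ guarantees $u \neq v$, so deleting an internal vertex of $P_i$ leaves two arms, each still joined to $G_{i-1}$; deleting any vertex $x$ of $V_{i-1}$ (possibly $u$ or $v$) leaves $G_{i-1}-x$ connected and the ear $P_i$ still attached via an endpoint different from $x$.

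For the ``only if'' direction of (i), I would build the ear decomposition greedily. Start with any cycle $C$ in $G$ (which exists since $G$ is 2EC), set $P_0$ to be a single vertex on $C$ and $P_1 := C$. Given a current 2EC subgraph $H \subsetneq G$, I claim we can always add another ear. If $V(H) = V(G)$, any edge $e \in E(G) \setminus E(H)$ is a one-edge open ear. Otherwise, since $G$ is connected, there exists an edge $e = xy$ with $x \in V(H)$ and $y \notin V(H)$; since $G$ is 2EC, $G - e$ remains connected, so $y$ has a path back to $V(H)$ in $G - e$, and the initial segment of this path until it first re-enters $V(H)$, prepended with $e$, is a valid open ear $P_i$. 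Iterating until $H = G$ produces the decomposition.

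For the ``only if'' direction of (ii), the same greedy scheme works but each added ear must be open; this is the step I expect to be the main obstacle. The key refinement: when picking an edge $e = xy$ with $x \in V(H)$ and $y \notin V(H)$, I would find the return path in $G - x$ (not merely in $G - e$); since $G$ is 2NC, $G - x$ is connected, so there is a path from $y$ to some $z \in V(H) \setminus \{x\}$, and cutting it at the first re-entry into $V(H)$ yields an ear whose two endpoints are $x$ and $z$, hence open. In the case $V(H) = V(G)$ but $E(H) \neq E(G)$, any missing edge is still a trivial open ear. The 2NC of $G$ is used precisely to guarantee $z \neq x$, and this is what forces openness of every ear beyond $P_1$.
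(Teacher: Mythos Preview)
The paper does not supply its own proof of this proposition; it is stated with attribution to Whitney~\cite{W32} and used as a black box. Your argument is the standard textbook proof and is correct in both directions for both parts.

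One minor remark, since the paper explicitly works with multigraphs and you flagged the issue yourself: a 2-cycle (two parallel edges) on its own is formally an open ear decomposition (vacuously, as there are no ears $P_2,\dots,P_k$ to check), yet the resulting 2-node graph is not 2NC under the paper's convention $|V(H)|>2$. This is a boundary artifact of the definitions, not a gap in your reasoning; in the ``only if'' direction of (ii) you can simply start $P_1$ as a cycle through three distinct vertices, which any 2NC graph contains.
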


\subsection{Redundant 4-cycles}

By a \textit{redundant 4-cycle} we mean a cycle $C$ consisting of
four nodes and four edges of $G$ such that
$V(C)\not=V(G)$,
two of the (non-adjacent) edges of $C$ have cost zero, and
two of the nonadjacent nodes of $C$
have degree two in $G$.
For example, a 4-cycle $C=u_1,u_2,u_3,u_4,u_1$ with zero-edges
${u_1u_2}$, ${u_3u_4}$ and unit-edges ${u_2u_3}$, ${u_4u_1}$ of a 2EC
graph $G$ is a redundant 4-cycle provided all edges between $V(C)$ and
$V(G)-V(C)$ are incident to either $u_1$ or $u_3$.

Observe that every 2-edge~cover of $G$ must contain the edges of
every redundant 4-cycle.
Also, observe that two different redundant 4-cycles are disjoint
(i.e., any node is contained in at most one redundant 4-cycle).

\subsection{Bad-pairs and bp-components}

For any MAP~instance (assumed to be 2EC), we define
a \textit{bad-pair} to be a pair of nodes $\{v,w\}$ of the
graph such that the edge ${vw}$ is present and has zero cost, and
moreover, the deletion of both nodes $v$~and~$w$ results in a
disconnected graph.
Throughout, unless mentioned otherwise,
the term bad-pair refers to a bad-pair of the graph $G$
\big(in Section~\ref{s:bad-pairs},
we discuss bad-pairs of some minors (i.e., subgraphs) of~$G$\big).

{
By a \textit{bp-component} we mean one of the connected~components
resulting from the deletion of a bad-pair from $G$;
see Figure~\ref{f:sec2:badpairs}.
The set of all bp-components (of all bad-pairs) is a cross-free family,
see \cite[Chap.~13.4]{Schrijver}.
Consider two bad-pairs $\{v,w\}$ and $\{y,z\}$.
One of the bp-components of $\{v,w\}$ contains $\{y,z\}$; call it
$C_1$.  Then it can be seen that all-but-one of the bp-components
of $\{y,z\}$ are contained in $C_1$; the one remaining bp-component
of $\{y,z\}$, call it $C_1'$, contains $\{v,w\}$ and all of the
bp-components of $\{v,w\}$ except $C_1$. Thus, the union of the two
bp-components $C_1$ and $C_1'$ contains $V(G)$.
}

The following fact is analogous to the fact that
every tree on $\ge2$ nodes contains a node $v$
such that all-but-one of the neighbors of $v$ are leaves.
(To see this, consider a longest path $P$ of a tree, and take
$v$ to be the second node of $P$.)

\begin{fact} \label{fact:pickbp}
Suppose that $G$ has at least one bad-pair.
Then there exists a bad-pair such that
all-but-one of its bp-components are free of bad-pairs.
\end{fact}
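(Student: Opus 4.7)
The plan is to replace the ``longest path'' argument of the tree analogy by a minimality argument built on the cross-free structure of bp-components already stated just before the fact. I would first form the family $\mathcal{S}$ of all bp-components (taken over all bad-pairs of $G$) that contain at least one bad-pair (viewed as a pair of nodes). If $\mathcal{S}=\emptyset$, then every bad-pair trivially works, so assume $\mathcal{S}\neq\emptyset$ and pick a set $C^{*}\in\mathcal{S}$ that is minimal under inclusion. Write $B^{*}$ for a bad-pair of which $C^{*}$ is a bp-component, and $B^{\circ}\subseteq C^{*}$ for a bad-pair witnessing $C^{*}\in\mathcal{S}$. Note that $B^{\circ}\neq B^{*}$, since $C^{*}\cap B^{*}=\emptyset$ while $B^{\circ}\subseteq C^{*}$.

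The next (and key) step is to apply the cross-free statement to the pair $(B^{*},B^{\circ})$. Because $C^{*}$ is the unique bp-component of $B^{*}$ that contains $B^{\circ}$, that statement yields: all-but-one of the bp-components of $B^{\circ}$ are subsets of $C^{*}$, and the one remaining bp-component of $B^{\circ}$, call it $D^{\circ}$, contains $B^{*}$ together with every other bp-component of $B^{*}$. In particular, every bp-component of $B^{\circ}$ other than $D^{\circ}$ is contained in $C^{*}\setminus B^{\circ}$, which is a proper subset of $C^{*}$.

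To finish, I would invoke the minimality of $C^{*}$: any bp-component strictly smaller than $C^{*}$ must lie outside $\mathcal{S}$ and therefore contain no bad-pair. Hence every bp-component of $B^{\circ}$ other than $D^{\circ}$ is free of bad-pairs, so $B^{\circ}$ is the desired bad-pair with $D^{\circ}$ as the one allowed non-bad-pair-free bp-component. I do not anticipate a serious obstacle: the whole argument is carried by the cross-free description just established, and the only mild care needed is keeping the two views of a bad-pair straight, namely as a pair of end nodes of a zero-edge and as a $2$-node cut whose removal disconnects $G$.
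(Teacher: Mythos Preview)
Your proof is correct and takes a genuinely different route from the paper's. The paper gives an explicitly algorithmic argument: starting from an arbitrary bad-pair, it repeatedly descends into a bp-component (other than the one with the most bad-pairs) that still contains a bad-pair, and shows via the cross-free structure that the total count of bad-pairs outside the ``main'' bp-component strictly drops at each step, forcing termination in $O(n)$ iterations. Your argument replaces this descent by a single minimality step: you pick an inclusion-minimal member $C^{*}$ of the family of bp-components that contain a bad-pair, and then the cross-free description applied to the pair $(B^{*},B^{\circ})$ immediately gives that every bp-component of $B^{\circ}$ other than $D^{\circ}$ lies properly inside $C^{*}$ and hence, by minimality, is bad-pair-free. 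What your approach buys is brevity and a cleaner use of the cross-free structure; what the paper's approach buys is an explicit iterative procedure with a transparent termination measure, which fits its emphasis on polynomial-time implementability (though of course your minimality argument is just as easily made effective by enumerating bp-components).
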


\authremark{
Let us sketch an algorithmic proof of Fact~\ref{fact:pickbp}.
For any subgraph $G'$ of $G$,
let $\nbp(G')$ denote the number of bad-pairs (of $G$) that are contained in $V(G')$.
For any bad-pair $\{v,w\}$ and its bp-components $C_1,\dots,C_k$,
let us assume that the indexing is in non-increasing order of
$\nbp(C_i)$, thus, we have $\nbp(C_1)\ge\nbp(C_2)\ge\dots\ge\nbp(C_k)$.

We start by computing all the bad-pairs of $G$.
Then we choose any bad-pair $\{v_{1},w_{1}\}$ and compute its bp-components
$C^{(1)}_1,\dots,C^{(1)}_{k_1}$.
If $\nbp(C^{(1)}_2)=0$, then we are done; our algorithm outputs
$\{v_{1},w_{1}\}$.  Otherwise, we pick $C^{(1)}_2$ and choose any bad
pair $\{v_{2},w_{2}\}$ contained in $V(C^{(1)}_2)$.  We compute the
bp-components $C^{(2)}_1,\dots,C^{(2)}_{k_2}$ of $\{v_{2},w_{2}\}$ (in $G$).
As stated above, $C^{(2)}_1$ contains $C^{(1)}_1$ and $V(C^{(2)}_1)$ contains
$\{v_{1},w_{1}\}$;
moreover, $\nbp(C^{(1)}_2) > \nbp(C^{(2)}_2)+\dots+\nbp(C^{(2)}_{k_2})$,
since the bad-pair $\{v_{2},w_{2}\}$ is contained in $V(C^{(1)}_2)$ but $\{v_{2},w_{2}\}$ is disjoint
from each of its own bp-components.
We iterate these steps
(i.e., if $\nbp(C^{(2)}_2)\not=0$, then we pick any bad-pair
$\{v_{3},w_{3}\}$ contained in $V(C^{(2)}_2)$, \dots), until we find a bad-pair
$\{v_{\ell},w_{\ell}\}$ such that $\nbp(C^{(\ell)}_2)=0$; then we are done.
Clearly, this algorithm is correct, and it terminates in $O(n)$ iterations.
}

{
\begin{figure}[hbt]
    \begin{subfigure}{0.3\textwidth}
        \centering
        \begin{tikzpicture}[scale=0.75]
            \begin{scope}[every node/.style={circle, fill=black, draw, inner sep=0pt,
            minimum size = 0.15cm
            }]
                
                \node[label={[label distance=2]90:$y$}] (a2) at (0.7071,0.7071) {};
                \node[] (a3) at (0,1) {};
                \node[] (a4) at (-0.7071,0.7071) {};
                \node[] (a5) at (-1,0) {};
                \node[label={[label distance=2]180:$v$}] (a6) at (-0.7071,-0.7071) {};
                \node[label={[label distance=2]330:$w$}] (a7) at (0,-1) {};
                \node[label={[label distance=2]160:$z$}] (a8) at (0.7071,-0.7071) {};
                
                \node[] (b1) at (-1.4,-1.25) {};
                \node[] (b2) at (-0.8,-2.2) {};
                \node[] (b3) at (0.2,-1.75) {};
                
                \node[] (c1) at (1.75,0.25) {};
                \node[] (c2) at (1.4571,0.9571) {};
                \node[] (c3) at (2,1.75) {};
                
                \node[] (d1) at (3,0.25) {};
                \node[] (d2) at (3,1.25) {};
                \node[] (d3) at (4,0.75) {};
                
                \node[] (e1) at (2,-1) {};
                \node[] (e2) at (2.75,-1.75) {};
                \node[] (e3) at (2,-2.5) {};
                \node[] (e4) at (1.25,-1.75) {};
                
            \end{scope}
    
            \begin{scope}[every edge/.style={draw=black}]
                \path[thick, dashed] (a8) edge node {} (a2);
                \path (a2) edge node {} (a3);
                \path (a3) edge node {} (a4);
                \path (a4) edge node {} (a5);
                \path (a5) edge node {} (a6);
                \path[thick, dashed] (a6) edge node {} (a7);
                \path (a7) edge node {} (a8);
                
                \path (a6) edge node {} (b1);
                \path (b1) edge node {} (b2);
                \path (b2) edge node {} (b3);
                \path (b3) edge node {} (a7);
                
                \path (c1) edge node {} (c2);
                \path (c2) edge node {} (c3);
                \path[thick, dashed] (c1) edge node {} (c3);
                \path (a8) edge node {} (c1);
                \path (a2) edge node {} (c2);
                \path (a2) edge node {} (c3);
                
                \path (c1) edge node {} (d1);
                \path (c3) edge node {} (d2);
                \path (d1) edge node {} (d2);
                \path (d2) edge node {} (d3);
                \path (d3) edge node {} (d1);
                
                \path (a2) edge node {} (e1);
                \path (a8) edge node {} (e4);
                \path (e1) edge node {} (e2);
                \path (e2) edge node {} (e3);
                \path (e3) edge node {} (e4);
                \path (e4) edge node {} (e1);
            \end{scope}
    
            \begin{scope}[every node/.style={draw=none,rectangle}]
            \end{scope}
        \end{tikzpicture}
        \caption{\centering
		A graph with bad-pairs (indicated by dashed lines).}
        \label{sec2sub-a}
    \end{subfigure}
    \hspace*{\fill}
    \begin{subfigure}{0.3\textwidth}
        \centering
        \begin{tikzpicture}[scale=0.75]
            \begin{scope}[every node/.style={circle, fill=black, draw, inner sep=0pt,
            minimum size = 0.15cm
            }]
                
                \node[label={[label distance=2]90:$y$}] (a2) at (0.7071,0.7071) {};
                \node[] (a3) at (0,1) {};
                \node[] (a4) at (-0.7071,0.7071) {};
                \node[] (a5) at (-1,0) {};
                \node[label={[label distance=2]160:$z$}] (a8) at (0.7071,-0.7071) {};
                
                \node[] (b1) at (-1.4,-1.25) {};
                \node[] (b2) at (-0.8,-2.2) {};
                \node[] (b3) at (0.2,-1.75) {};
                
                \node[] (c1) at (1.75,0.25) {};
                \node[] (c2) at (1.4571,0.9571) {};
                \node[] (c3) at (2,1.75) {};
                
                \node[] (d1) at (3,0.25) {};
                \node[] (d2) at (3,1.25) {};
                \node[] (d3) at (4,0.75) {};
                
                \node[] (e1) at (2,-1) {};
                \node[] (e2) at (2.75,-1.75) {};
                \node[] (e3) at (2,-2.5) {};
                \node[] (e4) at (1.25,-1.75) {};
                
            \end{scope}
    
            \begin{scope}[every edge/.style={draw=black}]
                \path[thick, dashed] (a8) edge node {} (a2);
                \path (a2) edge node {} (a3);
                \path (a3) edge node {} (a4);
                \path (a4) edge node {} (a5);
                
                \path (b1) edge node {} (b2);
                \path (b2) edge node {} (b3);
                
                \path (c1) edge node {} (c2);
                \path (c2) edge node {} (c3);
                \path[thick, dashed] (c1) edge node {} (c3);
                \path (a8) edge node {} (c1);
                \path (a2) edge node {} (c2);
                \path (a2) edge node {} (c3);
                
                \path (c1) edge node {} (d1);
                \path (c3) edge node {} (d2);
                \path (d1) edge node {} (d2);
                \path (d2) edge node {} (d3);
                \path (d3) edge node {} (d1);
                
                \path (a2) edge node {} (e1);
                \path (a8) edge node {} (e4);
                \path (e1) edge node {} (e2);
                \path (e2) edge node {} (e3);
                \path (e3) edge node {} (e4);
                \path (e4) edge node {} (e1);
            \end{scope}
    
            \begin{scope}[every node/.style={draw=none,rectangle}]
            \end{scope}
        \end{tikzpicture}
        \caption{\centering
		bp-components for the bad-pair $\{v,w\}$.}
        \label{sec2sub-b}
    \end{subfigure}
    \hspace*{\fill}
    \begin{subfigure}{0.3\textwidth}
        \centering
        \begin{tikzpicture}[scale=0.75]
            \begin{scope}[every node/.style={circle, fill=black, draw, inner sep=0pt,
            minimum size = 0.15cm
            }]
                
                \node[] (a3) at (0,1) {};
                \node[] (a4) at (-0.7071,0.7071) {};
                \node[] (a5) at (-1,0) {};
                \node[label={[label distance=2]180:$v$}] (a6) at (-0.7071,-0.7071) {};
                \node[label={[label distance=2]330:$w$}] (a7) at (0,-1) {};
                
                \node[] (b1) at (-1.4,-1.25) {};
                \node[] (b2) at (-0.8,-2.2) {};
                \node[] (b3) at (0.2,-1.75) {};
                
                \node[] (c1) at (1.75,0.25) {};
                \node[] (c2) at (1.4571,0.9571) {};
                \node[] (c3) at (2,1.75) {};
                
                \node[] (d1) at (3,0.25) {};
                \node[] (d2) at (3,1.25) {};
                \node[] (d3) at (4,0.75) {};
                
                \node[] (e1) at (2,-1) {};
                \node[] (e2) at (2.75,-1.75) {};
                \node[] (e3) at (2,-2.5) {};
                \node[] (e4) at (1.25,-1.75) {};
                
            \end{scope}
    
            \begin{scope}[every edge/.style={draw=black}]
                \path (a3) edge node {} (a4);
                \path (a4) edge node {} (a5);
                \path (a5) edge node {} (a6);
                \path[thick, dashed] (a6) edge node {} (a7);
                
                \path (a6) edge node {} (b1);
                \path (b1) edge node {} (b2);
                \path (b2) edge node {} (b3);
                \path (b3) edge node {} (a7);
                
                \path (c1) edge node {} (c2);
                \path (c2) edge node {} (c3);
                \path[thick, dashed] (c1) edge node {} (c3);
                
                \path (c1) edge node {} (d1);
                \path (c3) edge node {} (d2);
                \path (d1) edge node {} (d2);
                \path (d2) edge node {} (d3);
                \path (d3) edge node {} (d1);
                
                \path (e1) edge node {} (e2);
                \path (e2) edge node {} (e3);
                \path (e3) edge node {} (e4);
                \path (e4) edge node {} (e1);
            \end{scope}
    
            \begin{scope}[every node/.style={draw=none,rectangle}]
            \end{scope}
        \end{tikzpicture}
        \caption{\centering
		bp-components for the bad-pair $\{y,z\}$.}
    \end{subfigure}
    \caption{Illustration of bad-pairs and bp-components.}
    \label{f:sec2:badpairs}
\end{figure}
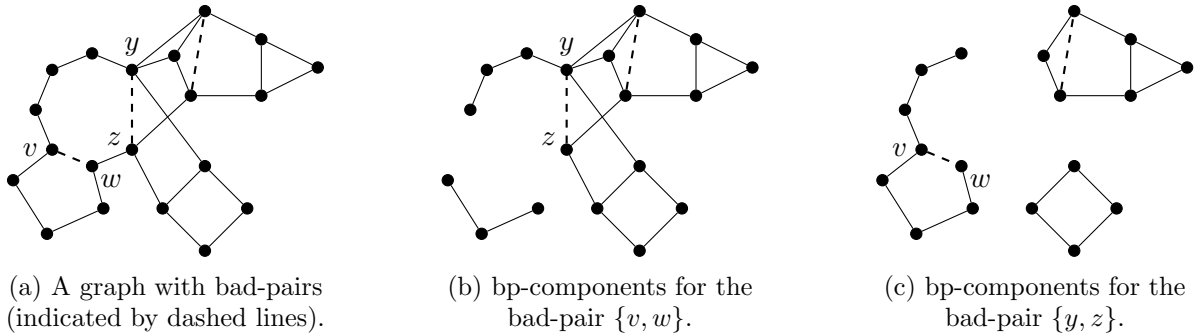
}

\subsection{Polynomial-time computations}

All of the computations in this paper can be easily implemented
in polynomial~time, see \cite{Schrijver}.
We state this explicitly in all relevant results (e.g.,
Proposition~\ref{thm:computeD2}, Theorem~\ref{thm:approxbydtwo}),
but we do not elaborate on this elsewhere.

\section{ \label{s:algo} Outline of the algorithm}

This section has an outline of our algorithm.
We start by defining a well structured MAP instance.

\begin{defn}
An instance of MAP is called \emph{well-structured} if it has
 \begin{itemize}\itemsep0pt
 \item[-] no $\{0,1\}$-edge-pairs,
 \item[-] no redundant 4-cycles,
 \item[-] no cut nodes, and
 \item[-] no bad-pairs.
 \end{itemize}
\end{defn}

Section~\ref{s:pre-proc1} explains
how to ``decompose'' any instance of MAP $G$ into
a collection of well-structured MAP instances $G_1,\dots,G_k$
such that a 2-ECSS $H$ of $G$ can be obtained
by computing 2-ECSSes $H_1,\dots,H_k$ of $G_1,\dots,G_k$,
and moreover,
the approximation guarantee is preserved, i.e.,
the approximation guarantee on $G$ 
is $\leq$ the maximum of the approximation guarantees on $G_1,\dots,G_k$
(in other words, $\ds {\frac{\cost(H)} {\opt(G)}} \leq
	\max_{i=1,\dots,k} \Big\{ {\frac{\cost(H_i)} {\opt(G_i)}} \Big\}$).

\medskip
\noi
\fbox{ \begin{minipage}{\textwidth}

\textbf{Algorithm (outline)}:
\begin{itemize}
\item[(0)] apply the pre-processing steps (reductions) from
	Section~\ref{s:pre-proc1} to obtain a collection of
	well-structured MAP instances {$G_1,\dots,G_k$};

\item[] for each $G_i$ ($i=1,\dots,k$), apply steps (1),(2),(3):

\item[(1)] compute \DTWO($G_i$) in polynomial time
	(w.l.o.g.\ assume \DTWO($G_i$) contains all zero-edges of~$G_i$);

\item[(2)] then apply ``bridge~covering" from Section~\ref{s:path-thicken}
	to \DTWO($G_i$) to obtain a bridgeless 2-edge~cover $\wt{H}_i$
	of $G_i$;

\item[(3)] then apply the ``gluing~step" from Section~\ref{s:algo-last}
	to $\wt{H}_i$ to obtain a 2-ECSS $H_i$ of $G_i$;

\item[(4)] finally, output a 2-ECSS $H$ of $G$ from the union of
	$H_1,\dots,H_k$ by undoing the transformations applied in step~(0).
\end{itemize}
\end{minipage}
}
\medskip

The pre-processing of step~(0) consists of four reductions:
\begin{itemize}
\item[(pp1)]~handle $\{0,1\}$-edge-pairs,
\item[(pp2)]~handle redundant~4-cycles,
\item[(pp3)]~handle cut~nodes, and
\item[(pp4)]~handle bad-pairs;
\end{itemize}
these reductions are discussed in Sections~\ref{s:pre-proc1-1}--\ref{s:bad-pairs}.
Step~(0) applies~(pp1) to obtain a collection of MAP~instances;
after that, there is no further need to apply~(pp1), see Fact~\ref{fact:allowed-contracts}.
Then, we iterate:
while the collection of MAP~instances
has one or more of the latter three ``obstructions''
(redundant~4-cycles, cut~nodes, bad-pairs),
we apply~(pp2),~(pp3),~(pp4) in sequence.
After $\le n$ iterations, we have a collection of
well-structured MAP~instances $G_i$.
Then, we compute an approximately~optimal 2-ECSS $H_i$ for each $G_i$ using the
algorithm of Theorem~\ref{thm:approxbydtwo} (below), and finally, we
use the $H_i$ subgraphs to construct an approximately~optimal 2-ECSS of $G$.

Our $\frac74$ approximation algorithm for MAP follows from 
the following key theorem, and the fact that the algorithm
runs in polynomial time.

\begin{theorem}
\label{thm:approxbydtwo}
Given a well-structured instance of MAP $G'$,
  there is a polynomial-time algorithm that
  obtains a 2-ECSS $H'$ from $\DTWO(G')$
  (by adding edges and deleting edges)
  such that $\cost(H') \leq \frac{7}{4} \dtwocost(G')$.
\end{theorem}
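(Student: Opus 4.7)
The plan is to follow the two-stage algorithm (bridge covering from Section~\ref{s:path-thicken}; gluing from Section~\ref{s:algo-last}) and bound each stage's cost by amortizing against the nodes and pendant 2ec-blocks of $D:=\DTWO(G')$, in the spirit of Vempala--Vetta. Assume w.l.o.g.\ that $D$ contains every zero-edge of $G'$, so $\cost(D)=\dtwocost(G')=:\dtwocost$, and by Proposition~\ref{propo:2ecdiscard} take $D$ to be edge-minimal as a 2-edge cover. Then each connected~component of $D$ that contains a bridge decomposes into 2ec-blocks joined by bridges, with at least two pendant 2ec-blocks per such component.

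First, for bridge covering, I would process each bridge $b=uv$ of $D$ by exploiting the 2-edge connectivity of $G'$ to find an ear $P$ of unit-edges in $G'-b$ whose endpoints lie in two distinct 2ec-blocks (either across $b$ or into another component of $D$), and then discard the edges rendered redundant by $P$ using Proposition~\ref{propo:2ecdiscard}. The well-structured hypotheses --- absence of cut nodes, bad-pairs, $\{0,1\}$-edge-pairs, and redundant 4-cycles --- precisely rule out the local configurations in which no such cheap ear exists or in which the deletion would fail to merge blocks. The result is a bridgeless 2-edge cover $\wt{H}$, and each added unit-edge minus each deleted edge is charged to the nearby pendant 2ec-blocks and internal degree-two nodes of $D$.

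Second, for the gluing stage, $\wt{H}$ is a disjoint union of 2EC subgraphs $B_1,\dots,B_t$. I would iteratively add a minimum-size open ear of $G'$ joining two distinct components, which is the Vempala--Vetta step. Each $B_i$ is endowed with a credit derived from its contribution to $\dtwocost$; each ear of length~$\ell$ is paid for by consuming credits from the two components it merges, so that the total cost of the gluing ears is at most a fixed fraction of $\dtwocost$. Combining the two amortized bounds yields
\[
\cost(H')\;\le\;\cost(D)+\bigl(\text{bridge-cover extra}\bigr)+\bigl(\text{gluing extra}\bigr)\;\le\;\tfrac{7}{4}\,\dtwocost,
\]
where the two ``extra'' terms share a total budget of at most $\tfrac34\dtwocost$.

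The main obstacle is the bridge-covering analysis: one must cover every bridge while leaving enough credit on the resulting 2EC components $B_i$ for the gluing step, and the tight case is driven by short pendant 2ec-blocks (e.g.\ length-4 cycles containing a single zero-edge that escape the redundant-4-cycle reduction because only one of their would-be degree-two nodes is actually of degree two in $G'$). Verifying that each well-structured hypothesis is exactly what blocks the worst configurations at the corresponding step --- while keeping the charging scheme consistent across the two stages so that credits paid out in gluing were genuinely reserved during bridge covering --- is the technical heart of the proof.
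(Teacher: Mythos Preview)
Your high-level plan matches the paper's two-stage structure (bridge covering followed by gluing, paid for by a credit scheme drawn from $\dtwocost$), but what you have written is an outline rather than a proof: you have not specified any concrete credit invariant, and you explicitly defer ``the technical heart of the proof'' to future work. Without fixing the exact credit amounts and showing they are preserved by every iteration, there is no bound on $\cost(H')$.

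Concretely, several ingredients are missing or inaccurate. First, the paper does not charge to ``internal degree-two nodes of $D$''; it assigns $0.75$ working credits to each unit-edge of $\DTWO$, and then tracks credits at the level of 2ec-blocks (each has $\ge1.5$), unit-bridges ($0.75$ each), and a distinguished root block. Second, before bridge covering can start, $\DTWO$ must be \emph{post-processed} so that every pendant 2ec-block incident to a zero-bridge has cost $\ge3$ (hence $\ge2.25$ credits); without this step the credit invariant fails in the very first iteration whenever the root is incident to a zero-bridge. Third, bridge covering is not ``find an ear and discard redundant edges'' in each iteration: in most iterations one adds a single ear and deletes nothing, paying via the credits of the connected components the ear traverses plus the prefix $Q$ inside $C_0$; only in one exceptional subcase (the paper's Case~3, second subcase) must one add a \emph{second} ear and sell a unit-bridge of $Q$ to recover $1.75$ credits. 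Your sketch does not separate these cases, and your suggested charging would not balance in the tight ones. Fourth, in the gluing step the hard case is when every cycle through $R$ in the contracted graph has length~$2$ and the neighbouring block $B$ has exactly $1.5$ credits; here one must \emph{delete} a unit-edge of $B$ (and possibly add a second ear through a further block $B'$) to make the books balance --- the absence of redundant 4-cycles is used precisely here. Your description (``each ear is paid for by consuming credits from the two components it merges'') does not cover this.

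In short, your plan is compatible with the paper's approach, but the proof requires stating the explicit credit invariant (parts (1)--(3) in Section~\ref{s:credit-invariant}), the $\DTWO$ post-processing, the full case analysis of Lemma~\ref{lem:creditQ} (including the double-ear augmentation), and the two cases of the gluing step with their edge deletions.
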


We use a credit~scheme to prove this theorem; the details are
presented in Sections~\ref{s:path-thicken} and~\ref{s:algo-last}.
The algorithm starts with $\DTWO=\DTWO(G')$ as the current subgraph;
we start by assigning $\frac74$ initial~credits to each unit-edge of
\DTWO; each such edge keeps one credit to pay for itself and the other
$\frac34$~credits are taken to be working~credits available to the
algorithm; the algorithm uses these working~credits to pay for the
augmenting edges ``bought" in steps~(2) or~(3) (see the outline); also,
the algorithm may ``sell" unit-edges of the current~subgraph
(i.e., such an edge is permanently discarded and is not contained in
the 2-ECSS output by the algorithm) and this supplies working~credits
to the algorithm (see Sections~\ref{s:path-thicken},~\ref{s:algo-last}).

In an ear-augmentation step, we may add either a single~ear
or a double~ear (i.e., a pair of ears);
see Section~\ref{s:path-thicken}
(double~ears may be added in Case~3, page~\pageref{page:bc-case3})
and Section~\ref{s:algo-last}
(double~ears may be added in Case~2, page~\pageref{page:gluing-case2}).
Although this is not directly relevant, we mention that double~ear
augmentations are essential in matching theory, see \cite[Ch.5.4]{LP09:book}.
As discussed above, in some of the ear-augmentation steps, we may
(permanently) delete some edges from the current subgraph; see
Section~\ref{s:path-thicken} (edges are deleted in double~ear
augmentations in Case~3, page~\pageref{page:bc-case3})
and Section~\ref{s:algo-last} (edges are deleted in both Cases~1,~2).

\authremark{
	The following examples show that when we relax the definition
	of a well-structured MAP~instance, then the inequality in
	Theorem~\ref{thm:approxbydtwo} could fail to hold.
	See Figure~\ref{f:sec3:thm6egs} for illustrations.
}

{
\begin{enumerate}
	\item[(a)]~$\{0,1\}$-edge-pairs (i.e., parallel edges of cost zero and
	one) are present.  Then ${\frac {\opt} {\dtwocost}} \approx 2$ is
	possible.  Our construction consists of a root 2ec-block $B_0$,
	say a 6-cycle of cost~6, and $\ell\gg1$ copies of the following
	gadget that are attached to $B_0$.  The gadget consists of a
	pair of nodes $v,w$ and two  incident edges:  a copy of edge
	${vw}$ of cost~zero, and a copy of edge ${vw}$ of cost~one.
	Moreover, we have an edge between $v$ and $B_0$ of cost~one,
	and an edge between $w$ and $B_0$ of cost~one.
	Observe that a (feasible) 2-edge~cover of this instance
	consists of $B_0$ and the two parallel edges (i.e., the two
	copies of the edge $vw$) of each copy of the gadget, and it has
	cost $6+\ell$.
	Observe that any 2-ECSS contains the two edges between
	$\{v,w\}$ and $B_0$.  Thus, $\opt\geq2\ell$, whereas
	$\dtwocost\leq6+\ell$.
	\item[(b)]~Redundant 4-cycles are present.  Then ${\frac {\opt}
	{\dtwocost}} \approx 2$ is possible.  Our construction consists
	of a root 2ec-block $B_0$, say a 6-cycle of cost~6, and
	$\ell\gg1$ copies of the following gadget that are attached to
	$B_0$.  The gadget consists of a 4-cycle $C=u_1,\dots,u_4,u_1$
	that has two zero-edges ${u_1u_2},{u_3u_4}$ and two unit-edges
	${u_2u_3},{u_4u_1}$; moreover, we have an edge between $u_1$
	and $B_0$ of cost~one, and an edge between $u_3$ and $B_0$ of
	cost~one.
	Observe that a (feasible) 2-edge~cover of this instance
	consists of $B_0$ and the 4-cycle $C$ of each copy of the
	gadget, and it has cost $6+2\ell$.
	Observe that for any 2-ECSS and for each copy of the gadget,
	the two edges between $C$ and $B_0$ as well as the four edges
	of $C$ are contained in the 2-ECSS.
	Thus, $\opt\geq4\ell$, whereas $\dtwocost\leq6+2\ell$.
	\item[(c)]~Cut~nodes are present.  Then ${\frac {\opt} {\dtwocost}}
	\approx 2$ is possible.  Our construction consists of $\ell$
	copies of a 3-cycle $C=u_1,u_2,u_3,u_1$ where ${u_1u_2}$
	is a zero~edge and the other two edges have cost~one.  We
	``string up'' the $\ell$ copies, i.e., node $u_3$ of the $i$th
	copy is identified with node $u_1$ of the $(i+1)$th copy.
	The optimal solution has all the edges, so $\opt=2\ell$,
	whereas a (feasible) 2-edge~cover consists of a Hamiltonian
	path together with two more edges incident to the two ends of
	this path, and it has cost $\ell+2$; thus
	$\dtwocost\leq2+\ell$.
	\item[(d)]~Bad-pairs are present.  Then ${\frac {\opt} {\dtwocost}}
	\approx 2$ is possible.  An example can be obtained by
	modifying example~(b) above.
	\big(Each copy of the gadget is connected to $B_0$ by two edges
	that are incident to the nodes $u_1$ and $u_2$ instead of the
	nodes $u_1$ and $u_3$. It can be seen that $\opt\ge4\ell$
	whereas $\dtwocost\leq6+2\ell$.\big)
\end{enumerate}
}

{
\begin{figure}[ht]
    \centering
    \begin{subfigure}{0.45\textwidth}
        \centering
        \begin{tikzpicture}[scale=0.75]
            \begin{scope}[every node/.style={circle, fill=black, draw, inner sep=0pt,
            minimum size = 0.15cm
            }]
                \draw[very thick,dotted] (0,2) circle (0.8cm);
                \node[label={[label distance=2]270:$v_1$}] (v1) at (-3,0) {};
                \node[label={[label distance=2]270:$w_1$}] (w1) at (-2,0) {};
                \node[label={[label distance=2]270:$v_2$}] (v2) at (-1,0) {};
                \node[label={[label distance=2]270:$w_2$}] (w2) at (0,0) {};
                \node[label={[label distance=2]270:$v_\ell$}] (vl) at (2,0) {};
                \node[label={[label distance=2]270:$w_\ell$}] (wl) at (3,0) {};
                \node[draw=none,fill=none] (lt) at (-0.3,2) {};
                \node[draw=none,fill=none] (rt) at (0.3,2) {};

                \node[draw=none,fill=none] (w22) at (0.25,0) {};
                \node[draw=none,fill=none] (vll) at (1.75,0) {};
                
            \end{scope}
    
            \begin{scope}[every edge/.style={draw=black}]
                
                \path[dashed] (v1) edge[bend left=30] node {} (w1);
                \path[] (v1) edge[bend right=30] node {} (w1);
                \path[] (v1) edge[bend left=15] node {} (lt);
                \path[] (w1) edge[bend right=15] node {} (rt);
                
                \path[dashed] (v2) edge[bend left=30] node {} (w2);
                \path[] (v2) edge[bend right=30] node {} (w2);
                \path[] (v2) edge[bend left=15] node {} (lt);
                \path[] (w2) edge[bend right=15] node {} (rt);
                
                \path[dashed] (vl) edge[bend left=30] node {} (wl);
                \path[] (vl) edge[bend right=30] node {} (wl);
                \path[] (vl) edge[bend left=15] node {} (lt);
                \path[] (wl) edge[bend right=15] node {} (rt);
                
                \path[thick,dotted] (w22) edge node {} (vll);
                
            \end{scope}
    
            \begin{scope}[every node/.style={draw=none,rectangle}]
                \node (B0label) at (0,2.25) {$B_0$};
            \end{scope}
        \end{tikzpicture}
        \caption{\centering Example~(a) of Remark.}
        \label{f:sec3:placeholder1-a}
    \end{subfigure}
    \hspace*{\fill}
    \begin{subfigure}{0.45\textwidth}
        \centering
        \begin{tikzpicture}[scale=0.75]
            \begin{scope}[every node/.style={circle, fill=black, draw, inner sep=0pt,
            minimum size = 0.15cm
            }]
                \draw[very thick,dotted] (0,2) circle (0.8cm);
                \node[label={[label distance=2]270:$u_1$}] (u1) at (-4,0) {};
                \node[label={[label distance=2]270:$u_2$}] (u2) at (-3.25,0.75) {};
                \node[label={[label distance=2]270:$u_3$}] (u3) at (-2.5,0) {};
                \node[label={[label distance=2]270:$u_4$}] (u4) at (-3.25,-0.75) {};
                \node (u21) at (-1.5,0) {};
                \node (u22) at (-0.75,0.75) {};
                \node (u23) at (0,0) {};
                \node (u24) at (-0.75,-0.75) {};
                \node[label={[label distance=1]225:$u_1^{(\ell)}$}] (ul1) at (2,0) {};
                \node[label={[label distance=2]45:$u_2^{(\ell)}$}] (ul2) at (2.75,0.75) {};
                \node[label={[label distance=1]0:$u_3^{(\ell)}$}] (ul3) at (3.5,0) {};
                \node[label={[label distance=1]270:$u_4^{(\ell)}$}] (ul4) at (2.75,-0.75) {};
                \node[draw=none,fill=none] (lt) at (-0.3,2) {};
                \node[draw=none,fill=none] (rt) at (0.3,2) {};

                \node[draw=none,fill=none] (w22) at (0.25,0) {};
                \node[draw=none,fill=none] (vll) at (1.75,0) {};
                
            \end{scope}
    
            \begin{scope}[every edge/.style={draw=black}]
                
                \path[dashed] (u1) edge node {} (u2);
                \path[] (u2) edge node {} (u3);
                \path[dashed] (u3) edge node {} (u4);
                \path[] (u4) edge node {} (u1);
                
                \path[dashed] (u21) edge node {} (u22);
                \path[] (u22) edge node {} (u23);
                \path[dashed] (u23) edge node {} (u24);
                \path[] (u24) edge node {} (u21);
                
                \path[dashed] (ul1) edge node {} (ul2);
                \path[] (ul2) edge node {} (ul3);
                \path[dashed] (ul3) edge node {} (ul4);
                \path[] (ul4) edge node {} (ul1);
                
                \path[] (u1) edge[bend left=30] node {} (lt);
                \path[] (u3) edge[bend right=15] node {} (rt);
                
                \path[] (u21) edge[bend left=15] node {} (lt);
                \path[] (u23) edge[bend right=15] node {} (rt);
                
                \path[] (ul1) edge[bend left=15] node {} (lt);
                \path[] (ul3) edge[bend right=30] node {} (rt);
                
                \path[thick,dotted] (w22) edge node {} (vll);
                
            \end{scope}
    
            \begin{scope}[every node/.style={draw=none,rectangle}]
                \node (B0label) at (0,2.25) {$B_0$};
            \end{scope}
        \end{tikzpicture}
        \caption{\centering Example~(b) of Remark.}
        \label{f:sec3:placeholder1-b}
    \end{subfigure}
    \hspace*{\fill}
    \begin{subfigure}{0.45\textwidth}
        \centering
        \begin{tikzpicture}[scale=0.75]
            \begin{scope}[every node/.style={circle, fill=black, draw, inner sep=0pt,
            minimum size = 0.15cm
            }]
                \node[label={[label distance=1]270:$u_1$}] (u1) at (0,0) {};
                \node[label={[label distance=1]90:$u_2$}] (u2) at (0.5,1) {};
                \node[label={[label distance=1]270:$u_3$}] (u3) at (1,0) {};
                \node[] (u22) at (1.5,1) {};
                \node[] (u23) at (2,0) {};
                \node[] (u32) at (2.5,1) {};
                \node[] (u33) at (3,0) {};
                
                \node[label={[label distance=0]225:$u_1^{(\ell)}$}] (ul1) at (5,0) {};
                \node[label={[label distance=0]90:$u_2^{(\ell)}$}] (ul2) at (5.5,1) {};
                \node[label={[label distance=0]315:$u_3^{(\ell)}$}] (ul3) at (6,0) {};
                
                \node[draw=none,fill=none] (ll) at (3.25,0) {};
                \node[draw=none,fill=none] (rr) at (4.75,0) {};
                
            \end{scope}
    
            \begin{scope}[every edge/.style={draw=black}]
                
                \path[dashed] (u1) edge node {} (u2);
                \path[] (u2) edge node {} (u3);
                \path[] (u1) edge node {} (u3);
                \path[dashed] (u3) edge node {} (u22);
                \path[] (u22) edge node {} (u23);
                \path[] (u3) edge node {} (u23);
                \path[dashed] (u23) edge node {} (u32);
                \path[] (u32) edge node {} (u33);
                \path[] (u23) edge node {} (u33);
                \path[dashed] (ul1) edge node {} (ul2);
                \path[] (ul2) edge node {} (ul3);
                \path[] (ul1) edge node {} (ul3);
                
                \path[thick,dotted] (ll) edge node {} (rr);
                
            \end{scope}
    
            \begin{scope}[every node/.style={draw=none,rectangle}]
            \end{scope}
        \end{tikzpicture}
        \caption{\centering Example~(c) of Remark.}
        \label{f:sec3:placeholder1-c}
    \end{subfigure}
    \hspace*{\fill}
    \begin{subfigure}{0.45\textwidth}
        \centering
        \begin{tikzpicture}[scale=0.75]
            \begin{scope}[every node/.style={circle, fill=black, draw, inner sep=0pt,
            minimum size = 0.15cm
            }]
                \draw[very thick,dotted] (0,1.5) circle (0.8cm);
                \node[label={[label distance=1]225:$u_1$}] (u1) at (-3,0) {};
                \node[label={[label distance=1]225:$u_2$}] (u2) at (-2,0) {};
                \node[label={[label distance=1]225:$u_3$}] (u3) at (-2,-1) {};
                \node[label={[label distance=1]225:$u_4$}] (u4) at (-3,-1) {};
                \node (u21) at (-1,0) {};
                \node (u22) at (0,0) {};
                \node (u23) at (0,-1) {};
                \node (u24) at (-1,-1) {};
                \node[label={[label distance=1]225:$u_1^{(\ell)}$}] (ul1) at (2,0) {};
                \node[label={[label distance=2]0:$u_2^{(\ell)}$}] (ul2) at (3,0) {};
                \node[label={[label distance=1]0:$u_3^{(\ell)}$}] (ul3) at (3,-1) {};
                \node[label={[label distance=1]200:$u_4^{(\ell)}$}] (ul4) at (2,-1) {};
                \node[draw=none,fill=none] (lt) at (-0.3,1.5) {};
                \node[draw=none,fill=none] (rt) at (0.3,1.5) {};

                \node[draw=none,fill=none] (w22) at (0.25,0) {};
                \node[draw=none,fill=none] (vll) at (1.75,0) {};
                
            \end{scope}
    
            \begin{scope}[every edge/.style={draw=black}]
                
                \path[dashed] (u1) edge node {} (u2);
                \path[] (u2) edge node {} (u3);
                \path[dashed] (u3) edge node {} (u4);
                \path[] (u4) edge node {} (u1);
                
                \path[dashed] (u21) edge node {} (u22);
                \path[] (u22) edge node {} (u23);
                \path[dashed] (u23) edge node {} (u24);
                \path[] (u24) edge node {} (u21);
                
                \path[dashed] (ul1) edge node {} (ul2);
                \path[] (ul2) edge node {} (ul3);
                \path[dashed] (ul3) edge node {} (ul4);
                \path[] (ul4) edge node {} (ul1);
                
                \path[] (u1) edge[bend left=30] node {} (lt);
                \path[] (u2) edge[bend right=15] node {} (rt);
                
                \path[] (u21) edge[bend left=15] node {} (lt);
                \path[] (u22) edge[bend right=15] node {} (rt);
                
                \path[] (ul1) edge[bend left=15] node {} (lt);
                \path[] (ul2) edge[bend right=30] node {} (rt);
                
                \path[thick,dotted] (w22) edge node {} (vll);
                
            \end{scope}
    
            \begin{scope}[every node/.style={draw=none,rectangle}]
                \node (B0label) at (0,1.8) {$B_0$};
            \end{scope}
        \end{tikzpicture}
        \caption{\centering Example~(d) of Remark.}
        \label{f:sec3:placeholder1-d}
    \end{subfigure}
    \caption{\protect\centering
MAP instances $G$ that are \textit{not} well-structured such that
${\frac {\opt(G)} {\dtwocost(G)}} \approx 2$.
Edges of cost~zero and~one are illustrated by dashed and solid lines, respectively.
	}
    \label{f:sec3:thm6egs}
\end{figure}
}

\section{ \label{s:pre-proc1} Pre-processing}
{
This section presents the four reductions
used in the pre-processing step of our algorithm, namely,
the handling of the $\{0,1\}$-edge-pairs, the redundant~4-cycles,
the cut~nodes, and the bad-pairs.

\subsection{ \label{s:pre-proc1-1} Handling $\{0,1\}$-edge-pairs}

We apply the following pre-processing step to
eliminate all $\{0,1\}$-edge-pairs.
We start with a simple result.

\begin{fact} \label{fact:del-edgepair}
Let $H$ be an (inclusion-wise) edge-minimal 2EC graph,
and let $e_1,e_2$ be a pair of parallel edges of $H$.
Then $H-\{e_1,e_2\}$ has precisely two connected~components,
and each of these connected~components is 2EC.
\end{fact}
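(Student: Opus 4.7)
The plan is to establish three things in sequence: (i) the common endpoints $v, w$ of $e_1, e_2$ lie in different connected components of $H - \{e_1, e_2\}$; (ii) these are the only two components; (iii) each component, viewed as an induced subgraph of $H$, is 2EC. For (i), I would argue by contradiction: if a $v$-$w$ path $P$ existed in $H - \{e_1, e_2\}$, then $P$ together with $e_1$ would form two edge-disjoint $v$-$w$ paths in $H - e_2$, and Proposition~\ref{propo:2ecdiscard} (applied with $e = e_2$) would imply that $H - e_2$ is itself 2EC, contradicting the edge-minimality of $H$. Hence $v$ and $w$ lie in different components, which I call $C_v$ and $C_w$.

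For (ii), suppose $H - \{e_1, e_2\}$ had a third component $C'$. The only edges of $H$ missing from $H - \{e_1, e_2\}$ are $e_1$ and $e_2$, both incident only to $v$ and $w$, neither of which lies in $V(C')$; hence no edge of $H$ joins $V(C')$ to $V(H) - V(C')$, contradicting the connectivity of $H$. For (iii), it suffices to show that $H[V(C_v)]$ has no bridge (the case $C_w$ is symmetric). Assume a bridge $f = xy$ of $H[V(C_v)]$ existed, so that $V(C_v) = A \cup B$ with $x \in A$, $y \in B$, and $f$ the sole edge of $H[V(C_v)]$ crossing $(A, B)$. Without loss of generality $v \in A$. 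Then $e_1, e_2$ cross the cut $(A \cup V(C_w),\, B)$ on the $A$-side (since $v \in A$ and $w \in V(C_w)$), and no edge of $H - \{e_1, e_2\}$ connects $V(C_w)$ with $V(C_v)$ by definition of the components. Consequently $f$ is the unique edge of $H$ crossing the cut $(A \cup V(C_w),\, B)$, making $f$ a bridge of $H$ and contradicting 2-edge-connectivity.

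The only subtlety worth flagging is the cut bookkeeping in step (iii): one must verify that the two ``extra'' edges $e_1, e_2$ added back when passing from $H[V(C_v)]$ to $H$ are both incident to $v \in A$ and therefore do not cross the cut $(A \cup V(C_w),\, B)$; the remaining arguments are routine. Should a component reduce to a single node---which can occur only when $H$ consists of exactly two parallel edges between $v$ and $w$---the 2EC claim for that component holds vacuously, so the statement survives in this degenerate case as well.
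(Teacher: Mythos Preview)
Your proof is correct and follows essentially the same approach as the paper's: both argue by contradiction that no $v,w$ path survives (invoking Proposition~\ref{propo:2ecdiscard} and edge-minimality), then deduce exactly two components, then show any putative bridge of a component would remain a bridge of $H$. Your part~(iii) spells out explicitly the cut that the paper summarizes in one line (``since the parallel edges $e_1,e_2$ have exactly one end node in $C_1$, $f$ stays a bridge''), and you additionally flag the degenerate single-node case, which the paper's proof leaves implicit.
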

\begin{proof}
Let $v$ and $w$ be the end~nodes of $e_1, e_2$.
Observe that $H-\{e_1,e_2\}$ has no $v,w$ path.
\big(Otherwise, $H$ would have three edge-disjoint $v,w$ paths; but then,
$H-e_1$ would have two edge-disjoint $v,w$ paths, and by
Proposition~\ref{propo:2ecdiscard}, $H-e_1$ would be 2EC; this would
contradict the edge-minimality of $H$.\big)
It follows that $H-\{e_1,e_2\}$ has precisely two connected~components
(deleting one edge from a connected graph
results in a graph with $\leq2$ connected~components).
Let $C_1,C_2$ be the two connected~components of $H-\{e_1,e_2\}$;
clearly, each of $C_1,C_2$ contains precisely one of the nodes $v,w$.
Suppose that $C_1$ is not 2EC;
then it has a bridge $f$.
Since the parallel edges $e_1,e_2$ have exactly one end~node in $C_1$,
$f$ stays a bridge of $C_1\cup{C_2}\cup\{e_1,e_2\}$.
This is a contradiction, since $H=C_1\cup{C_2}\cup\{e_1,e_2\}$ is 2EC.
\end{proof}

Let $H$ be any 2EC graph.
We call a $\{0,1\}$-edge-pair \textit{essential} if its deletion
results in a disconnected graph.  When we delete an essential
$\{0,1\}$-edge-pair then we get two connected~components and each
one is 2EC, by arguing as in the proof of Fact~\ref{fact:del-edgepair}.
Hence, when we delete all the essential $\{0,1\}$-edge-pairs,
then we get a number of connected~components $C_1,\dots,C_k$ such
that each one is 2EC.  Clearly, an approximately optimal 2-ECSS of
$H$ can be computed by returning the union of approximately optimal
2-ECSSes of $C_1,\dots,C_k$ together with all the essential
$\{0,1\}$-edge-pairs of $H$; moreover, it can be seen that the
approximation guarantee is preserved, that is, the approximation
guarantee on $H$ is $\leq$ the maximum of the approximation guarantees
on $C_1,\dots,C_k$.

The next observation allows us to handle the inessential $\{0,1\}$-edge-pairs.

\begin{fact} \label{fact:edgepair}
Suppose that $H$ is 2EC and it has no essential $\{0,1\}$-edge-pairs.
Then there exists a min-cost 2-ECSS of $H$ that does not contain
any unit-edge of any $\{0,1\}$-edge-pair.
\end{fact}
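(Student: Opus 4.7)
The plan is an exchange argument applied to a lexicographically optimal 2-ECSS. I would pick a 2-ECSS $H^{*}$ of $H$ that minimizes, in this order, (i)~the cost, (ii)~the number of edges, and (iii)~the number of unit-edges of $\{0,1\}$-edge-pairs that it contains. Note that (ii) forces $H^{*}$ to be (inclusion-wise) edge-minimal: any removable edge would, by~(i), have cost zero, and its removal would then violate~(ii). For contradiction, suppose some $e_1 \in H^{*}$ is the unit-edge of a $\{0,1\}$-edge-pair $\{e_0, e_1\}$, with common end nodes $v,w$.

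If $e_0 \notin H^{*}$, the graph $H^{*} - e_1 + e_0$ is still 2EC (we only swapped one copy of a parallel edge for another) and costs one less, contradicting~(i). So assume $e_0 \in H^{*}$. Edge-minimality of $H^{*}$, together with Fact~\ref{fact:del-edgepair} applied to the parallel pair $\{e_0,e_1\}$, implies that $H^{*} - \{e_0, e_1\}$ splits into exactly two 2EC connected components $C_1, C_2$ covering $V(H)$, with $v \in V(C_1)$ and $w \in V(C_2)$. Non-essentialness of $\{e_0, e_1\}$ guarantees that $H - \{e_0, e_1\}$ is connected and thus contains a $v,w$~path; the first edge $f$ of this path that leaves $V(C_1)$ lies in $E(H) \setminus \{e_0,e_1\}$, has one end in $V(C_1)$ and one in $V(C_2)$, and must satisfy $f \notin H^{*}$ (otherwise $f$ would lie in $C_1 \cup C_2$ while crossing between them, contradicting that they are distinct components). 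The replacement graph $(H^{*} - e_1) + f$ is then 2EC because it consists of the two 2EC components $C_1, C_2$ joined by two crossing edges $e_0$ and $f$.

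The main obstacle is the possibility that $f$ is itself the unit-edge of some other $\{0,1\}$-edge-pair $\{f_0, f\}$, in which case the naive swap would merely trade one $\{0,1\}$-unit-edge for another and not reduce count~(iii). The workaround: whenever this happens, the zero-partner $f_0$ also crosses $V(C_1)$--$V(C_2)$, so by the same argument $f_0 \notin H^{*}$, and the swap $(H^{*} - e_1) + f_0$ is 2EC (two 2EC components joined by $e_0$ and $f_0$) with cost $\cost(H^{*}) - 1$, contradicting~(i). In the remaining case where $f$ is not a $\{0,1\}$-unit-edge: if $\cost(f)=0$ the swap strictly reduces cost, contradicting~(i); if $\cost(f)=1$ the swap preserves cost and edge count while strictly decreasing the count in~(iii), and if the resulting 2-ECSS fails edge-minimality I iteratively delete removable (necessarily zero-cost) edges to obtain an edge-minimal 2-ECSS of strictly fewer edges, contradicting~(ii); otherwise it contradicts~(iii) directly. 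Every case yields a contradiction, so no $\{0,1\}$-unit-edge belongs to $H^{*}$.
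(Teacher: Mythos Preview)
Your proof is correct and follows the same exchange-argument template as the paper: pick a min-cost 2-ECSS that is optimal for a secondary criterion, then swap out an offending unit-edge using a crossing edge guaranteed by non-essentiality and Fact~\ref{fact:del-edgepair}.

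The one notable difference is in the choice of secondary criterion. The paper minimizes the number of \emph{complete} $\{0,1\}$-edge-pairs contained in the 2-ECSS (both edges present), rather than your count~(iii) of unit-edges of such pairs. With the paper's criterion, the crossing edge $e''$ can never create a new $\{0,1\}$-edge-pair in the swapped graph: its parallel partner (if one exists in $H$) would also cross between $C_1$ and $C_2$ and hence cannot lie in $H'-\{e,f\}$, so it is absent from $H'$ and from $H''$. This makes the extra case split you perform (on whether $f$ is itself a $\{0,1\}$-unit-edge, and the fallback to $f_0$) unnecessary, and it also renders your tie-breaker~(ii) on edge count superfluous. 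Your route is a bit longer but entirely sound; the paper's is a cleaner packaging of the same idea.
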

\begin{proof}
Consider a min-cost 2-ECSS $H'$ of $H$ that contains the
minimum number of $\{0,1\}$-edge-pairs,
i.e., among all the optimal subgraphs, we pick one
that has the fewest number of parallel edges $e,f$ such that
$\cost(e)+\cost(f)=1$.
If $H'$ has no $\{0,1\}$-edge-pair, then the fact holds.
Otherwise, we argue by contradiction.
We pick any $\{0,1\}$-edge-pair $e,f$.
Deleting both $e,f$ from $H'$ results in two connected~components $C_1,C_2$
such that each is 2EC, by Fact~\ref{fact:del-edgepair}.
Now, observe that $e,f$ is not essential for $H$,
hence, $H-\{e,f\}$ has an edge $e''$ between $C_1$ and $C_2$.
We obtain the graph $H''$ from $H'$ by
replacing the unit-edge of $e,f$ by the edge $e''$.
Clearly, $H''$ is a 2-ECSS of $H$ of cost $\le\cost(H')$, and
moreover, it has fewer $\{0,1\}$-edge-pairs than $H'$.
Thus, we have a contradiction.
\end{proof}

Now, focus on the input graph $G$.
By the discussion above,
we may assume that $G$ has no essential $\{0,1\}$-edge-pairs.
Then we delete the unit-edge of each $\{0,1\}$-edge-pair.
By Fact~\ref{fact:edgepair}, the resulting graph stays 2EC and
the optimal value is preserved.
Thus, we can eliminate all $\{0,1\}$-edge-pairs, while preserving
the approximation guarantee.

\begin{proposition}
Assume that $G$ has no essential $\{0,1\}$-edge-pairs.
Let $\hat{G}$ be the multi-graph obtained from $G$ by
eliminating all $\{0,1\}$-edge-pairs (as discussed above).
Then $\opt(G) = \opt(\hat{G})$.

There is a polynomial-time $\alpha$-approximation algorithm for MAP on $G$,
if there is such an algorithm for MAP on $\hat{G}$.
\end{proposition}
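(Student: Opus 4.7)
My plan is to prove the two assertions in sequence, leveraging Fact~\ref{fact:edgepair} as the essential ingredient; once that fact is in hand, both claims follow by short containment arguments with no combinatorial work left to do.

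For the equality $\opt(G)=\opt(\hat G)$, I would argue both inequalities. First, since $\hat G$ is obtained from $G$ by deleting edges (specifically, the unit-edge of each $\{0,1\}$-edge-pair), every 2-ECSS of $\hat G$ is a subgraph of $G$ spanning $V(G)$ and is thus a 2-ECSS of $G$ of the same cost; this gives $\opt(G)\le\opt(\hat G)$. For the reverse inequality, I would invoke Fact~\ref{fact:edgepair}: since $G$ is 2EC and has no essential $\{0,1\}$-edge-pairs, there exists a min-cost 2-ECSS $H^{\star}$ of $G$ that contains no unit-edge of any $\{0,1\}$-edge-pair. Every edge of $H^{\star}$ therefore survives the elimination step, so $E(H^{\star})\subseteq E(\hat G)$ and $H^{\star}$ is itself a 2-ECSS of $\hat G$ with $\cost(H^{\star})=\opt(G)$. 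Hence $\opt(\hat G)\le\opt(G)$, completing the equality.

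For the algorithmic statement, I would simply observe that the reduction is constructive and polynomial: given $G$, one lists all parallel edge pairs in linear time, identifies those of mixed cost, and deletes one unit-edge per such pair to form $\hat G$. If an $\alpha$-approximation algorithm for MAP is available on $\hat G$, we run it to obtain a 2-ECSS $H'$ of $\hat G$ with $\cost(H')\le\alpha\cdot\opt(\hat G)$. Because $E(H')\subseteq E(\hat G)\subseteq E(G)$ and $H'$ spans $V(G)=V(\hat G)$, $H'$ is a feasible 2-ECSS of $G$, and by the equality just established,
\[
\cost(H')\;\le\;\alpha\cdot\opt(\hat G)\;=\;\alpha\cdot\opt(G),
\]
which certifies an $\alpha$-approximation on $G$.

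The only conceivable obstacle is ensuring there is nothing to worry about from multiple $\{0,1\}$-edge-pairs being eliminated simultaneously, but Fact~\ref{fact:edgepair} already gives a single min-cost 2-ECSS that avoids the unit-edges of \emph{all} $\{0,1\}$-edge-pairs at once, so the elimination can be performed in a single pass without any iterative re-verification. Hence the proof really reduces to two one-line containment arguments plus a trivial wrap-up of the approximation bound.
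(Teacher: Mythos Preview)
Your proof is correct and matches the paper's approach: the paper does not give a separate proof for this proposition but states it as an immediate consequence of the discussion preceding it, namely that deleting the unit-edge of each $\{0,1\}$-edge-pair keeps the graph 2EC and preserves the optimal value by Fact~\ref{fact:edgepair}. Your write-up simply spells out the two containment inequalities and the algorithmic wrap-up that the paper leaves implicit.
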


In what follows, we continue to use $G$ to denote the multi-graph
obtained by eliminating all $\{0,1\}$-edge-pairs (for the sake of
notational convenience).

The next fact states that the restriction on the zero-edges of $G$
is preserved when we contract a set of edges $E'$ such that none
of the end~nodes of the zero-edges in $E-E'$ is incident to an edge of $E'$
(i.e., the end~nodes of the zero-edges of $G/E'$ are
``original nodes'' rather than ``contracted nodes.'')

\begin{fact} \label{fact:allowed-contracts}
Let $G=(V,E)$ satisfy the restriction on the zero-edges
(i.e., $G$ has no $\{0,1\}$-edge-pairs, and
the zero-edges form a matching).
Suppose that we contract a set of edges $E' \subset E$ such that 
there exists no node that is incident to both an edge in $E'$
and a zero-edge in $E-E'$.
Then the restriction on the zero-edges
continues to hold for the 
contracted multi-graph $G/E'$.
\end{fact}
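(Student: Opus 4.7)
The key observation is that the hypothesis is equivalent to saying that every endpoint of every zero-edge $e \in E-E'$ is a singleton connected component of the subgraph $(V, E')$. Therefore, contracting $E'$ does not identify these endpoints with any other node, and each zero-edge $e \in E-E'$ survives in $G/E'$ with the same pair of endpoints in $V$---neither removed as a parallel copy nor collapsed to a loop.

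To check that the zero-edges of $G/E'$ form a matching, I would suppose for contradiction that two distinct zero-edges $e_1, e_2 \in E-E'$ share an endpoint $u$ in $G/E'$. By the observation, the endpoints of $e_1$ and $e_2$ in $G/E'$ coincide with their endpoints in $G$, so $e_1$ and $e_2$ already share $u$ in $G$, contradicting the fact that the zero-edges of $G$ form a matching. To check that $G/E'$ contains no $\{0,1\}$-edge-pair, I would suppose for contradiction that a zero-edge $e$ and a unit-edge $f$ are parallel in $G/E'$; both then lie in $E-E'$. Let the endpoints of $e$ be $u,v$, which are singletons in $(V, E')$ by the observation. Since $f$ is parallel to $e$ in $G/E'$, its endpoints in $G/E'$ are $u$ and $v$; because $u$ and $v$ are singletons in $(V, E')$, the endpoints of $f$ in $G$ must also be exactly $u$ and $v$. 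Hence $\{e, f\}$ is already a $\{0,1\}$-edge-pair in $G$, contradicting the hypothesis.

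The argument is essentially direct once the singleton-component invariant is identified. The only point requiring care is that the definition of contraction removes parallel copies and discards loops; however, the singleton property of the endpoints of zero-edges in $E-E'$ immediately ensures that neither reduction affects these edges, so the two restrictions on zero-edges transfer cleanly from $G$ to $G/E'$.
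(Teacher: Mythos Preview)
The paper states this fact without proof, treating it as self-evident. Your argument is correct and is precisely the natural verification one would give: the hypothesis forces each endpoint of a surviving zero-edge to be an isolated node of $(V,E')$, so contraction neither identifies such endpoints with other nodes nor deletes any surviving zero-edge as a loop or parallel copy, and both restrictions then transfer directly from $G$ to $G/E'$.
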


\subsection{ \label{s:pre-proc1-2} Handling redundant 4-cycles}

We contract all of the redundant 4-cycles in a pre-processing step.
Recall that two distinct redundant 4-cycles have
no nodes and no edges in common.
We first compute all the redundant~4-cycles 
and then we contract each of these cycles.

\begin{proposition}
Suppose that $G$ has $q$ redundant 4-cycles.
Let $\hat{G}$ be the multi-graph obtained from $G$ by
contracting all redundant 4-cycles.
Then $\opt(G) = \opt(\hat{G}) + 2q$.
There is a polynomial-time $\alpha$-approximation algorithm for MAP on~$G$,
if there is such an algorithm for MAP on~$\hat{G}$.
\end{proposition}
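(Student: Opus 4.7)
The plan is to prove the cost identity $\opt(G) = \opt(\hat G) + 2q$ by two opposite constructions (uncontract/contract), and then to obtain the approximation transfer by a short cost comparison. Before starting, I would observe that $\hat G$ is a legal MAP instance: in every redundant $4$-cycle $C = u_1,u_2,u_3,u_4,u_1$ with zero-edges $u_1u_2, u_3u_4$, the degree-$2$ nodes $u_2, u_4$ are not incident to anything outside $C$, while the matching condition on $M$ together with the pre-processing of Section~\ref{s:pre-proc1-1} prevents $u_1, u_3$ from being endpoints of any zero-edge outside $C$. Hence Fact~\ref{fact:allowed-contracts} applies, and $\hat G$ retains the zero-edge restrictions; $\hat G$ remains 2EC because contraction preserves edge-connectivity.

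For the upper bound $\opt(G) \leq \opt(\hat G) + 2q$, I would take an optimal 2-ECSS $\hat H$ of $\hat G$ and ``uncontract'' each super-node $v_C$ back into the original $4$-cycle $C$, re-attaching the edges incident to $v_C$ to their original endpoints in $\{u_1, u_3\}$; this re-attachment is unambiguous because each edge of $\hat G$ is inherited from $G$ with a recorded original endpoint. The resulting subgraph $H \subseteq G$ is spanning and 2EC (each restored cycle $C$ is 2EC, and is joined to the rest of $H$ via $u_1$ and $u_3$ through two edge-disjoint external attachments), and it satisfies $\cost(H) = \cost(\hat H) + 2q$, since each restored cycle contributes exactly two unit-edges and two zero-edges.

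For the matching lower bound $\opt(G) \geq \opt(\hat G) + 2q$, the main observation is that every 2-edge cover of $G$ (hence every 2-ECSS of $G$) must contain the full edge~set of every redundant $4$-cycle: the degree-$2$ nodes $u_2, u_4$ force inclusion of all of their incident edges, and together these are precisely the four edges of $C$. Now let $H$ be any optimal 2-ECSS of $G$; contracting all redundant $4$-cycles inside $H$ yields a spanning subgraph $\hat H$ of $\hat G$ which is still 2EC, since contracting the 2EC subgraph $C$ inside the 2EC graph $H$ preserves 2-edge-connectivity. Each contracted cycle removes exactly the two unit-edges of $C$ from the cost (any chord edge such as $u_1u_3$, if present in $H$, only becomes a loop and further decreases cost), so $\cost(\hat H) \leq \opt(G) - 2q$, and therefore $\opt(\hat G) \leq \opt(G) - 2q$.

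Finally, the approximation statement follows by applying the uncontract construction to the output $\hat H'$ of the given $\alpha$-approximation on $\hat G$: the resulting 2-ECSS $H'$ of $G$ has cost
\[
\cost(H') \;=\; \cost(\hat H') + 2q \;\leq\; \alpha\,\opt(\hat G) + 2q \;=\; \alpha\,\opt(G) - 2q(\alpha-1) \;\leq\; \alpha\,\opt(G),
\]
using $\alpha \geq 1$. The only slightly delicate point is the 2EC-preservation under contraction in both directions, but it is standard and follows from the fact that each redundant $4$-cycle, together with its two attachment points $u_1, u_3$ to the rest of the graph, behaves as a 2EC ``capsule'' whose contraction neither creates nor destroys cuts of size less than two.
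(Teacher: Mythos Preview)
The paper states this proposition without proof (it is treated as routine after the observation in Section~2.3 that every 2-edge cover must contain all edges of every redundant 4-cycle). Your argument correctly supplies the details: the two-sided inequality via contract/uncontract, the forced inclusion of all four cycle edges in any 2-ECSS, and the approximation transfer are all sound, and your appeal to Fact~\ref{fact:allowed-contracts} to certify that $\hat G$ is again a MAP instance is the right bookkeeping.

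One small inaccuracy in your upper-bound justification: you write that the restored cycle ``is joined to the rest of $H$ via $u_1$ and $u_3$ through two edge-disjoint external attachments.'' In fact the $\geq 2$ external edges of $\hat H$ at $v_C$ may all re-attach to $u_1$ alone (or to $u_3$ alone); nothing forces them to split between the two attachment nodes, since at this stage $G$ may still have cut nodes. Your conclusion is nevertheless correct: replacing a node of a 2EC multigraph by any 2EC subgraph, with arbitrary re-attachment of the incident edges to nodes of that subgraph, always yields a 2EC graph (delete any single edge and check connectivity in the two cases $e\in E(C)$ and $e\notin E(C)$). You may want to phrase it that way rather than asserting attachments at both $u_1$ and $u_3$.
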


\authremark{
Note that the contraction of a redundant~4-cycle may result in ``new'' cut~nodes.
}

\subsection{ \label{s:pre-proc1-3} Handling cut~nodes}

Let $H$ be any 2EC graph. Then $H$ can be decomposed into blocks
$H_1,\dots,H_k$ such that each block is either 2NC or else it
consists of two nodes with two parallel edges between the two nodes.
(Thus, $E(H)$ is partitioned among $E(H_1),\dots,E(H_k)$ and
any two of the blocks $H_i, H_j$ are either disjoint or 
they have exactly one node in common.)
It is well known that an approximately optimal 2-ECSS of $H$
can be computed by taking the union of
approximately optimal 2-ECSSes of $H_1,\dots,H_k$;
moreover, the approximation guarantee is preserved,
see \cite[Proposition~1.4]{SV:cca}.

\begin{proposition}
Suppose that $G$ has cut~nodes; let $G_1,\dots,G_k$ be the blocks of $G$.
Then $\opt(G) = \sum_{i=1}^k \opt({G_i})$.
There is a polynomial-time $\alpha$-approximation algorithm for MAP on~$G$,
if there is such an algorithm for MAP on~$G_i$, $\forall{i}\in\{1,\dots,k\}$.
\end{proposition}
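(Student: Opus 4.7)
The plan is to derive the cost equation from a structural fact about block decompositions and then use the same fact to lift the approximation guarantee. The structural fact I will invoke is: for any 2EC graph $H$ with block decomposition $H_1,\dots,H_k$, a subset $F\subseteq E(H)$ is a 2-ECSS of $H$ if and only if $F\cap E(H_i)$ is a 2-ECSS of $H_i$ (on $V(H_i)$) for every $i$. This is essentially Proposition~1.4 of \cite{SV:cca}; the blocks themselves can be computed in polynomial time by a standard DFS.

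The ``if'' direction is immediate, since in a graph assembled by gluing 2EC subgraphs along cut-nodes no edge can be a bridge: any edge $e$ lies in some block $H_i$ and $F\cap E(H_i)$ is bridgeless there, so a second $v,w$-path exists for $e=vw$ inside $H_i$. Taking $F_i$ to be an optimal 2-ECSS of each $G_i$ and $F=\bigcup_i F_i$ yields $\opt(G)\le\cost(F)=\sum_i\opt(G_i)$, using that $E(G_1),\dots,E(G_k)$ partition $E(G)$ so that $\cost$ is additive.

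The harder direction, $\opt(G)\ge\sum_i\opt(G_i)$, is where I expect the only real work. I start from an optimal 2-ECSS $F^{\star}$ of $G$ and must show that each $F_i^{\star}:=F^{\star}\cap E(G_i)$ is 2-edge connected on $V(G_i)$. The argument goes via cut-covering: given any nonempty proper subset $S\subsetneq V(G_i)$, I extend the cut $(S,V(G_i)\setminus S)$ to a cut $(S',V(G)\setminus S')$ of $V(G)$ by assigning each ``side branch'' of the block-cut tree hanging off a cut-node $c\in V(G_i)$ to the same side as $c$. By construction, the edges of $G$ crossing the extended cut are precisely the edges of $G_i$ crossing the original cut (edges in other blocks all sit on $c$'s side of some branch, hence do not cross). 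Since $F^{\star}$ is 2EC, at least two of its edges cross the extended cut, and these same edges lie in $F_i^{\star}$ and cross $(S,V(G_i)\setminus S)$. Summing $\cost(F_i^{\star})\ge\opt(G_i)$ over $i$ and using that the blocks are edge-disjoint yields $\cost(F^{\star})=\sum_i\cost(F_i^{\star})\ge\sum_i\opt(G_i)$.

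Combining the two inequalities gives $\opt(G)=\sum_i\opt(G_i)$. For the algorithmic statement, compute the blocks $G_1,\dots,G_k$ in polynomial time, run the hypothesized $\alpha$-approximation on each $G_i$ to obtain 2-ECSSes $H_i$ with $\cost(H_i)\le\alpha\cdot\opt(G_i)$, and output $H=\bigcup_i H_i$. The ``if'' direction certifies that $H$ is a 2-ECSS of $G$, and additivity of $\cost$ together with the equation just proved gives $\cost(H)\le\alpha\cdot\opt(G)$. Finally, each $G_i$ is a legitimate MAP instance since the zero-edges of $G$ are partitioned among the blocks and trivially remain a matching when restricted to $V(G_i)$.
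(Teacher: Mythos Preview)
Your proof is correct. The paper does not actually prove this proposition; it states the result and defers to \cite[Proposition~1.4]{SV:cca} as ``well known,'' giving only the preliminary observation that the edge sets of the blocks partition $E(G)$. Your argument supplies exactly the details the paper omits: the cut-extension argument for the direction $\opt(G)\ge\sum_i\opt(G_i)$ is the standard way to verify that $F^\star\cap E(G_i)$ is 2EC on $V(G_i)$, and your treatment of the algorithmic lift and of why each $G_i$ remains a valid MAP instance is clean. So there is no divergence in approach---you are filling in the proof the paper chose to cite rather than write out.
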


\authremark{
Note that the blocks $G_1,\dots,G_k$ of a 2EC graph $G$ may contain
redundant~4-cycles even if $G$ has no redundant~4-cycle.
}

\subsection{ \label{s:bad-pairs} Pre-processing for bad-pairs}
{
This sub-section presents the pre-processing step of our algorithm that
handles the bad-pairs.
This step partitions the edges of a 2NC~instance $G$ among a number of sub-instances
$G_i$ such that each sub-instance is 2NC and has no bad-pairs. 
We ensure the key property that the union of the 2-ECSSes of
the sub-instances $G_i$ forms a 2-ECSS of $G$ (see Fact~\ref{fact:2ECbyunion}).

Throughout this sub-section,
unless there is a statement to the contrary,
we assume that the instance $G$ has no $\{0,1\}$-edge-pairs.
This assumption is valid because the pre-processing step~(pp1) has
been applied already.

\subsubsection{Bad-pairs and bp-components}

Recall that a \textit{bad-pair} of a 2EC MAP~instance is a pair of nodes $\{v,w\}$
such that ${vw}$ is a zero-edge, and the deletion of both
nodes $v$~and~$w$ results in a disconnected graph.
Throughout, unless mentioned otherwise,
the term bad-pair refers to a bad-pair of the graph $G$.

For an instance $\wt{G}$ that has no cut~nodes and no bad-pairs,
we define $\ndtwocost(\wt{G})$ to be
$2q + \sum_{i=1}^{k}\dtwocost(\wt{G}_i)$,
where $q$ denotes the number of redundant 4-cycles of $\wt{G}$,
and $\wt{G}_1,\dots,\wt{G}_k$ denotes the 
collection of well-structured MAP~instances obtained from $\wt{G}$ by
repeatedly applying the pre-processing steps (pp2) and (pp3)
(observe that (pp2) and~(pp3) cannot introduce ``new'' bad-pairs nor ``new'' $\{0,1\}$-edge-pairs).

\begin{fact} \label{fact:sharp-lb}
Let $\wt{G}$ be an instance that has no cut~nodes and no bad-pairs
(by assumption $\wt{G}$ has no $\{0,1\}$-edge-pairs).
We have
$\ds \dtwocost(\wt{G}) \leq \ndtwocost(\wt{G}) \leq \opt(\wt{G})$.
Moreover, $\ndtwocost(\wt{G})$ can be computed in polynomial~time.
\end{fact}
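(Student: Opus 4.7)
The plan is to prove the two inequalities separately, and then to address polynomial-time computability. Let $\hat{G}$ denote the multi-graph obtained from $\wt{G}$ by contracting its $q$ redundant 4-cycles (using (pp2)), and let $\wt{G}_1,\dots,\wt{G}_k$ denote the well-structured MAP instances obtained by iteratively applying (pp2) and (pp3), starting from $\hat{G}$. I read the definition of $\ndtwocost(\wt{G})$ recursively: at each level we pay $2$ for every redundant 4-cycle contracted and at the bottom we pay $\dtwocost$ on each well-structured leaf. Since $\wt{G}$ has neither bad-pairs nor $\{0,1\}$-edge-pairs, the parenthetical remark in the definition guarantees these ``good'' properties persist at every level of the recursion, so only (pp2) and (pp3) are ever applied.

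For the upper bound $\ndtwocost(\wt{G}) \le \opt(\wt{G})$, I would simply iterate the two identities already established in Sections~\ref{s:pre-proc1-2} and~\ref{s:pre-proc1-3}. The proposition for (pp2) gives $\opt(\wt{G}) = \opt(\hat{G}) + 2q$, and the proposition for (pp3) gives $\opt(\hat{G}) = \sum_j \opt(B_j)$ when $\hat{G}$ is split into blocks $B_j$. Applying this pair of identities recursively through the process that defines the $\wt{G}_i$'s produces $\opt(\wt{G}) = (\text{total $2q$ contribution}) + \sum_{i=1}^{k} \opt(\wt{G}_i)$. Invoking Fact~\ref{fact:dtwolb} on each well-structured piece to replace $\opt(\wt{G}_i)$ by the smaller $\dtwocost(\wt{G}_i)$ yields the claim.

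For the lower bound $\dtwocost(\wt{G}) \le \ndtwocost(\wt{G})$, my plan is to build an explicit 2-edge cover of $\wt{G}$ of cost exactly $\ndtwocost(\wt{G})$. I take the union of a minimum 2-edge cover $\DTWO(\wt{G}_i)$ from each well-structured leaf and, for every redundant 4-cycle that was contracted at any stage, reinsert all four of its edges (cost $2$). The verification that this edge set covers every vertex with multiplicity $\ge 2$ is the main place requiring care and breaks into three cases: (i)~non-cut vertices of a leaf instance are covered within that leaf's $\DTWO$; (ii)~cut vertices appearing in multiple blocks accumulate at least two incident $\DTWO$-edges from each block they lie in; and (iii)~for a redundant 4-cycle $u_1,u_2,u_3,u_4,u_1$, the degree-two vertices $u_2, u_4$ receive their two incident cycle-edges, while the ``external'' vertices $u_1, u_3$ get one cycle-edge plus at least one edge that a $\DTWO$ of a higher-level instance placed at the contracted node and which, after uncontraction, must be incident to $u_1$ or $u_3$. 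Summing costs gives $\sum_i \dtwocost(\wt{G}_i)+(\text{total $2q$ contribution}) = \ndtwocost(\wt{G})$.

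For polynomial-time computability, the (vertex-disjoint) redundant 4-cycles can be enumerated in polynomial time, block decomposition is standard and linear-time, and Proposition~\ref{thm:computeD2} computes each $\dtwocost(\wt{G}_i)$ in polynomial time; each recursive level strictly reduces the number of vertices, so the recursion has depth $O(n)$. The main obstacle in the write-up will be the bookkeeping for the iterated (pp2)/(pp3) interaction---the remark after Section~\ref{s:pre-proc1-3} warns that block-splitting can create fresh redundant 4-cycles---but once the recursion is set up carefully this is routine, and the underlying combinatorial content is just the uncontraction argument for case~(iii) above.
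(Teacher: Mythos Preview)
Your approach is essentially the paper's: for $\dtwocost\le\ndtwocost$ you assemble a 2-edge~cover from the leaf $\DTWO$'s together with the edges of all contracted 4-cycles, and for $\ndtwocost\le\opt$ you push $\opt$ through the (pp2)/(pp3) identities and then invoke Fact~\ref{fact:dtwolb} on each leaf. Two points are worth tightening.

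First, the paper does not set up a recursion. It shows that \emph{one} application of (pp2) followed by \emph{one} application of (pp3) already yields well-structured instances. The reason: after contracting the redundant 4-cycles of $\wt{G}$ to obtain $G'$, every cut~node of $G'$ is a contracted node (since $\wt{G}$ itself is 2NC), and contracted nodes are incident only to unit-edges (the matching constraint on zero-edges); hence no block of $G'$ can contain a redundant 4-cycle. This matters because the definition of $\ndtwocost(\wt{G})$ takes $q$ to be the number of redundant 4-cycles \emph{of $\wt{G}$}, not a recursive total. Your ``total $2q$ contribution'' reading coincides with the paper's $2q$ only because the recursion depth is one; you should state and use this, rather than carry the recursion.

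Second, your case~(iii) verification is both slightly wrong and more complicated than needed: in a 4-cycle $u_1,u_2,u_3,u_4,u_1$ every node is incident to \emph{two} cycle edges, so once you reinsert all four edges of each contracted 4-cycle, all four of its nodes are covered immediately. The extra $\DTWO$-edge you mention at the contracted node does land at $u_1$ or $u_3$ after uncontraction, but you do not need it for coverage; it is simply part of the cost accounting for the higher-level instance.
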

\begin{proof}
Let $C_1,\dots,C_q$ denote the redundant~4-cycles of $\wt{G}$.
Let ${G'}$ denote the graph obtained from $\wt{G}$ by contracting $C_1,\dots,C_q$,
and let ${G'_1},\dots,{G'_k}$ denote the blocks of ${G'}$.
None of ${G'_1},\dots,{G'_k}$ contains a redundant~4-cycle
(since no cut~node of $G'$ is incident to a zero-edge), hence,
each of ${G'_1},\dots,{G'_k}$ is a well-structured MAP~instance.
Observe that a 2-edge~cover of $\wt{G}$ is given by the union of
$\bigcup_{i=1}^{k} E(\DTWO(G'_i))$ and $\bigcup_{j=1}^{q} E(C_j)$,
and we have $\cost(E(\DTWO(G'_i))) = \dtwocost(G'_i),~\forall\;i\in\{1,\dots,k\}$,
hence,
$\dtwocost(\wt{G}) \leq (\sum_{i=1}^{k} \dtwocost(G'_i)) + (2q) = \ndtwocost(\wt{G})$.
This proves the first inequality.

Let $\wt{G}^{opt}$ denote an arbitrary min-cost 2-ECSS of $\wt{G}$;
clearly, $\wt{G}^{opt}$ contains each of $C_1,\dots,C_q$.
Let $G^*$ be obtained from $\wt{G}^{opt}$ by contracting $C_1,\dots,C_q$.
Note that $E(G^*)$ can be partitioned into sets $E^*_1,\dots,E^*_k$
such that $E^*_i\subseteq E(G'_i)$ and $E^*_i$ induces a 2EC~subgraph
of ${G'_i}$, for~all $i\in\{1,\dots,k\}$.
Hence, $\cost(G^*)= \sum_{i=1}^{k}\cost(E^*_i) \ge \sum_{i=1}^{k}\dtwocost(G'_i)$.
Therefore, $\opt(\wt{G}) = \cost(\wt{G}^{opt}) = 2q+\cost(G^*) \ge
2q+\sum_{i=1}^{k}\dtwocost(G'_i) = \ndtwocost(\wt{G})$.
This proves the second inequality.
\end{proof}

The pre-processing algorithm and analysis of this subsection rely
on using the sharper lower~bound of $\ndtwocost$ on $\opt$.

For a bad-pair $\{v,w\}$ and one of its bp-components $C$
we use $\ubpcomp{C}{v}{w}$ to denote
the subgraph of $G$ induced by $V(C)\cup\{v,w\}$;
thus, we have $\ubpcomp{C}{v}{w} \;=\; G[V(C)\cup\{v,w\}]$;
moreover, if $C$ has $\ge2$ nodes,
then we use $\sbpcomp{C}{v}{w}$ to denote the multi-graph obtained
from $\ubpcomp{C}{v}{w}$ by contracting the zero-edge ${vw}$,
whereas if $C$ has only one node then
we take $\sbpcomp{C}{v}{w}$ to be the same as $\ubpcomp{C}{v}{w}$
(to ensure that $\sbpcomp{C}{v}{w}$ is 2NC, see
Fact~\ref{fact:2NCparts}, we have to ensure that it has $\ge3$ nodes;
if $C$ has only one node, then note that $\ubpcomp{C}{v}{w}/\{vw\}$ has
only 2~nodes).

We sketch our plan for handling the bad-pairs in this informal and
optional paragraph.
Assume that $G$ has one or more bad-pairs.
We traverse the ``tree of bp-components and bad-pairs'';
at each iteration, we pick a bad-pair 
$\{v_{\ell},w_{\ell}\}$ such that
all-but-one of its bp-components are free of bad-pairs,
see Fact~\ref{fact:pickbp}.
Let $C_1$ denote the (unique) bp-component that has one or more bad-pairs
(w.l.o.g.\ assume $C_1$ exists), and
let $C_2,\dots,C_k$ denote the bp-components free of bad-pairs;
we call these the \textit{leaf}~bp-components.
It is easily seen that for a bp-component $C_i$, the subgraph of any
optimal solution induced by $V(C_i)\cup\{v_{\ell},w_{\ell}\}$ has cost
$\ge \dtwocost(\sbpcomp{C_i}{v_{\ell}}{w_{\ell}})$
(see Fact~\ref{fact:piecebound}).
Now, focus on an optimal solution and let $F^*$ denote its set of edges,
i.e., $(V,F^*)$ is a 2-ECSS of $G$ of minimum cost.
Since $(V,F^*)$ is 2EC, it can be seen that there exists a
$j\in\{1,\dots,k\}$ such that the graph
$(\ubpcomp{C_j}{v_{\ell}}{w_{\ell}} - {v_{\ell}w_{\ell}})$ contains
a $v_{\ell},w_{\ell}$ path;
then, it follows that $F^*\cup\{v_{\ell} w_{\ell}\}$ induces a 2-ECSS
of $\ubpcomp{C_j}{v_{\ell}}{w_{\ell}}$. In other words, we may
assume w.l.o.g.\ that the zero-edge ${v_{\ell}w_{\ell}}$ (of the
bad-pair) is in $F^*$, and we may ``allocate'' it to one of the
bp-components.
Informally speaking, our plan is to return $k$ sub-instances such
that one of the sub-instances is of the form
$\ubpcomp{C_j}{v_{\ell}}{w_{\ell}}$ while the other sub-instances
are of the form $\sbpcomp{C_i}{v_{\ell}}{w_{\ell}}$;
this can be viewed as ``allocating'' the zero-edge ${v_{\ell}w_{\ell}}$
to one carefully chosen bp-component $C_j$ by
``mapping'' $C_j$ to the sub-instance $\ubpcomp{C_j}{v_{\ell}}{w_{\ell}}$,
while all other bp-components $C_i$ ($i\neq j$) are ``mapped'' to
sub-instances $\sbpcomp{C_i}{v_{\ell}}{w_{\ell}}$.
We ``allocate'' the zero-edge ${v_{\ell}w_{\ell}}$ as follows:
For each $i=2,\dots,k$,
we compute $\ndtwocost(\ubpcomp{C_i}{v_{\ell}}{w_{\ell}})$ and
$\ndtwocost(\sbpcomp{C_i}{v_{\ell}}{w_{\ell}})$.
Suppose that there is an index $j\in\{2,\dots,k\}$ such that
these two numbers are the same for $C_j$.
Then we ``allocate'' the zero-edge to $C_j$;
in case of ties, we pick any $j$ such that
$\ndtwocost(\ubpcomp{C_j}{v_{\ell}}{w_{\ell}}) =
\ndtwocost(\sbpcomp{C_j}{v_{\ell}}{w_{\ell}})$.
On the other hand, if the two numbers differ for each $j\in\{2,\dots,k\}$,
then we ``allocate'' the zero-edge to $C_1$.
(Although this allocation may disagree
with the allocation used by the optimal solution,
it turns out that we incur no ``loss''.)
This brings us to the end of the iteration for
the bad-pair $\{v_{\ell},w_{\ell}\}$;
the algorithm applies the same method to the ``remaining graph,'' namely,
either $\ubpcomp{C_1}{v_{\ell}}{w_{\ell}}$ or $\sbpcomp{C_1}{v_{\ell}}{w_{\ell}}$.
We mention that $C_1$ plays a special role in our pre-processing algorithm;
this will become clear when we prove its correctness
(see Lemma~\ref{lem:ppcorrect} below).
See the example in Figure~\ref{f:bp:badalloc}.

{
\begin{figure}[htb]
    \centering
    \begin{subfigure}{0.3\textwidth}
        \centering
        \begin{tikzpicture}[scale=0.6]
            \begin{scope}[every node/.style={circle, fill=black, draw, inner sep=0pt,
            minimum size = 0.15cm
            }]
                \node[label={[label distance=2]90:$a$}] (v1) at (0,0) {};
                \node[] (v2) at (-1,1) {};
                \node[] (v3) at (-2.25,0.75) {};
                \node[] (v4) at (-2.25,-0.75) {};
                \node[] (v5) at (-1,-1) {};
                \node[label={[label distance=2]180:$b$}] (w2) at (1,1) {};
                \node[] (w3) at (2,0) {};
                \node[] (w4) at (1,-1) {};
            \end{scope}
    
            \begin{scope}[every edge/.style={draw=black}]
                \path[] (v1) edge node {} (v2);
                \path[] (v2) edge node {} (v3);
                \path[] (v3) edge node {} (v4);
                \path[] (v4) edge node {} (v5);
                \path[] (v5) edge node {} (v1);
                \path[thick, dashed] (v1) edge node {} (w2);
                \path[] (w2) edge node {} (w3);
                \path[] (w3) edge node {} (w4);
                \path[] (w4) edge node {} (v1);
                \path[] (v3) edge[bend left=60] node {} (w2);
            \end{scope}
    
            \begin{scope}[every node/.style={draw=none,rectangle}]
                \node (C1label) at (-1.5,0) {$C_1$};
                \node (C2label) at (1,0) {$C_2$};
            \end{scope}
        \end{tikzpicture}
        \caption{\centering
	A bad-pair $\{a,b\}$ whose bp-components are $C_1$ and $C_2$.}
        \label{badalloc-a}
    \end{subfigure}
    \hspace*{\fill}
    \begin{subfigure}{0.3\textwidth}
        \centering
        \begin{tikzpicture}[scale=0.6]
            \begin{scope}[every node/.style={circle, fill=black, draw, inner sep=0pt,
            minimum size = 0.15cm
            }]
                \node[label={[label distance=2]90:$a$}] (v1) at (0,0) {};
                \node[] (v2) at (-1,1) {};
                \node[] (v3) at (-2.25,0.75) {};
                \node[] (v4) at (-2.25,-0.75) {};
                \node[] (v5) at (-1,-1) {};
                \node[label={[label distance=2]180:$b$}] (w2) at (1,1) {};
                \node[] (w1) at (2,0.5) {};
                \node[] (w3) at (3.5,0) {};
                \node[] (w4) at (2.5,-1) {}; 
            \end{scope}
    
            \begin{scope}[every edge/.style={draw=black}]
                \path[] (v1) edge node {} (v2);
                \path[] (v2) edge node {} (v3);
                \path[] (v3) edge node {} (v4);
                \path[] (v4) edge node {} (v5);
                \path[] (v5) edge node {} (v1);
                \path[] (w1) edge node {} (w3);
                \path[] (w3) edge node {} (w4);
                \path[] (w1) edge node {} (w4);
                \path[thick, dashed] (v1) edge node {} (w2);
                \path[] (v3) edge[bend left=60] node {} (w2);
            \end{scope}
    
            \begin{scope}[every node/.style={draw=none,rectangle}]
            \end{scope}
        \end{tikzpicture}
        \caption{\centering The graphs $C_1^{\{a,b\}}$ (left) and $C_2^{\oslash}$ (right).}
        \label{badalloc-b}
    \end{subfigure}
    \hspace*{\fill}
    \begin{subfigure}{0.3\textwidth}
        \centering
        \begin{tikzpicture}[scale=0.6]
            \begin{scope}[every node/.style={circle, fill=black, draw, inner sep=0pt,
            minimum size = 0.15cm
            }]
                \node[] (v1) at (0,0) {};
                \node[] (v2) at (-1,1) {};
                \node[] (v3) at (-2.25,0.75) {};
                \node[] (v4) at (-2.25,-0.75) {};
                \node[] (v5) at (-1,-1) {};
                \node[label={[label distance=2]90:$a$}] (w1) at (1,0) {};
                \node[label={[label distance=2]180:$b$}] (w2) at (2,1) {};
                \node[] (w3) at (3,0) {};
                \node[] (w4) at (2,-1) {};
            \end{scope}
    
            \begin{scope}[every edge/.style={draw=black}]
                \path[] (v1) edge node {} (v2);
                \path[] (v2) edge node {} (v3);
                \path[] (v3) edge node {} (v4);
                \path[] (v4) edge node {} (v5);
                \path[] (v5) edge node {} (v1);
                \path[] (v1) edge node {} (v3);
                \path[thick, dashed] (w1) edge node {} (w2);
                \path[] (w2) edge node {} (w3);
                \path[] (w3) edge node {} (w4);
                \path[] (w4) edge node {} (w1);
            \end{scope}
    
            \begin{scope}[every node/.style={draw=none,rectangle}]
            \end{scope}
        \end{tikzpicture}
        \caption{\centering The graphs $C_1^{\oslash}$ (left) and $C_2^{\{a,b\}}$ (right).}
        \label{badalloc-c}
    \end{subfigure}
    \caption{\protect\centering
	An example illustrating a ``bad allocation'' of the zero-edge of a bad-pair.
	The graph (left subfigure) has one bad-pair $\{a,b\}$
	and its bp-components are $C_1$ (with 4~nodes) and $C_2$ (with 2~nodes);
	all edges have cost~one, except $ab$; note that $\opt=8$.
	If $ab$ is ``allocated'' to $C_1$, the sum of the costs of
	the \DTWO\ subgraphs of the resulting sub-instances is~9. }
    \label{f:bp:badalloc}
\end{figure}
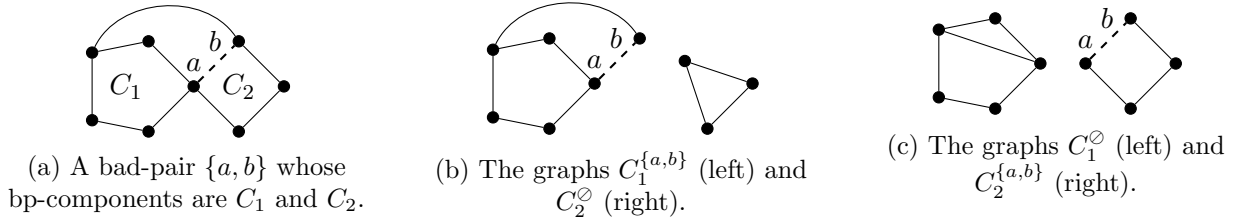
}

\begin{fact} \label{fact:2NCparts} \label{fact:dtwobpcomp}
Let $G$ be 2NC (by assumption $G$ has no $\{0,1\}$-edge-pairs).
Let $\{v,w\}$ be a bad-pair of $G$, and
let $C$ be one of the bp-components of $\{v,w\}$.
\begin{itemize}
\item[(i)]
Then both $\ubpcomp{C}{v}{w}$ and $\sbpcomp{C}{v}{w}$ are 2NC.
\item[(ii)]
Suppose that $G$ has no redundant~4-cycles and has $\ell\ge1$ bad-pairs.
Then $\sbpcomp{C}{v}{w}$
	has no redundant~4-cycles, and it has $\le\ell-1$ bad-pairs,
whereas $\ubpcomp{C}{v}{w}$
	has at~most~one redundant~4-cycle, and it has $\le\ell-1$ bad-pairs.
\item[(iii)]
Suppose that $G$ has no redundant~4-cycles and
suppose that $\sbpcomp{C}{v}{w}$ has no bad-pairs (of its own)
and $\ubpcomp{C}{v}{w}$ has no bad-pairs (of its own).
Then $\ds \ndtwocost(\sbpcomp{C}{v}{w}) \leq \ndtwocost(\ubpcomp{C}{v}{w})$.
\end{itemize}
\end{fact}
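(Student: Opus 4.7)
Throughout I will use the ``insulation'' property of a bp-component: since $V(C)$ is a connected~component of $G-\{v,w\}$, every edge of $G$ incident to $V(C)$ has its other endpoint in $V(C)\cup\{v,w\}$; in particular $\deg_G(c) = \deg_{\ubpcomp{C}{v}{w}}(c) = \deg_{\sbpcomp{C}{v}{w}}(c)$ for every $c \in V(C)$. This single observation drives all three parts.

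\textbf{Part (i).} For $\ubpcomp{C}{v}{w}$ I verify the 2NC test vertex by vertex. Deleting $v$: $C$ is connected, and since $G-v$ is connected, any $w$-to-$V(C)$ path in $G-v$ must (by insulation) enter $V(C)$ at its very first step, producing a $w$-$V(C)$ edge inside $\ubpcomp{C}{v}{w}$; deleting $w$ is symmetric. Deleting $c\in V(C)$: for $c'\in V(C)-c$, a $c'$-to-$v$ path in $G-c$ first exits $V(C)$ into $\{v,w\}$ by insulation, and its prefix (optionally completed by the edge $vw$) yields such a path inside $\ubpcomp{C}{v}{w}-c$. For $\sbpcomp{C}{v}{w}$, the case $|V(C)|=1$ holds by definition; otherwise $|V(\ubpcomp{C}{v}{w})|\ge 4$ and I apply the routine fact that contracting an edge $xy$ in a 2NC graph $H$ with $|V(H)|\ge 4$ preserves 2NC whenever $H-\{x,y\}$ is connected, which is true here because $\ubpcomp{C}{v}{w}-\{v,w\}=C$.

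\textbf{Part (ii).} The key observation is that each of the four vertices of a redundant 4-cycle is an endpoint of one of its two zero-edges. In $\sbpcomp{C}{v}{w}$ the contracted vertex $\{vw\}$ has no incident zero-edge (since $M$ is a matching, so $v$ and $w$ have no zero-neighbors besides each other), hence a hypothetical redundant 4-cycle of $\sbpcomp{C}{v}{w}$ avoids $\{vw\}$, lives in $G[V(C)]$, and by insulation inherits its degree-2 constraint from $G$, contradicting the hypothesis. In $\ubpcomp{C}{v}{w}$ the same argument forces every redundant 4-cycle to contain the zero-edge $vw$, and pairwise node-disjointness of redundant 4-cycles then permits at most one. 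For the bad-pair counts I inject bad-pairs of each sub-instance into $(\text{bad-pairs of }G)\setminus\{\{v,w\}\}$: a bad-pair of $\sbpcomp{C}{v}{w}$ cannot touch $\{vw\}$ (no zero-edge), so it sits in $V(C)$; its disconnection lifts to $\ubpcomp{C}{v}{w}-\{a,b\}$ (contraction never disconnects) and then to $G-\{a,b\}$ by insulation, since any component missing $\{v,w\}$ is trapped inside $V(C)$ with no external neighbors. Bad-pairs of $\ubpcomp{C}{v}{w}$ lift similarly, and $\{v,w\}$ is excluded as the image because $\ubpcomp{C}{v}{w}-\{v,w\}=C$ is connected.

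\textbf{Part (iii).} Parts (i), (ii) together with the hypothesis imply that $\sbpcomp{C}{v}{w}$ is well-structured, so $\ndtwocost(\sbpcomp{C}{v}{w})=\dtwocost(\sbpcomp{C}{v}{w})$; Fact~\ref{fact:sharp-lb} gives $\ndtwocost(\ubpcomp{C}{v}{w})\ge\dtwocost(\ubpcomp{C}{v}{w})$. It therefore suffices to establish $\dtwocost(\sbpcomp{C}{v}{w})\le\dtwocost(\ubpcomp{C}{v}{w})$, which I do by contracting the zero-edge $vw$ inside an optimal 2-edge cover $F$ of $\ubpcomp{C}{v}{w}$: the degree at the merged vertex $\{vw\}$ is at least $\deg_F(v)+\deg_F(w)-2\ge 2$, so the result is a 2-edge cover of $\sbpcomp{C}{v}{w}$, and its cost equals $\cost(F)$ since $\cost(vw)=0$. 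I expect the main obstacle to be the bad-pair bookkeeping in part~(ii) when the bad-pair of $\ubpcomp{C}{v}{w}$ has an endpoint in $\{v,w\}$: one must carefully use insulation together with the fact that the other bp-components attach only at $v$ and $w$ to be sure that the disconnection really propagates to all of $G$.
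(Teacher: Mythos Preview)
Your proof is correct and follows essentially the same approach as the paper for parts~(i) and~(ii): the 2NC check by vertex deletion, the observation that the contracted vertex $v^*$ carries no zero-edge (so redundant 4-cycles and bad-pairs of $\sbpcomp{C}{v}{w}$ live inside $C$), and the injection of bad-pairs back into $G$ via the insulation property are exactly what the paper does, only with more detail spelled out. Your closing worry about a bad-pair of $\ubpcomp{C}{v}{w}$ meeting $\{v,w\}$ is already disposed of by your own argument: since $M$ is a matching, the only zero-edge touching $\{v,w\}$ is $vw$, and $\{v,w\}$ is ruled out because $C$ is connected.

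For part~(iii) you take a slightly shorter route than the paper. The paper splits into two cases according to whether $\ubpcomp{C}{v}{w}$ contains a redundant 4-cycle, and in the second case it explicitly tracks the cost through the contraction of that 4-cycle. You instead invoke Fact~\ref{fact:sharp-lb} once to get $\dtwocost(\ubpcomp{C}{v}{w})\le\ndtwocost(\ubpcomp{C}{v}{w})$ (legitimate, since by part~(i) and the hypothesis $\ubpcomp{C}{v}{w}$ is 2NC with no bad-pairs and no $\{0,1\}$-edge-pairs), and then bridge via the single inequality $\dtwocost(\sbpcomp{C}{v}{w})\le\dtwocost(\ubpcomp{C}{v}{w})$ from the contraction argument. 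This is a genuine, if modest, simplification: it avoids the case split entirely and shows that the redundant-4-cycle analysis in the paper's proof is already absorbed into Fact~\ref{fact:sharp-lb}.
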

\begin{proof}
We prove each of the three parts.
\begin{description}
{
\item[(i)]
First, consider $\ubpcomp{C}{v}{w}$; observe that it has $\ge3$ nodes;
since $G$ is 2NC and $C$ is a connected~component of $G-\{v,w\}$,
for each node $z\in V(C)$,
there exist two openly disjoint paths between $z$ and $\{v,w\}$ in $\ubpcomp{C}{v}{w}$;
hence, it can be seen that $\ubpcomp{C}{v}{w}$ is 2NC.

Now, consider  $\sbpcomp{C}{v}{w}$, and w.l.o.g., assume that $C$ has $\ge2$ nodes.
Then, by definition, $\sbpcomp{C}{v}{w}$ has $\ge3$ nodes.
Let $v^*$ denote the node resulting from the contraction of $vw$.
Arguing as above, 
for each node $z\in V(C)$,
there exist two openly disjoint paths between $z$ and $v^*$ in $\sbpcomp{C}{v}{w}$;
moreover, $C = \sbpcomp{C}{v}{w} - v^*$ has a single connected~component;
hence, it can be seen that $\sbpcomp{C}{v}{w}$ is 2NC.

\item[(ii)]
Observe that any bad-pair of $\ubpcomp{C}{v}{w}$ (respectively,
$\sbpcomp{C}{v}{w}$) is a bad-pair of $G$.
Also, note that $\{v,w\}$ is a bad-pair of $G$ but it is not a
bad-pair of $\ubpcomp{C}{v}{w}$
(since $C = \ubpcomp{C}{v}{w} - \{v,w\}$ is connected).

Consider $\sbpcomp{C}{v}{w}$, and w.l.o.g., assume that $C$ has $\ge2$ nodes.
Let $v^*$ denote the node resulting from the contraction of $vw$.
Note that a redundant~4-cycle of $\sbpcomp{C}{v}{w}$ cannot be incident to $v^*$
(since every edge incident to $v^*$ in $\sbpcomp{C}{v}{w}$ is a unit-edge).
It follows that any redundant~4-cycle of $\sbpcomp{C}{v}{w}$ is a
redundant~4-cycle of $G$.
Clearly, $\sbpcomp{C}{v}{w}$ has $\leq \ell-1$ bad-pairs, and it has no
redundant~4-cycles; thus, part~(ii) holds for $\sbpcomp{C}{v}{w}$.

Next, we verify part~(ii) for $\ubpcomp{C}{v}{w}$.
Clearly, $\ubpcomp{C}{v}{w}$ has $\leq \ell-1$ bad-pairs.
Observe that $\ubpcomp{C}{v}{w}$ has at~most~one redundant~4-cycle
containing the zero-edge $vw$.  Also, observe that any redundant~4-cycle
of $\ubpcomp{C}{v}{w}$ that is disjoint from $\{v,w\}$ is also a
redundant~4-cycle of $G$.  It follows that $\ubpcomp{C}{v}{w}$ has
at~most~one redundant~4-cycle.

\item[(iii)]
Observe that $\sbpcomp{C}{v}{w}$ is a well-structured MAP~instance, hence,
$\ndtwocost(\sbpcomp{C}{v}{w}) = \dtwocost(\sbpcomp{C}{v}{w})$.

First, suppose that $\ubpcomp{C}{v}{w}$ has no redundant~4-cycles.
Then, $\ubpcomp{C}{v}{w}$ is a well-structured MAP~instance, hence,
$\ndtwocost(\ubpcomp{C}{v}{w}) = \dtwocost(\ubpcomp{C}{v}{w})$.
Moreover, we have
$\dtwocost(\sbpcomp{C}{v}{w}) \leq \dtwocost(\ubpcomp{C}{v}{w})$;
to see this, note that we can start with $\DTWO(\ubpcomp{C}{v}{w})$ and
contract the zero-edge $vw$ to get a 2-edge~cover of $\sbpcomp{C}{v}{w}$.
Therefore, part~(iii) holds in this case.

Next, suppose that $\ubpcomp{C}{v}{w}$ has a redundant~4-cycle.
Then, $\ubpcomp{C}{v}{w}$ has a unique redundant~4-cycle $Q$ that
contains the zero-edge $vw$.
Let $\wt{C}$ denote the graph obtained from $\ubpcomp{C}{v}{w}$ by contracting $Q$.
Then, we have
$\dtwocost(\sbpcomp{C}{v}{w}) \leq 2 + \dtwocost(\wt{C}) \leq \ndtwocost(\ubpcomp{C}{v}{w})$;
the second inequality follows from the definition of $\ndtwocost(\ubpcomp{C}{v}{w})$
\big(the cost of the union of the $\DTWO$ subgraphs of the
well-structured MAP~instances obtained from $\wt{C}$ by repeatedly
applying (pp2)~and~(pp3) is $\ndtwocost(\ubpcomp{C}{v}{w})-2$\big);
to verify the first inequality note that $E(Q) \cup E(\DTWO(\wt{C}))$
is a 2-edge~cover of $\ubpcomp{C}{v}{w}$, and if we contract the
zero-edge $vw$ then we get a 2-edge~cover of $\sbpcomp{C}{v}{w}$.
Thus, part~(iii) holds in this case.
}
\end{description}
\end{proof}

See Figure~\ref{f:show-tauhat} for an illustration.

{
\begin{figure}[ht]
    \centering
    \begin{subfigure}{0.45\textwidth}
        \centering
        \begin{tikzpicture}[scale=1]
            \begin{scope}[every node/.style={circle, fill=black, draw, inner sep=0pt,
            minimum size = 0.12cm
            }]
                \draw[very thick, dotted] (0.5,1.5) ellipse (1.5cm and 0.8cm);
                \node[draw=none,fill=none] (v1) at (-0.3,1) {};
                \node[draw=none,fill=none] (v2) at (0.3,1) {};
                \node[draw=none,fill=none] (w1) at (0.7,1) {};
                \node[draw=none,fill=none] (w2) at (1.3,1) {};
                \node[label={[label distance=3]165:$v$}] (v) at (0,0) {};
                \node[label={[label distance=3]15:$w$}] (w) at (1,0) {};
                \node[label={[label distance=3]165:$x$}] (x) at (0,-1) {};
                \node[label={[label distance=3]15:$y$}] (y) at (1,-1) {};
                
                \node[] (vy) at (0.5,-0.5) {};
                
                \node[] (a1) at (-2,-1) {};
                \node[] (a2) at (-1,-1) {};
                \node[] (a3) at (-1,-2) {};
                \node[] (a4) at (-2,-2) {};
                
                \node[] (b1) at (3,-1) {};
                \node[] (b2) at (3,-2) {};
                \node[] (b3) at (2.25,-1.5) {};
            \end{scope}
    
            \begin{scope}[every edge/.style={draw=black}]
                
                \path[thick,dotted] (v) edge[] node {} (v1);
                \path[thick,dotted] (v) edge[] node {} (v2);
                \path[thick,dotted] (w) edge[] node {} (w1);
                \path[thick,dotted] (w) edge[] node {} (w2);
                
                \path[dashed] (v) edge[] node {} (w);
                \path[dashed] (x) edge[] node {} (y);
                \path[] (v) edge[] node {} (x);
                \path[] (w) edge[] node {} (y);
                
                \path[] (v) edge[] node {} (vy);
                \path[] (vy) edge[] node {} (y);
                
                \path[] (a1) edge[] node {} (a2);
                \path[] (a2) edge[] node {} (a3);
                \path[] (a3) edge[] node {} (a4);
                \path[] (a4) edge[] node {} (a1);
                \path[] (a1) edge[] node {} (v);
                \path[] (a2) edge[] node {} (v);
                \path[] (a3) edge[] node {} (y);
                
                \path[] (b1) edge[] node {} (b2);
                \path[] (b2) edge[] node {} (b3);
                \path[] (b3) edge[] node {} (b1);
                \path[] (b1) edge[bend right = 60] node {} (v);
                \path[] (b2) edge[bend left = 45] node {} (y);
                
            \end{scope}
    
            \begin{scope}[every node/.style={draw=none,rectangle}]
                \node (Clabel) at (0.5,-2) {$C$};
                \node (Glabel) at (0.5,1.5) {$G \setminus (C\cup\{v,w\})$};
            \end{scope}
        \end{tikzpicture}
        \caption{\centering Graph $G$ with bad-pair $\{v,w\}$ and bp-component $C$}
        \label{placeholder1-a}
    \end{subfigure}
    \begin{subfigure}{0.45\textwidth}
        \centering
        \begin{tikzpicture}[scale=1]
            \begin{scope}[every node/.style={circle, fill=black, draw, inner sep=0pt,
            minimum size = 0.12cm
            }]
                \node[label={[label distance=3]165:$vw$}] (vw) at (0,0) {};
                \node[label={[label distance=3]165:$x$}] (x) at (0,-1) {};
                \node[label={[label distance=3]15:$y$}] (y) at (1,-1) {};
                
                \node[] (vy) at (0.5,-0.5) {};
                
                \node[] (a1) at (-2,-1) {};
                \node[] (a2) at (-1,-1) {};
                \node[] (a3) at (-1,-2) {};
                \node[] (a4) at (-2,-2) {};
                
                \node[] (b1) at (3,-1) {};
                \node[] (b2) at (3,-2) {};
                \node[] (b3) at (2.25,-1.5) {};
            \end{scope}
    
            \begin{scope}[every edge/.style={draw=black}]
                
                \path[dashed] (x) edge[] node {} (y);
                \path[] (vw) edge[] node {} (x);
                \path[] (vw) edge[bend left=45] node {} (y);
                
                \path[] (vw) edge[] node {} (vy);
                \path[] (vy) edge[] node {} (y);
                
                \path[] (a1) edge[] node {} (a2);
                \path[] (a2) edge[] node {} (a3);
                \path[] (a3) edge[] node {} (a4);
                \path[] (a4) edge[] node {} (a1);
                \path[] (a1) edge[] node {} (vw);
                \path[] (a2) edge[] node {} (vw);
                \path[] (a3) edge[] node {} (y);
                
                \path[] (b1) edge[] node {} (b2);
                \path[] (b2) edge[] node {} (b3);
                \path[] (b3) edge[] node {} (b1);
                \path[] (b1) edge[bend right = 60] node {} (vw);
                \path[] (b2) edge[bend left = 45] node {} (y);
                
            \end{scope}
    
            \begin{scope}[every node/.style={draw=none,rectangle}]
            \end{scope}
        \end{tikzpicture}
        \caption{\centering $C^{\oslash}$}
        \label{placeholder1-d}
    \end{subfigure}
    \begin{subfigure}{0.45\textwidth}
        \centering
        \begin{tikzpicture}[scale=1]
            \begin{scope}[every node/.style={circle, fill=black, draw, inner sep=0pt,
            minimum size = 0.12cm
            }]
                \node[label={[label distance=3]165:$v$}] (v) at (0,0) {};
                \node[label={[label distance=3]15:$w$}] (w) at (1,0) {};
                \node[label={[label distance=3]165:$x$}] (x) at (0,-1) {};
                \node[label={[label distance=3]15:$y$}] (y) at (1,-1) {};
                
                \node[] (vy) at (0.5,-0.5) {};
                
                \node[] (a1) at (-2,-1) {};
                \node[] (a2) at (-1,-1) {};
                \node[] (a3) at (-1,-2) {};
                \node[] (a4) at (-2,-2) {};
                
                \node[] (b1) at (3,-1) {};
                \node[] (b2) at (3,-2) {};
                \node[] (b3) at (2.25,-1.5) {};
            \end{scope}
    
            \begin{scope}[every edge/.style={draw=black}]
                
                \path[dashed] (v) edge[] node {} (w);
                \path[dashed] (x) edge[] node {} (y);
                \path[] (v) edge[] node {} (x);
                \path[] (w) edge[] node {} (y);
                
                \path[] (v) edge[] node {} (vy);
                \path[] (vy) edge[] node {} (y);
                
                \path[] (a1) edge[] node {} (a2);
                \path[] (a2) edge[] node {} (a3);
                \path[] (a3) edge[] node {} (a4);
                \path[] (a4) edge[] node {} (a1);
                \path[] (a1) edge[] node {} (v);
                \path[] (a2) edge[] node {} (v);
                \path[] (a3) edge[] node {} (y);
                
                \path[] (b1) edge[] node {} (b2);
                \path[] (b2) edge[] node {} (b3);
                \path[] (b3) edge[] node {} (b1);
                \path[] (b1) edge[bend right = 60] node {} (v);
                \path[] (b2) edge[bend left = 45] node {} (y);
                
            \end{scope}
    
            \begin{scope}[every node/.style={draw=none,rectangle}]
                \node (Clabel) at (0.5,-2.5) {$C^{\{v,w\}}$};
            \end{scope}
        \end{tikzpicture}
        \caption{\centering $C^{\{v,w\}}$ with redundant~4-cycle $v,w,y,x,v$}
        \label{placeholder1-b}
    \end{subfigure}
    \begin{subfigure}{0.45\textwidth}
        \centering
        \begin{tikzpicture}[scale=1]
            \begin{scope}[every node/.style={circle, fill=black, draw, inner sep=0pt,
            minimum size = 0.12cm
            }]
                \node[minimum size = 0.16cm] (vwxy) at (0.5,-0.5) {};
                \node[] (vy) at (0.5,0.5) {};
                
                \node[] (a1) at (-2,-1) {};
                \node[] (a2) at (-1,-1) {};
                \node[] (a3) at (-1,-2) {};
                \node[] (a4) at (-2,-2) {};
                
                \node[] (b1) at (3,-1) {};
                \node[] (b2) at (3,-2) {};
                \node[] (b3) at (2.25,-1.5) {};
            \end{scope}
    
            \begin{scope}[every edge/.style={draw=black}]
                
                \path[] (vwxy) edge[bend left=30] node {} (vy);
                \path[] (vwxy) edge[bend right=30] node {} (vy);
                
                \path[] (a1) edge[] node {} (a2);
                \path[] (a2) edge[] node {} (a3);
                \path[] (a3) edge[] node {} (a4);
                \path[] (a4) edge[] node {} (a1);
                \path[] (a1) edge[bend left=15] node {} (vwxy);
                \path[] (a2) edge[] node {} (vwxy);
                \path[] (a3) edge[] node {} (vwxy);
                
                \path[] (b1) edge[] node {} (b2);
                \path[] (b2) edge[] node {} (b3);
                \path[] (b3) edge[] node {} (b1);
                \path[] (b1) edge[bend right=15] node {} (vwxy);
                \path[] (b2) edge[bend left=30] node {} (vwxy);
                
            \end{scope}
    
            \begin{scope}[every node/.style={draw=none,rectangle}]
            \end{scope}
        \end{tikzpicture}
        \caption{\centering Contracting the redundant~4-cycle of $C^{\{v,w\}}$ results in a cut node}
        \label{placeholder1-c}
    \end{subfigure}
    \caption{\protect\centering Illustration of a bad-pair $\{v,w\}$ and
$\sbpcomp{C}{v}{w}$ and $\ubpcomp{C}{v}{w}$ for a bp-component $C$.
Dashed lines indicate zero-edges.
We have $\ndtwocost(\sbpcomp{C}{v}{w}) = \dtwocost(\sbpcomp{C}{v}{w}) = 10$;
note that $\DTWO(\sbpcomp{C}{v}{w})$ consists of a 3-cycle and two 4-cycles.
Observe that $\ubpcomp{C}{v}{w}$ has a redundant~4-cycle $v,w,y,x,v$, and
the contraction of this redundant~4-cycle results in a cut~node;
the resulting graph ``decomposes'' into three blocks whose $\DTWO$ subgraphs
have costs $2,4,5$, respectively, hence, $\ndtwocost(\ubpcomp{C}{v}{w}) = 13$.
}
	\label{f:show-tauhat}
\end{figure}
}

\authremark{
Recall that $k$ denotes the number of bp-components of the bad-pair
$\{v,w\}$.  Readers may focus on the case of $k=2$ for the rest of
this section; our presentation is valid for any~$k\ge2$.
}

The next fact is essential for our analysis.
It states that for any 2-ECSS $H$ of $G$ that contains all the zero-edges, a bad-pair $\{v,w\}$ of $G$,
and any bp-component $C_i$ of $\{v,w\}$,
the subgraph~of~$H$ induced by
$V(C_i)\cup\{v,w\}$ is either 2EC or
it is connected and has $vw$ as its unique bridge.
\big(To see this, observe that for any node $u\in{V(C_i)}$,
$H$ has two edge-disjoint paths that start at $u$,
end at either $v$ or $w$, and have all internal nodes in $V(C_i)$.\big)
Therefore, the subgraph obtained from $H[V(C_i)\cup\{v,w\}]$
by contracting the edge $vw$ is 2EC.
This implies the key lower~bound,
$ \cost( H[V(C_i)\cup\{v,w\}] ) \ge \opt( \sbpcomp{C_i}{v}{w} )$.

\begin{fact} \label{fact:piecebound}
Let $\{v,w\}$ be  a bad-pair, and let $C_1,\dots,C_k$
be all of its bp-components.
Let $H$ be a 2-ECSS of $G$ that contains all the zero-edges.
Consider any $C_i$, where $i\in\{1,\dots,k\}$.
Then $H[V(C_i)\cup\{v,w\}]$ is connected, and moreover,
either it is 2EC or it has exactly one bridge, namely, ${vw}$.
Therefore, $H[V(\sbpcomp{C_i}{v}{w})]$ is 2EC.
Hence, $\cost(H[V(C_i)\cup\{v,w\}]) \geq
	\opt(\sbpcomp{C_i}{v}{w})$.
\end{fact}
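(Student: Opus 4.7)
The plan is to analyze the induced subgraph $H' := H[V(C_i) \cup \{v,w\}]$ via one structural observation: because $C_i$ is a connected component of $G - \{v,w\}$, no edge of $G$ (and hence none of $H$) joins $V(C_i)$ to $V(G) \setminus (V(C_i) \cup \{v,w\})$. Consequently, for every $B \subseteq V(C_i)$ we have $\delta_H(B) = \delta_{H'}(B)$; this identity will power both the connectedness claim and the bridge analysis.

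To establish that $H'$ is connected, I would take any $u \in V(C_i)$, follow a $u,v$-path in $H$ until it first leaves $V(C_i)$, and note that the exit point must be $v$ or $w$. If it is $v$, we already have a path in $H'$; if it is $w$, truncate there and append the zero-edge $vw \in E(H)$ (present by hypothesis) to obtain a walk from $u$ to $v$ inside $H'$.

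The core step is showing that the only possible bridge of $H'$ is the zero-edge $vw$. Suppose $e$ is a bridge of $H'$ and let $V(H') = A \cup B$ be the induced bipartition. If $v$ and $w$ lie on opposite sides, then $vw \in E(H')$ crosses the cut, so $e = vw$. Otherwise $v, w \in A$; then $B \subseteq V(C_i)$, and the cut identity above gives $\delta_H(B) = \delta_{H'}(B) = \{e\}$, contradicting that $H$ is 2EC. This is the main obstacle in the proof, and the right move is to lift the cut from $H'$ back to $H$ via the identity, rather than to reason about paths inside $H'$ directly.

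To conclude, since $H'$ is connected with at most the zero-edge $vw$ as a bridge, contracting $vw$ yields a bridgeless connected multi-graph, hence a 2EC graph. The preprocessing step~(pp1) has already eliminated $\{0,1\}$-edge-pairs, so $vw$ has no parallel copies in $H$ and the contraction creates no loops; the resulting graph coincides with $H[V(\sbpcomp{C_i}{v}{w})]$, or, in the degenerate case $|V(C_i)| = 1$, with $H'$ itself (which is then a 2EC triangle on $\{u,v,w\}$ because the unique $u \in V(C_i)$ has its $\geq 2$ incident $H$-edges going to $\{v,w\}$). This contracted subgraph is a 2-ECSS of $\sbpcomp{C_i}{v}{w}$, so its cost is at least $\opt(\sbpcomp{C_i}{v}{w})$, and the cost equals $\cost(H[V(C_i) \cup \{v,w\}])$ because $\cost(vw) = 0$.
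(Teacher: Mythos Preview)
Your argument is correct and matches the paper's one-line sketch, which observes that every $u\in V(C_i)$ has two edge-disjoint $u,\{v,w\}$-paths in $H$ whose internal nodes lie in $V(C_i)$; your cut identity $\delta_H(B)=\delta_{H'}(B)$ for $B\subseteq V(C_i)$ is exactly the Menger-dual formulation of that statement, and you have made the details explicit where the paper does not.

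One small gap in your handling of the degenerate case $|V(C_i)|=1$: you assert that $H'$ is a ``2EC triangle on $\{u,v,w\}$'' because $u$ has $\geq 2$ incident $H$-edges going to $\{v,w\}$, but nothing rules out both of those edges being parallel unit-edges $uw$ (the preprocessing step~(pp1) kills $\{0,1\}$-edge-pairs, not parallel unit-edges). In that situation $v$ has degree~$1$ in $H'$, so $H'$ is not 2EC; it is merely connected with the unique bridge $vw$, exactly as your main argument already established. The final inequality is unaffected: since $u$ contributes $\geq 2$ unit-edges to $H'$ we have $\cost(H')\geq 2$, while any 2-ECSS of $\sbpcomp{C_i}{v}{w}=\ubpcomp{C_i}{v}{w}$ must also contain $\geq 2$ unit-edges incident to $u$, so $\opt(\sbpcomp{C_i}{v}{w})\leq 2\leq\cost(H')$. (The paper's own statement of the intermediate claim ``$H[V(\sbpcomp{C_i}{v}{w})]$ is 2EC'' shares this corner-case looseness; only the cost inequality is used downstream.)
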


\begin{fact} \label{fact:2ECbyunion}
Let $\{v,w\}$ be  a bad-pair, and let $C_1,\dots,C_k$
be all of its bp-components.
Suppose that we pick one of these bp-components $C_j$
and compute a 2-ECSS $H_j$
of $\ubpcomp{C_j}{v}{w}$.
For each of the other bp-components $C_i$, $i\in\{1,\dots,k\}-\{j\}$,
we compute a 2-ECSS $H_i$ 
of $\sbpcomp{C_i}{v}{w}$.
Then the union of the edge sets of
$H_1,\dots,H_{j-1}, H_j, H_{j+1},\dots,H_k$
gives a 2-ECSS of $G$.
\end{fact}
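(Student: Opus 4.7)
The plan is to verify directly that $H:=\bigcup_{i=1}^{k} E(H_i)$, viewed as an edge set in $G$, induces a 2-edge connected spanning subgraph. I would split this into two parts: $(a)$~$H$ is connected and spans $V(G)$, and $(b)$~$H$ has no bridge, equivalently every edge of $H$ lies on a cycle in $H$.

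For $(a)$, observe that $H_j$ is connected and spans $V(C_j)\cup\{v,w\}$. For each $i\neq j$, $H_i$ is a connected subgraph of $\sbpcomp{C_i}{v}{w}$ on vertex set $V(C_i)\cup\{v^{*}\}$, where $v^{*}$ denotes the node resulting from the contraction of the zero-edge ${vw}$. For any $u\in V(C_i)$, a $u$-$v^{*}$ path in $H_i$ lifts to a walk in $G$ from $u$ to either $v$ or $w$, depending on which endpoint of ${vw}$ the last edge of the path was incident to in $G$. Hence each vertex of $V(C_i)$ lies in the same component of $H$ as $\{v,w\}$, and since $V(G)=\{v,w\}\cup\bigcup_i V(C_i)$, the subgraph $H$ is connected and spanning.

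For $(b)$, take any $e\in H$. If $e\in H_j$, then since $H_j$ is 2-EC, $e$ lies on a cycle in $H_j\subseteq H$. Otherwise $e\in H_i$ for some $i\neq j$, and $e$ lies on a cycle $Q$ in $H_i$. If $Q$ avoids $v^{*}$, then $Q$ lifts verbatim to a cycle in $G$ contained in $H$. If $Q$ uses $v^{*}$, then it contains exactly two edges incident to $v^{*}$, and in $G$ each of these is incident to either $v$ or $w$. When both lift to the same endpoint in $\{v,w\}$, $Q$ lifts directly to a cycle in $H$. When they split between $v$ and $w$, the lift of $Q$ is a $v$-$w$ path in $G$ passing through $e$, which can be closed into a cycle by appending any $v$-$w$ path in $H_j$; such a $v$-$w$ path exists because $H_j$ is 2-EC on a vertex set containing both $v$ and $w$, and the edges of this path are disjoint from the lifted $Q$ since $H_j$ and the lift of $H_i$ share only the vertices $v,w$.

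The only subtle subcase is the last one, where a cycle in $H_i$ through $v^{*}$ fails to lift to a cycle in $G$ and instead becomes an open $v$-$w$ walk; the resolution uses the 2-edge connectivity of $H_j$ to close the walk into a cycle through $e$. Everything else follows immediately from the definition of 2-EC applied separately inside each $H_i$.
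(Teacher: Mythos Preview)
The paper states this fact without proof, so there is nothing to compare against; your argument stands on its own and is essentially correct. The two-part verification (spanning/connected, then bridgeless via ``every edge lies on a cycle'') is the natural approach, and your handling of the only nontrivial subcase---when a cycle through $v^{*}$ in some $H_i$ lifts to an open $v,w$-walk and must be closed by a $v,w$-path inside $H_j$---is exactly right. The edge-disjointness you need follows because the lifted edges of $H_i$ live in $G[V(C_i)\cup\{v,w\}]\setminus\{vw\}$ while the edges of $H_j$ live in $G[V(C_j)\cup\{v,w\}]$, and $V(C_i)\cap V(C_j)=\emptyset$.

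One small omission: recall from the paper's definition that when $|V(C_i)|=1$, the graph $\sbpcomp{C_i}{v}{w}$ is \emph{not} obtained by contracting $vw$ but is taken to be $\ubpcomp{C_i}{v}{w}$ itself (to keep at least three nodes). In that case $H_i$ already lives in $G$ on the vertex set $\{v,w\}\cup V(C_i)$, there is no $v^{*}$ and no lifting, and every edge of $H_i$ lies on a cycle of $H_i\subseteq H$ directly. This subcase is strictly easier than the one you wrote out, but your sentence ``$H_i$ is a connected subgraph of $\sbpcomp{C_i}{v}{w}$ on vertex set $V(C_i)\cup\{v^{*}\}$'' is literally false for it, so it is worth a one-line acknowledgement.
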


\pagebreak[2]

\newenvironment{absolutelynopagebreak}
  {\par\nobreak\vfil\penalty0\vfilneg
   \vtop\bgroup}
  {\par\xdef\tpd{\the\prevdepth}\egroup
   \prevdepth=\tpd}
\begin{absolutelynopagebreak}
{
\subsubsection{Pre-processing algorithm}

Suppose that $G$ is 2NC and it has one or more bad-pairs.

{
\medskip
\noi
\fbox{ \begin{minipage}{\textwidth}

\textbf{Pre-processing Algorithm (outline)}:
\begin{itemize}
\item[(0)]
	pick a bad-pair $\{v,w\}$ that satisfies the condition in Fact~\ref{fact:pickbp}, and
	let its bp-components be $C_1,C_2,\dots,C_k$, where
	$C_2,\dots,C_k$ are leaf~bp-components (free of bad-pairs);

\item[(1)]
for each $i=2,\dots,k$, we compare $\ndtwocost(\sbpcomp{C_i}{v}{w})$ and
  $\ndtwocost(\ubpcomp{C_i}{v}{w})$;
if the former is strictly smaller than the latter for all
  $i=2,\dots,k$, then we return the list of graphs
  $\sbpcomp{C_2}{v}{w},\dots,\sbpcomp{C_k}{v}{w}$ and
  $\ubpcomp{C_1}{v}{w}$;
informally speaking, we allocate the zero-edge
  ${vw}$ to $C_1$;

\item[(2)]
otherwise, we have at least one $j\in\{2,\dots,k\}$ such that
  $\ndtwocost(\sbpcomp{C_j}{v}{w}) ~=~ \ndtwocost(\ubpcomp{C_j}{v}{w})$;
then we return the list of graphs
  $\sbpcomp{C_1}{v}{w}, \dots, \sbpcomp{C_{j-1}}{v}{w},
  \sbpcomp{C_{j+1}}{v}{w}, \dots, \sbpcomp{C_k}{v}{w}$ and
  $\ubpcomp{C_j}{v}{w}$;
informally speaking, we allocate the zero-edge ${vw}$ to $C_j$;

\item[(3)]
let $G'$ denote the ``remaining graph,'' either $\ubpcomp{C_1}{v}{w}$ or
$\sbpcomp{C_1}{v}{w}$;
\textbf{stop} if $G'$ has no bad-pairs, otherwise,
apply the same pre-processing iteration to $G'$.

\end{itemize}
\end{minipage}
}
\medskip
}
}
\end{absolutelynopagebreak}

For any instance of MAP $G'$, we use $\LB(G')$ to denote
\[ \left\{ \begin{array}{ll}
	\ndtwocost(G'),\qquad & \textup{~if $G'$ has no cut~nodes and no bad-pairs (and no $\{0,1\}$-edge-pairs)}, \\
	\opt(G'),\qquad & \textup{~otherwise}.
	\end{array}
   \right. 
\]

\begin{fact} \label{fact:approxbytauhat}
Suppose that a 2-ECSS of cost $\leq\alpha\cdot \dtwocost(G_i)$ can be
computed in polynomial time for any well-structured MAP instance $G_i$.
Then for any MAP~instance $G'$ that has no cut~nodes and no bad-pairs (and no $\{0,1\}$-edge-pairs),
a 2-ECSS of cost $\le\alpha \cdot \ndtwocost(G')$ can be computed in polynomial~time.
\end{fact}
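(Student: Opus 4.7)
The plan is to mirror the definition of $\ndtwocost$: apply the pre-processing steps (pp2) and (pp3) repeatedly to decompose $G'$ into a collection of well-structured MAP instances, invoke the hypothesized $\alpha$-approximation algorithm on each, and then lift the partial solutions back to a 2-ECSS of $G'$. Concretely, I will first identify all $q$ redundant 4-cycles of $G'$ and contract them; the proposition associated with (pp2) guarantees that any 2-ECSS of the contracted graph $\hat{G'}$ lifts to a 2-ECSS of $G'$ by adding back the edges of the redundant 4-cycles, incurring an additional cost of exactly $2q$. Then I will decompose $\hat{G'}$ into its blocks (step (pp3)) and iterate, applying (pp2) and (pp3) until no redundant 4-cycles or cut~nodes remain. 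By the starting assumption that $G'$ has no bad-pairs and no $\{0,1\}$-edge-pairs, together with the observation (used already in the statement of $\ndtwocost$) that (pp2) and (pp3) cannot create new bad-pairs or $\{0,1\}$-edge-pairs, the final collection $G'_1,\dots,G'_k$ consists precisely of well-structured MAP instances. This is the same collection that appears in the definition of $\ndtwocost(G')$.

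Next, I will run the hypothesized polynomial-time algorithm on each $G'_i$ to obtain a 2-ECSS $H'_i$ of $G'_i$ with $\cost(H'_i) \leq \alpha \cdot \dtwocost(G'_i)$. The proposition associated with (pp3) ensures that the union $\bigcup_{i=1}^{k} E(H'_i)$ is a 2-ECSS of the graph obtained from $G'$ by the block decomposition; lifting back through the successive (pp2) contractions — by restoring the four edges of each contracted redundant 4-cycle — yields a 2-ECSS $H'$ of $G'$ whose cost equals $2q + \sum_{i=1}^k \cost(H'_i)$.

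For the cost bound, I will use that $\alpha \geq 1$ (any approximation ratio is at least $1$), so
\[
\cost(H') \;=\; 2q + \sum_{i=1}^{k}\cost(H'_i) \;\leq\; 2q + \alpha\sum_{i=1}^{k}\dtwocost(G'_i) \;\leq\; \alpha\left(2q + \sum_{i=1}^{k}\dtwocost(G'_i)\right) \;=\; \alpha\cdot\ndtwocost(G'),
\]
which is the desired inequality. Polynomial running time is immediate: the decomposition terminates in $O(n)$ rounds, and by Proposition~\ref{thm:computeD2} all intermediate computations (including the $\ndtwocost$ calculations used implicitly in the book-keeping) are polynomial.

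I do not foresee a real obstacle; the content of the fact is essentially the observation that the definition of $\ndtwocost$ was tailored to make this reduction work, together with the harmless inflation of $2q$ to $\alpha\cdot 2q$. The only point that needs care is to verify that each iteration of (pp2) followed by (pp3) indeed preserves the absence of bad-pairs and $\{0,1\}$-edge-pairs, and that the resulting $G'_i$ truly are well-structured — this is already encoded in the parenthetical remark within the definition of $\ndtwocost$, so no new argument is required.
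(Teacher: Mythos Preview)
Your proposal is correct and follows essentially the same approach as the paper: decompose $G'$ via (pp2) and (pp3) into well-structured instances, apply the assumed $\alpha$-approximation to each piece, and undo the transformations. The paper's proof is terser (it simply invokes the argument of Fact~\ref{fact:sharp-lb} for the cost bound), while you make the inequality $2q \le \alpha\cdot 2q$ explicit; both are fine.
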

\begin{proof}
By the pre-processing,
we ``decompose'' $G'$ into a collection of well-structured MAP~instances $G'_1,\dots,G'_q$;
then, for each $G'_i$, we compute a 2-ECSS $H'_i$ of
cost $\leq\alpha\cdot \dtwocost(G'_i)$;
finally, we undo the transformations applied by the pre-processing
to obtain a 2-ECSS $H'$ of $G'$ from the collection $H'_1,\dots,H'_q$.
Our previous arguments (recall the proof of Fact~\ref{fact:sharp-lb})
imply that $\cost(H') \leq \alpha\cdot \ndtwocost(G')$.
\end{proof}

\begin{lemma} \label{lem:ppcorrect}
Let $G$ be a MAP~instance that has no cut-nodes and has
$\rho\ge1$ bad-pairs (by assumption $G$ has no $\{0,1\}$-edge-pairs).
Let $G_1,\dots,G_k$ denote the collection of graphs
obtained by applying one iteration of the above pre-processing algorithm
for a bad-pair $\{v,w\}$ of $G$
(thus, $\{v,w\}$ satisfies the condition in Fact~\ref{fact:pickbp}, and
each $G_i$ is of the form $\ubpcomp{C_i}{v}{w}$ or $\sbpcomp{C_i}{v}{w}$).
Then we have
\\
\centerline{
$\ds \LB(G) ~~\geq~~ \LB(G_1) ~+~ \sum_{i=2}^{k} \ndtwocost(G_i)$;
}
\\
moreover, $G_1$ has $\leq \rho-1$ bad-pairs, and $G_2,\dots,G_k$ have no bad-pairs.
\end{lemma}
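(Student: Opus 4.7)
The plan is to separate the lemma into two parts—the bad-pair count and the main inequality—and prove the latter by a case analysis that mirrors the two branches of the pre-processing algorithm (Step~1 versus Step~2).

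For the bad-pair count, the choice of $\{v,w\}$ via Fact~\ref{fact:pickbp} makes $C_2,\dots,C_k$ leaf bp-components, so they contain no bad-pairs of $G$. Fact~\ref{fact:2NCparts}(ii) says that every bad-pair of $\sbpcomp{C_i}{v}{w}$ or $\ubpcomp{C_i}{v}{w}$ is a bad-pair of $G$ contained in $V(C_i)$; hence $G_2,\dots,G_k$ have no bad-pairs. For $G_1$ (which is $\sbpcomp{C_1}{v}{w}$ in Case~2 and $\ubpcomp{C_1}{v}{w}$ in Case~1), the same fact combined with the fact that $\{v,w\}$ is no longer a bad-pair after the transformation gives the count $\leq \rho-1$.

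For the inequality, fix a min-cost 2-ECSS $H^*$ of $G$ that contains every zero-edge, in particular $vw$. Set $H_i^* := H^*[V(\ubpcomp{C_i}{v}{w})]$. By Fact~\ref{fact:piecebound}, each $H_i^*$ is either 2EC or has $vw$ as its unique bridge, so $\cost(H_i^*) \geq \opt(\sbpcomp{C_i}{v}{w})$; and since $vw$ is not a bridge of $H^*$, at least one ``optimal allocation index'' $j^*$ has $H_{j^*}^*$ fully 2EC, yielding the sharper bound $\cost(H_{j^*}^*) \geq \opt(\ubpcomp{C_{j^*}}{v}{w})$. Because $\cost(vw)=0$, the pieces decompose the cost as $\cost(H^*) = \sum_i \cost(H_i^*)$. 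For the leaf components $i\geq 2$, Fact~\ref{fact:sharp-lb} provides $\opt(\sbpcomp{C_i}{v}{w}) \geq \ndtwocost(\sbpcomp{C_i}{v}{w})$.

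Case~2 is then immediate: if the algorithm picks $j\in\{2,\dots,k\}$ with $\ndtwocost(\sbpcomp{C_j}{v}{w}) = \ndtwocost(\ubpcomp{C_j}{v}{w})$, chaining the bounds gives
\[
\opt(G) \;=\; \sum_i \cost(H_i^*) \;\geq\; \opt(\sbpcomp{C_1}{v}{w}) + \sum_{i\geq 2}\ndtwocost(\sbpcomp{C_i}{v}{w}),
\]
and substituting $\opt(\sbpcomp{C_1}{v}{w})\geq \LB(G_1)$ together with the equality $\ndtwocost(\sbpcomp{C_j}{v}{w}) = \ndtwocost(\ubpcomp{C_j}{v}{w}) = \ndtwocost(G_j)$ delivers the claim.

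The main obstacle is Case~1, where $\ndtwocost(\sbpcomp{C_i}{v}{w}) < \ndtwocost(\ubpcomp{C_i}{v}{w})$ for every $i\geq 2$ and the algorithm takes $G_1=\ubpcomp{C_1}{v}{w}$. If $j^*=1$ we are done by a direct chain, using $\cost(H_1^*)\geq \opt(\ubpcomp{C_1}{v}{w})\geq \LB(G_1)$. When $j^*\geq 2$, the plan is a ``swap'' argument that moves the 2EC obligation from $C_{j^*}$ onto $C_1$, resting on two ingredients. First, integrality of $\ndtwocost$ combined with the strict inequality gives $\ndtwocost(\ubpcomp{C_{j^*}}{v}{w}) \geq \ndtwocost(\sbpcomp{C_{j^*}}{v}{w}) + 1$, producing one unit of ``slack'' at $j^*$. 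Second, a sub-claim $\opt(\ubpcomp{C_1}{v}{w}) \leq \opt(\sbpcomp{C_1}{v}{w}) + 1$ bounds the cost of moving the allocation onto $C_1$: starting from a min-cost 2-ECSS $H$ of $\sbpcomp{C_1}{v}{w}$, one uncontracts $v^*$ into $\{v,w\}$, reintroduces the zero-edge $vw$, and (if $vw$ becomes a bridge) adds a single unit-edge across the resulting $v$--$w$ cut; such an edge exists because $\ubpcomp{C_1}{v}{w}$ is 2NC by Fact~\ref{fact:2NCparts}(i). Verifying that one unit-edge always suffices is the delicate step and requires a case split on whether $H$'s edges at $v^*$ include both a $v$-edge and a $w$-edge. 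Combining the two ingredients,
\[
\cost(H_1^*) + \cost(H_{j^*}^*) \;\geq\; \opt(\sbpcomp{C_1}{v}{w}) + \ndtwocost(\ubpcomp{C_{j^*}}{v}{w}) \;\geq\; \opt(\ubpcomp{C_1}{v}{w}) + \ndtwocost(\sbpcomp{C_{j^*}}{v}{w}),
\]
and adding the leaf bounds on the remaining $\cost(H_i^*)$ yields $\opt(G) \geq \opt(\ubpcomp{C_1}{v}{w}) + \sum_{i\geq 2}\ndtwocost(\sbpcomp{C_i}{v}{w}) \geq \LB(G_1) + \sum_{i=2}^k \ndtwocost(G_i)$, completing the lemma.
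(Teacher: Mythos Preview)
Your proof is correct and follows essentially the same approach as the paper: the case split (Step~1 vs.\ Step~2 of the algorithm), the decomposition $\opt(G)=\sum_i\cost(H^*_i)$, the use of Fact~\ref{fact:piecebound} to get $\cost(H^*_i)\ge\opt(\sbpcomp{C_i}{v}{w})$, and the ``one-unit swap'' in Case~1 driven by integrality plus the strict inequality $\ndtwocost(\ubpcomp{C_{j^*}}{v}{w})>\ndtwocost(\sbpcomp{C_{j^*}}{v}{w})$ all match the paper. The only cosmetic difference is that you package the Case-1 bound as an abstract sub-claim $\opt(\ubpcomp{C_1}{v}{w})\le\opt(\sbpcomp{C_1}{v}{w})+1$ (proved by uncontracting an optimal solution of $\sbpcomp{C_1}{v}{w}$), whereas the paper obtains the equivalent inequality $\cost(G^*[V_1])\ge\opt(\ubpcomp{C_1}{v}{w})-1$ directly by adding one edge of $G$ to the specific subgraph $G^*[V_1]$; the underlying one-edge-augmentation argument is the same and your verification that one edge always suffices (using that the only possible bridge after uncontracting is $vw$) is correct.
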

\begin{proof}
Let $C_1,\dots,C_k$ denote all of the bp-components of $\{v,w\}$;
note that $C_2,\dots,C_k$ are leaf~bp-components (free of bad-pairs).
For each $i=1,\dots,k,$ let $\vbpcomp{V_i}{v}{w}$
denote the set of nodes $V(C_i)\cup\{v,w\}$;
thus, $\vbpcomp{V_i}{v}{w}$ denotes the node-set of $\ubpcomp{C_i}{v}{w}$.

Let ${G^*}$ denote an optimal solution,
i.e., a 2-ECSS of minimum cost, and
w.l.o.g.\ assume that ${G^*}$ contains all zero-edges of $G$.
We have
$\opt(G) = \cost({G^*}) = \sum_{i=1}^k \cost( G^*[\vbpcomp{V_i}{v}{w}] )$.

By Fact~\ref{fact:piecebound}, we have
\begin{multline*}
\textnormal{~~~~~~~~~~if $G_i$ is of the form $\sbpcomp{C_i}{v}{w}$, then~~} 
\cost( G^*[\vbpcomp{V_i}{v}{w}] ) \geq \opt(\sbpcomp{C_i}{v}{w}) \geq \ndtwocost(G_i),
	\quad \forall i\in\{2,\dots,k\}.
\qquad (*)
\end{multline*}

We complete the proof by examining a few cases.

\begin{description}{
\item[Case~1: the zero-edge ${vw}$ is allocated to $C_1$:\quad]
First, suppose that $G^*[\vbpcomp{V_1}{v}{w}]$ is 2EC. Then, we have
$\cost( G^*[\vbpcomp{V_1}{v}{w}] ) \ge \opt(G_1) \ge
	\LB(G_1)$;
combining this with $(*)$ we have
\[ \opt(G) = \sum_{i=1}^k \cost( G^*[\vbpcomp{V_i}{v}{w}] ) \ge
  \sum_{i=1}^k \opt(G_i) \ge \sum_{i=1}^k \LB(G_i).
\]

Now, suppose that $G^*[\vbpcomp{V_1}{v}{w}]$ is not 2EC;
then, by Fact~\ref{fact:piecebound},
it is connected and has only one bridge, namely, ${vw}$.
Moreover, $G$ has an edge $\hat{e}\not={vw}$
whose end~nodes are in two different connected~components
of ${G^*}[\vbpcomp{V_1}{v}{w}] - {vw}$.
(To see this, note that $G[\vbpcomp{V_1}{v}{w}]$
is 2NC, hence, $(G[\vbpcomp{V_1}{v}{w}] - {vw})$
has a $v,w$ path, and one of the edges of this path
satisfies the requirement on $\hat{e}$.)
Thus, adding $\hat{e}$ to ${G^*}[\vbpcomp{V_1}{v}{w}]$ results in a 2EC
graph of cost
$1+\cost({G^*}[\vbpcomp{V_1}{v}{w}])\ge\opt(G_1)\ge\LB(G_1)$.
Also, observe that ${G^*}$ has two edge-disjoint $v,w$ paths, hence,
one of the subgraphs $G^*[\vbpcomp{V_{\ell}}{v}{w}] \;-\; {vw}$ has a $v,w$ path,
where ${\ell}\in\{2,\dots,k\}$.  Hence,
\[
\cost({G^*}[\vbpcomp{V_{\ell}}{v}{w}]) \geq \opt(\ubpcomp{C_{\ell}}{v}{w})
	\geq \ndtwocost(\ubpcomp{C_{\ell}}{v}{w})
	\geq 1 + \ndtwocost(\sbpcomp{C_{\ell}}{v}{w}) = 1+\ndtwocost(G_{\ell})
\]
(we used the fact that $\ndtwocost(\ubpcomp{C_i}{v}{w}) ~>~
  \ndtwocost(\sbpcomp{C_i}{v}{w}),\; \forall{i}\in\{2,\dots,k\}$).
By the above inequalities and $(*)$, we have
\vspace{-6ex}

\begin{multline*}
\opt(G) = \sum_{i=1}^k \cost( G^*[\vbpcomp{V_i}{v}{w}] ) \ge \\[-4ex]
	(\LB(G_1)-1) + (\ndtwocost(G_{\ell})+1) +
	\sum_{i\in\{2,\dots,\ell-1,\ell+1,\dots,k\}} \ndtwocost(G_i) = \LB(G_1) ~+~ \sum_{i=2}^{k} \ndtwocost(G_i).
\end{multline*}

\item[Case~2: the zero-edge ${vw}$ is allocated to $C_{\ell}$, ${\ell}\in\{2,\dots,k\}$:\quad]
Thus, we have $\ndtwocost(\sbpcomp{C_{\ell}}{v}{w}) = \ndtwocost(\ubpcomp{C_{\ell}}{v}{w})$
(otherwise, ${vw}$ cannot be allocated to $C_{\ell}$).
Then, by Fact~\ref{fact:piecebound}, we have
\[ \cost( G^*[\vbpcomp{V_{\ell}}{v}{w}] ) \geq
    \opt(\sbpcomp{C_{\ell}}{v}{w}) \geq
    \ndtwocost(\sbpcomp{C_{\ell}}{v}{w}) = \ndtwocost(\ubpcomp{C_{\ell}}{v}{w}).
\]
This, together with $(*)$ and the inequalities
$\cost( G^*[\vbpcomp{V_1}{v}{w}] ) \ge \opt(G_1) \ge \LB(G_1)$,
implies that $\opt(G) \ge \LB(G_1) ~+~ \sum_{i=2}^{k} \ndtwocost(G_i)$.
}
\end{description}
This completes the proof of the lemma.
\end{proof}

\begin{theorem}
Suppose that a 2-ECSS of cost $\leq\alpha\cdot \dtwocost(G_i)$ can be
computed in polynomial time for any well-structured MAP instance $G_i$.
Then for any MAP~instance $G$,
a 2-ECSS of cost $\leq\alpha\cdot \LB(G)$
can be computed in polynomial~time
via the pre-processing
(consisting of (pp1) followed by iterations of (pp2), (pp3), (pp4)).
\end{theorem}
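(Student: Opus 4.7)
The plan is to mirror the algorithm and induct on $|V(G)|$. First apply (pp1): by the proposition following Fact~\ref{fact:edgepair}, this reduces $G$ to a collection of 2EC instances free of $\{0,1\}$-edge-pairs with the approximation guarantee preserved; moreover, by Fact~\ref{fact:allowed-contracts}, no subsequent step can reintroduce such pairs. We may therefore assume $G$ itself has no $\{0,1\}$-edge-pairs and proceed by induction. The base case is when $G$ has neither cut-nodes nor bad-pairs: then $\LB(G) = \ndtwocost(G)$ by definition, and Fact~\ref{fact:approxbytauhat} directly yields a 2-ECSS of cost at most $\alpha \cdot \ndtwocost(G) = \alpha \cdot \LB(G)$.

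If $G$ has a cut-node, (pp3) decomposes $G$ into blocks $G_1,\dots,G_k$, each with strictly fewer nodes. The induction hypothesis gives 2-ECSSes $H_i$ with $\cost(H_i) \leq \alpha \cdot \LB(G_i) \leq \alpha \cdot \opt(G_i)$, where the second inequality uses $\LB \leq \opt$ (immediate from the definition of $\LB$ together with Fact~\ref{fact:sharp-lb}). Taking the union and using $\opt(G) = \sum_i \opt(G_i)$ from the proposition for (pp3), together with $\LB(G) = \opt(G)$ (since $G$ has a cut-node), the combined cost is at most $\alpha \cdot \opt(G) = \alpha \cdot \LB(G)$.

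Finally, suppose $G$ is 2NC and has at least one bad-pair. Pick $\{v,w\}$ via Fact~\ref{fact:pickbp} and apply one iteration of (pp4), obtaining $G_1,\dots,G_k$ labelled so that $G_1$ may still contain bad-pairs while $G_2,\dots,G_k$ do not (all are 2NC by Fact~\ref{fact:2NCparts}, and free of $\{0,1\}$-edge-pairs by Fact~\ref{fact:allowed-contracts}). For each $i \geq 2$, Fact~\ref{fact:approxbytauhat} furnishes $H_i$ with $\cost(H_i) \leq \alpha \cdot \ndtwocost(G_i)$; since $|V(G_1)| < |V(G)|$, the induction hypothesis furnishes $H_1$ with $\cost(H_1) \leq \alpha \cdot \LB(G_1)$. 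By Fact~\ref{fact:2ECbyunion}, $H = \bigcup_i H_i$ is a 2-ECSS of $G$, and Lemma~\ref{lem:ppcorrect} yields
\[
\cost(H) \;\leq\; \alpha \cdot \LB(G_1) \;+\; \alpha \sum_{i=2}^{k} \ndtwocost(G_i) \;\leq\; \alpha \cdot \LB(G).
\]
The main obstacle is this bad-pair case: it relies on the sharper bound $\ndtwocost$ (rather than $\dtwocost$) for the leaf sub-instances and on the allocation rule inside (pp4), which together make Lemma~\ref{lem:ppcorrect} hold even when the algorithm's allocation of the zero-edge $vw$ disagrees with that of an optimal solution. Polynomial runtime is immediate, since each pre-processing step is polynomial, Fact~\ref{fact:approxbytauhat} is polynomial, and $|V(G)|$ strictly decreases along every recursive call.
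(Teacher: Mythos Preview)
Your proof is correct and follows essentially the same route as the paper: reduce away $\{0,1\}$-edge-pairs once via (pp1), handle cut-nodes by (pp3), handle bad-pairs by one round of (pp4) together with Lemma~\ref{lem:ppcorrect}, Fact~\ref{fact:2ECbyunion}, and Fact~\ref{fact:approxbytauhat} for the leaf sub-instances. The only structural difference is that the paper inducts on the number of bad-pairs (using the decrement guaranteed by Lemma~\ref{lem:ppcorrect}) and absorbs the cut-node reduction into a parenthetical remark inside the inductive step, whereas you induct on $|V(G)|$ and treat the cut-node case as its own branch; both choices of induction parameter work and the remaining arguments coincide.
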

\begin{proof}
We use induction on the number of bad-pairs in the MAP~instance $G$.

Suppose that $G$ has no bad-pairs.
Then, by the pre-processing,
we ``decompose'' $G$ into a collection of well-structured MAP~instances $G_1,\dots,G_q$;
then, for each $G_i$, we compute a 2-ECSS $H_i$ of cost $\leq\alpha\cdot \dtwocost(G_i)$;
finally, we undo the transformations applied by the pre-processing
to obtain a 2-ECSS $H$ of $G$ from the collection $H_1,\dots,H_q$.
It follows from the results in Sections~\ref{s:pre-proc1-1}--\ref{s:pre-proc1-3}
that $\cost(H) \leq \alpha\cdot \LB(G)$.

Now, suppose that the theorem holds for any MAP~instance with $\leq \ell$ bad-pairs.

Suppose that $G$ has $\ell+1$ bad-pairs.
For notational convenience in the induction~step, let us assume that
$G$ has no $\{0,1\}$-edge-pairs and no cut~nodes.
\big(When we have an arbitrary instance of MAP, $G$, then we apply~(pp1)
and~(pp3) to ``decompose'' the instance into a collection of
MAP~instances that each have no $\{0,1\}$-edge-pairs and no cut~nodes;
then, we apply the following proof to each of the ``restricted''
MAP~instances; finally, we undo the transformations applied by~(pp1)
and~(pp3), and we apply the results in Sections~\ref{s:pre-proc1-1}
and~\ref{s:pre-proc1-3} to obtain a 2-ECSS of $G$ of cost
$\leq\alpha\cdot \LB(G)$.\big)

Then, by one iteration of~(pp4) and Lemma~\ref{lem:ppcorrect},
we ``decompose'' $G$ into a collection of MAP~instances $G_1,\dots,G_k$ such that
\begin{itemize}
\item
$G_1$ has no $\{0,1\}$-edge-pairs, no cut~nodes, and $\leq \ell$ bad-pairs,
 and $G_2,\dots,G_k$ have no $\{0,1\}$-edge-pairs, no cut~nodes, and no bad-pairs,
\item
$\LB(G) \ge \LB(G_1) ~+~ \sum_{i=2}^{k} \ndtwocost(G_i)$,
\item
given a 2-ECSS $H_i$ of $G_i$ for each $i\in \{1,\dots,k\}$,
the union of $H_1,\dots,H_k$ forms a 2-ECSS $H$ of $G$, by Fact~\ref{fact:2ECbyunion}.
\end{itemize}

By the induction hypothesis and by Fact~\ref{fact:approxbytauhat}
(which is essential for this analysis),
we may assume that 
$\cost(H_1) \leq \alpha\cdot \LB(G_1)$ and
$\cost(H_i) \leq \alpha\cdot \ndtwocost(G_i), \forall i \in \{2,\dots,k\}$.
Then, $H$ is a 2-ECSS of $G$, and
$\cost(H) \leq \alpha\cdot \LB(G)$
(since $\LB(G_1) + \sum_{i=2}^{k} \ndtwocost(G_i) \leq \LB(G)$).
\end{proof}
}
}

\section{ \label{s:path-thicken}  Bridge~covering}
{
In this section and in Section~\ref{s:algo-last}, we assume that
the input is a well-structured MAP instance.

We start by illustrating our method for bridge~covering on a small example.
After computing \DTWO, recall that we give $1.75$ initial~credits to each
unit-edge of \DTWO, thereby giving each of these edges 
$0.75$ working~credits
(we ``retain'' one credit to buy the unit-edge for our solution).
Consequently, each 2ec-block of \DTWO\ gets $\ge1.5$ working~credits (this is
explained below).  We want to buy more edges to add to \DTWO\ such that
all bridges are ``covered", and we pay for the newly
added edges via the working~credits.

Observe that each 2ec-block of \DTWO\ has $\ge1.5$ working~credits,
because it has $\ge2$ unit-edges;
to see this, suppose that a 2ec-block $B$ has $b$ nodes;
if $b=2$, then $B$ has two parallel unit-edges;
otherwise, $B$ has $\ge b$ edges (since $B$ is 2EC) and
	has $\le\floor{b/2}$ zero-edges (since the zero-edges form a matching),
so $B$ has $\ge\ceiling{b/2}$ unit-edges, and we have $b\ge3$.

In what follows, we use the term credits to mean the working~credits of
the algorithm; this excludes the unit credit retained by every
unit-edge of the current solution subgraph; for example, a 6-cycle of
\DTWO\ that contains four unit-edges has 3~credits (and it has
7~initial~credits).

{
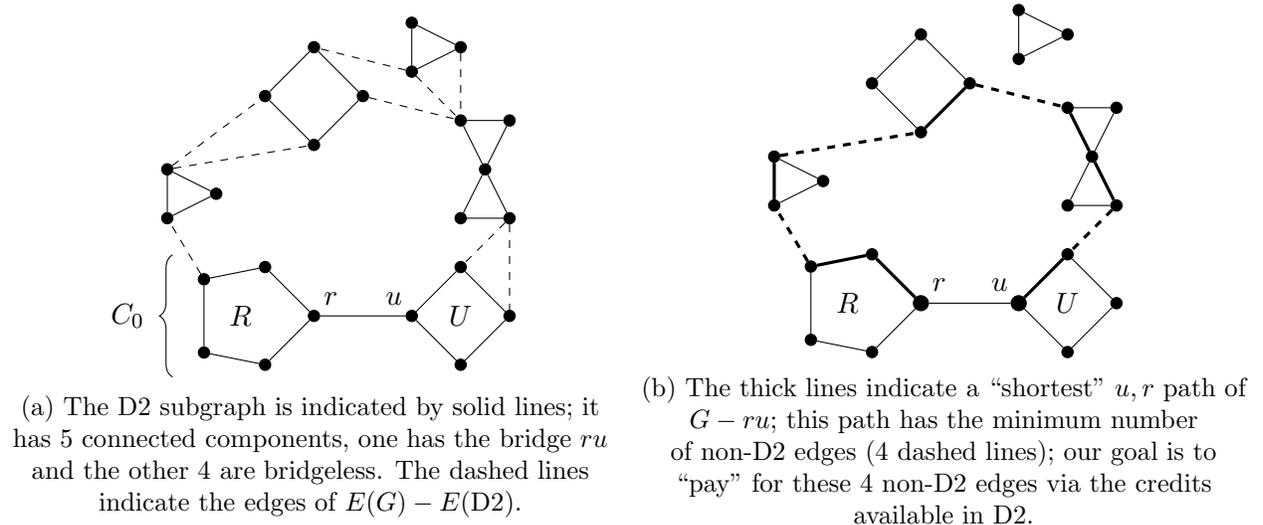
\begin{figure}[ht]
    \begin{subfigure}{0.49\textwidth}
        \centering
        \begin{tikzpicture}[scale=0.65]
            \begin{scope}[every node/.style={circle, fill=black, draw, inner sep=0pt,
            minimum size = 0.15cm
            }]
                
                \node[label={[label distance=3]45:$r$}] (r) at (0,0) {};
                \node[] (c01) at (-1,1) {};
                \node[] (c02) at (-2.25,0.75) {};
                \node[] (c03) at (-2.25,-0.75) {};
                \node[] (c04) at (-1,-1) {};
                \node[label={[label distance=3]135:$u$}] (u) at (2,0) {};
                \node[] (c05) at (3,-1) {};
                \node[] (c06) at (4,0) {};
                \node[] (c07) at (3,1) {};
                
                \node[] (a1) at (-3,2) {};
                \node[] (a2) at (-3,3) {};
                \node[] (a3) at (-2,2.5) {};
                
                \node[] (b1) at (0,3.5) {};
                \node[] (b2) at (-1,4.5) {};
                \node[] (b3) at (0,5.5) {};
                \node[] (b4) at (1,4.5) {};
                
                \node[] (c1) at (3,4) {};
                \node[] (c2) at (4,4) {};
                \node[] (c3) at (3.5,3) {};
                \node[] (c4) at (3,2) {};
                \node[] (c5) at (4,2) {};
                
                \node[] (d1) at (2,5) {};
                \node[] (d2) at (2,6) {};
                \node[] (d3) at (3,5.5) {};
                
            \end{scope}
    
            \begin{scope}[every edge/.style={draw=black}]
                \path (r) edge node {} (c01);
                \path (c01) edge node {} (c02);
                \path (c02) edge node {} (c03);
                \path (c03) edge node {} (c04);
                \path (c04) edge node {} (r);
                \path (r) edge node {} (u);
                \path (u) edge node {} (c05);
                \path (c05) edge node {} (c06);
                \path (c06) edge node {} (c07);
                \path (c07) edge node {} (u);
                
                \path (a1) edge node {} (a2);
                \path (a2) edge node {} (a3);
                \path (a3) edge node {} (a1);
                
                \path (b1) edge node {} (b2);
                \path (b2) edge node {} (b3);
                \path (b3) edge node {} (b4);
                \path (b4) edge node {} (b1);
                
                \path (c1) edge node {} (c2);
                \path (c2) edge node {} (c3);
                \path (c3) edge node {} (c1);
                \path (c3) edge node {} (c4);
                \path (c4) edge node {} (c5);
                \path (c5) edge node {} (c3);
                
                \path (d1) edge node {} (d2);
                \path (d2) edge node {} (d3);
                \path (d3) edge node {} (d1);
                
                \path[dashed] (c02) edge node {} (a1);
                \path[dashed] (a2) edge node {} (b1);
                \path[dashed] (a2) edge node {} (b2);
                \path[dashed] (b4) edge node {} (c1);
                \path[dashed] (c5) edge node {} (c06);
                \path[dashed] (c5) edge node {} (c07);
                \path[dashed] (d1) edge node {} (c1);
                \path[dashed] (d3) edge node {} (c1);
                \path[dashed] (d1) edge node {} (b3);
            \end{scope}
    
            \begin{scope}[every node/.style={draw=none,rectangle}]
                \node (Rlabel) at (-1.5,0) {$R$};
                \node (Ulabel) at (3,0) {$U$};
                \draw [decorate,decoration={brace,amplitude=5pt},xshift=-4pt,yshift=0pt] (-2.75,-1.25) -- (-2.75,1.25) node [black,midway,xshift=-0.6cm] {$C_0$};
            \end{scope}
        \end{tikzpicture}
	\caption{\centering The \DTWO\ subgraph is indicated
	by solid lines; it has 5~connected~components, one has the
	bridge $ru$ and the other~4 are bridgeless. The dashed lines
	indicate the edges of $E(G)-E(\DTWO)$.}
        \label{sec5sub-a}
    \end{subfigure}
    \hspace*{\fill}
    \begin{subfigure}{0.49\textwidth}
        \centering
        \begin{tikzpicture}[scale=0.65]
            \begin{scope}[every node/.style={circle, fill=black, draw, inner sep=0pt,
            minimum size = 0.15cm
            }]
                
                \node[minimum size = 0.2cm, label={[label distance=3]45:$r$}] (r) at (0,0) {};
                \node[] (c01) at (-1,1) {};
                \node[] (c02) at (-2.25,0.75) {};
                \node[] (c03) at (-2.25,-0.75) {};
                \node[] (c04) at (-1,-1) {};
                \node[minimum size = 0.2cm, label={[label distance=3]135:$u$}] (u) at (2,0) {};
                \node[] (c05) at (3,-1) {};
                \node[] (c06) at (4,0) {};
                \node[] (c07) at (3,1) {};
                
                \node[] (a1) at (-3,2) {};
                \node[] (a2) at (-3,3) {};
                \node[] (a3) at (-2,2.5) {};
                
                \node[] (b1) at (0,3.5) {};
                \node[] (b2) at (-1,4.5) {};
                \node[] (b3) at (0,5.5) {};
                \node[] (b4) at (1,4.5) {};
                
                \node[] (c1) at (3,4) {};
                \node[] (c2) at (4,4) {};
                \node[] (c3) at (3.5,3) {};
                \node[] (c4) at (3,2) {};
                \node[] (c5) at (4,2) {};
                
                \node[] (d1) at (2,5) {};
                \node[] (d2) at (2,6) {};
                \node[] (d3) at (3,5.5) {};
                
            \end{scope}
    
            \begin{scope}[every edge/.style={draw=black}]
                \path[very thick] (r) edge node {} (c01);
                \path[very thick] (c01) edge node {} (c02);
                \path (c02) edge node {} (c03);
                \path (c03) edge node {} (c04);
                \path (c04) edge node {} (r);
                \path (r) edge node {} (u);
                \path (u) edge node {} (c05);
                \path (c05) edge node {} (c06);
                \path (c06) edge node {} (c07);
                \path[very thick] (c07) edge node {} (u);
                
                \path[very thick] (a1) edge node {} (a2);
                \path (a2) edge node {} (a3);
                \path (a3) edge node {} (a1);
                
                \path (b1) edge node {} (b2);
                \path (b2) edge node {} (b3);
                \path (b3) edge node {} (b4);
                \path[very thick] (b4) edge node {} (b1);
                
                \path (c1) edge node {} (c2);
                \path (c2) edge node {} (c3);
                \path[very thick] (c3) edge node {} (c1);
                \path (c3) edge node {} (c4);
                \path (c4) edge node {} (c5);
                \path[very thick] (c5) edge node {} (c3);
                
                \path (d1) edge node {} (d2);
                \path (d2) edge node {} (d3);
                \path (d3) edge node {} (d1);
                
                \path[very thick, dashed] (c02) edge node {} (a1);
                \path[very thick, dashed] (a2) edge node {} (b1);
                \path[very thick, dashed] (b4) edge node {} (c1);
                \path[very thick, dashed] (c5) edge node {} (c07);
            \end{scope}
    
            \begin{scope}[every node/.style={draw=none,rectangle}]
                \node (Rlabel) at (-1.5,0) {$R$};
                \node (Ulabel) at (3,0) {$U$};
            \end{scope}
        \end{tikzpicture}
	\caption{\centering The thick lines indicate a ``shortest''
	$u,r$ path of $G-ru$;
	this path has the minimum number of~non-\DTWO\ edges (4
	dashed lines); our goal is to ``pay'' for these 4~non-\DTWO\
	edges via the credits available in \DTWO.}
        \label{f:PT:eg1}
    \end{subfigure}
    \caption{Illustration of bridge~covering on a simple example.}
    \label{f:PT:eg1-main}
\end{figure}
}

Consider an example such that \DTWO\ has a connected~component $C_0$ that
has one bridge and two 2-ec blocks $R$ and $U$; let $ru$ denote the
unique bridge where $r$ is in $R$ and $u$ is in $U$.
(It can be seen that $ru$ is a zero-bridge, but we will not use this fact.)
Since $G$ is assumed to be 2NC, $G-{ru}$ is connected, hence, it
contains a $u,r$ path; let us pick a $u,r$ path $Y$ of $G-{ru}$ that
has only its prefix and suffix in common with $C_0$ and that has the
minimum number of non-\DTWO\ edges.
Our plan is to augment \DTWO\ by adding the edge~set $E(Y)-E(\DTWO)$,
thus ``covering" the bridge $ru$.
We may view this as an ``ear-augmentation step'' that adds
the ear\footnote{Throughout, we abuse the term \textit{ear}; although
$Y$ is not an ear, one may view the minimal subpath of $Y$ from $U$
to $R$, call it $Y'$, as an ear of $G$ w.r.t.\ $C_0$, i.e., $Y'$
is a path of $G$ that has both end~nodes in $C_0$ and has no internal
node in $C_0$.}
$Y$.
We have to pay for the non-\DTWO\ edges of $Y$ by using the credits
available in \DTWO.
Let us traverse $Y$ from $u$ to $r$, and each time we see a
non-\DTWO\ edge of $Y$, then we will pay for this edge.
For the sake of illustration, consider the example in
Figure~\ref{f:PT:eg1-main};
note that the $u,r$ path $Y$ (in Figure~\ref{f:PT:eg1-main}(b), on the right)
has $\ell=4$ non-\DTWO\ edges (indicated by dashed lines).
When we traverse $Y$ starting from $u$, then observe that each of the first
$(\ell-1)$ of these edges has its last node in a distinct 2ec-block
of \DTWO\ (moreover, none of these $(\ell-1)$ 2ec-blocks is in $C_0$).
We pay for these $(\ell-1)$ edges by borrowing one credit from the
credit of each of these $(\ell-1)$ 2ec-blocks.  We need one more credit to
pay for the last non-\DTWO\ edge of $Y$.  We get this credit from
the prefix of $Y$ between its start and its first non-\DTWO\ edge;
in particular, we borrow one credit from the credit of $U$.  Thus,
we can pay for all the non-\DTWO\ edges of $Y$.
Observe that by adding the edge~set $E(Y)-E(\DTWO)$ to \DTWO, we
have merged several 2ec-blocks (including $R$ and $U$) into a new
2ec-block.  We give the new 2ec-block (that contains $R$ and $U$)
the credit of $R$ as well as any unused credit of the other 2ec-blocks
incident to $Y$.

The general case of bridge~covering is more complicated.

\authremark{
For the sake of exposition, we may impose a direction on a path, cycle,
or ear (e.g., we traverse $Y$ from $u$ to $r$ in the discussion above).
Nevertheless, the input $G$ is an undirected graph, so (formally speaking) there is no
direction associated with paths or cycles of $G$.
}

\subsection{Post-processing \DTWO}

Immediately after computing \DTWO, we apply a post-processing step that
replaces some unit-edges of \DTWO\ by other unit-edges to obtain
another \DTWO\ that we denote by $\wh{\DTWO}$ that satisfies the
following key property:
\begin{quote}
\textit{
	Every pendant 2ec-block $B$ of~ $\wh{\DTWO}$ that is
	incident to a zero-bridge has $\cost(B)\ge 3$, and
	hence, has $\ge 2.25$~credits
	(see part~(2) of the credit~invariant below).
}
\end{quote}
In other words, if a 2ec-block of $\wh{\DTWO}$
has $\leq 2$ unit-edges, then either the 2ec-block is not pendant
(i.e., it is incident to no bridges or $\ge2$ bridges) or it is
pendant and is incident to a unit-bridge.

For any subgraph $G'$ of $G$, let $F_0(G')$ denote the set of
zero-bridges (of $G'$) that are incident to
pendant 2ec-blocks (of $G'$) of cost $\leq2$
(the notation $F_0$ is used only in this subsection);
moreover, let $\ncomp(G')$ denote the number of connected~components of $G'$.
Thus, the goal of the post-processing step is to
compute $\wh{\DTWO}$ such that $F_0(\wh{\DTWO})$ is empty.

The post-processing step is straightforward.
W.l.o.g.\ assume that the initial \DTWO\ contains all the zero-edges;
we start with this \DTWO\ and iterate the following step.
Let $\DTWO^{old}$ denote the \DTWO\ at the start of the iteration.
If $F_0(\DTWO^{old})$ is empty then we are done, we take $\DTWO^{old}$
to be $\wh{\DTWO}$.
Otherwise, we pick any zero-bridge $v_0v_1$ in $F_0(\DTWO^{old})$, and
we take $B$ to be a pendant 2ec-block with $\cost(B)=2$ that is incident to
$v_0v_1$; note that $B$ exists by the definition of $F_0$.
Now, observe that all of the edges of $B$ incident to $v_0$ have
cost~one, hence, it can be seen that $B$ has either 2~or~3 nodes;
then, since $G$ is 2NC and $\DTWO^{old}$ contains all the zero-edges,
$G$ has a unit-edge $e$
between $V(B)-\{v_0\}$ and $V-V(B)$; moreover,
$B$ has a unit-edge $f$ that is incident to $e$ and to $v_0$.
We replace $f$ by $e$.
The resulting subgraph is a 2-edge~cover of the
same cost; we denote it by $\DTWO^{new}$.

We claim that $F_0(\DTWO^{new}) \subseteq F_0(\DTWO^{old})$,
and moreover, 
\[
|F_0(\DTWO^{new})| + \ncomp(\DTWO^{new}) ~<~
	|F_0(\DTWO^{old})| + \ncomp(\DTWO^{old}).
\]

To see this,
first suppose that the end~nodes of $e$ are in two different
connected~components of $\DTWO^{old}$; 
then it can be seen that $F_0(\DTWO^{new}) \subseteq F_0(\DTWO^{old})$,
and hence, the claimed inequality holds;
otherwise, (the end~nodes of $e$ are in the same connected~component
of $\DTWO^{old}$) $v_0v_1$ is not a bridge of
$\DTWO^{new}$, therefore, $F_0(\DTWO^{new})$ is a proper
subset of $F_0(\DTWO^{old})$, and hence, the claimed inequality holds.

Thus, after $O(n)$ iterations, we find $\wh{\DTWO}$ that satisfies the
key property.  See Figure~\ref{f:postproc} for illustrations.

\begin{proposition}
There is a polynomial-time algorithm that finds a 
2-edge~cover $\wh{\DTWO}$
(of cost $\dtwocost(G)$)
such that $F_0(\wh{\DTWO})$ is empty;
thus, $\wh{\DTWO}$ satisfies the key property.
\end{proposition}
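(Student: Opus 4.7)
The plan is to give a simple local-search procedure: start from an arbitrary minimum-cost 2-edge cover and, as long as $F_0$ is nonempty, perform a cost-preserving swap that strictly decreases an integer potential. First I would compute any initial $\DTWO$ of cost $\dtwocost(G)$ via Proposition~\ref{thm:computeD2}, observing that w.l.o.g.\ it contains every zero-edge (any zero-edge not in $\DTWO$ can be added for free, and any now-redundant unit-edge removed, without breaking the degree-$\ge 2$ condition or raising the cost). I would then track the integer potential
\[
\Phi(\DTWO) \;:=\; |F_0(\DTWO)| + \ncomp(\DTWO),
\]
which is nonnegative and bounded by $O(n)$. As long as $F_0(\DTWO^{old})$ is nonempty, I would execute one swap step.

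For a single step, pick a zero-bridge $v_0v_1\in F_0(\DTWO^{old})$ incident to a pendant 2ec-block $B$ of cost $\le 2$, with $v_0\in V(B)$. Two structural observations are needed: (i)~every edge of $B$ incident to $v_0$ is a unit-edge, because the zero-edge $v_0v_1$ already uses $v_0$'s unique matching-slot; and (ii)~combining (i) with the facts that $B$ is 2EC of cost $\le 2$, that zero-edges form a matching, and that there are no $\{0,1\}$-edge-pairs, one deduces $|V(B)|\in\{2,3\}$ and that $B$ uses exactly two unit-edges (both incident to $v_0$), plus at most one zero-edge $ab$ when $|V(B)|=3$. Since $G$ is 2NC and $\DTWO^{old}$ already contains all zero-edges, $G$ must have a unit-edge $e\notin E(\DTWO^{old})$ from $V(B)\setminus\{v_0\}$ to $V(G)\setminus V(B)$: otherwise deleting $v_0$ would separate $V(B)\setminus\{v_0\}$ from the rest of $G$, contradicting 2NC. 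Let $f$ be the unit-edge of $B$ from $v_0$ to the $V(B)$-endpoint of $e$, and set $\DTWO^{new}:=(\DTWO^{old}\setminus\{f\})\cup\{e\}$; a brief case check on $|V(B)|\in\{2,3\}$ confirms $\DTWO^{new}$ is still a 2-edge cover, of the same cost.

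Finally, I would show $\Phi(\DTWO^{new})<\Phi(\DTWO^{old})$ by splitting on where the other endpoint of $e$ lies in $\DTWO^{old}$. If it lies in the same connected component as $v_0$, then a new cycle through $v_0v_1$ appears in $\DTWO^{new}$, so $v_0v_1$ ceases to be a bridge; thus $|F_0|$ strictly drops while $\ncomp$ is unchanged. If it lies in a different component, then $\ncomp$ drops by one, and it remains to verify $F_0(\DTWO^{new})\subseteq F_0(\DTWO^{old})$. I expect this inclusion to be the main technical obstacle: one must rule out the creation of new ``bad'' zero-bridges. The key observation is that the only 2ec-block structure changed by the swap is the one that absorbs $B$ via the new edge $e$; this new 2ec-block strictly contains $B$ and at least one neighbouring 2ec-block of the $e$-side component, so its cost is $\ge 2+2=4$, and hence it cannot belong to a new element of $F_0^{new}$; every other pendant 2ec-block of $\DTWO^{new}$ is a pendant 2ec-block of $\DTWO^{old}$ with unchanged cost and unchanged incident zero-bridges.

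Since each iteration runs in polynomial time (finding $B$, $e$, and performing the swap are all local searches) and strictly decreases the $O(n)$-bounded potential $\Phi$, the procedure halts after $O(n)$ iterations with $F_0(\wh{\DTWO})=\emptyset$ and $\cost(\wh{\DTWO})=\dtwocost(G)$, which establishes the proposition.
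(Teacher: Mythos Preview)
Your approach is exactly the paper's: the identical swap $f\mapsto e$, the same potential $\Phi=|F_0|+\ncomp$, and the same two-case split on whether the far end of $e$ lies in the same component as $v_0$. The paper itself leaves the inclusion $F_0(\DTWO^{new})\subseteq F_0(\DTWO^{old})$ at the level of ``it can be seen that,'' so your level of detail is comparable.

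That said, your sketched justification of this inclusion in the \emph{different-component} case is wrong. When $e$ joins two distinct components of $\DTWO^{old}$, the edge $e$ is a \emph{bridge} of $\DTWO^{new}$, so there is no ``new 2ec-block that absorbs $B$ via $e$.'' Quite the opposite: deleting $f$ dissolves $B$ into a chain of black nodes (for $|V(B)|=2$ the single edge $f'$ becomes a bridge; for $|V(B)|=3$ the two surviving edges $v_0b,\,ab$ are both bridges). The correct argument is: (i)~no 2ec-block is created, so every 2ec-block of $\DTWO^{new}$ was already a 2ec-block of $\DTWO^{old}$ with the same cost; (ii)~the only 2ec-block whose set of incident bridges changes is the one (if any) containing the far endpoint of $e$, and it \emph{gains} the unit-bridge $e$, which cannot make it pendant and cannot put a zero-bridge into $F_0$; (iii)~the only possible new zero-bridge is the matching edge $ab$ inside the old $B$ (case $|V(B)|=3$), and both its endpoints are now black, so it is not incident to any 2ec-block. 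These three observations give $F_0(\DTWO^{new})\subseteq F_0(\DTWO^{old})$. Ironically, the picture you describe (a larger 2ec-block swallowing $B$) is what happens in the \emph{same-component} case, where $e$ does close a cycle through $v_0v_1$; you might want to swap where you place that intuition.
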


{
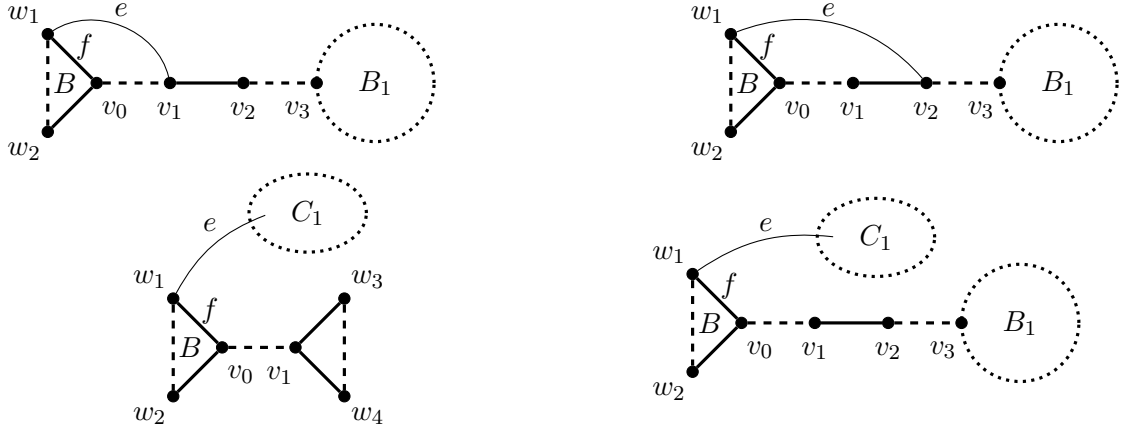
\begin{figure}[ht]
    \centering
    \begin{subfigure}{0.45\textwidth}
        \centering
        \begin{tikzpicture}[scale=0.65]
            \begin{scope}[every node/.style={circle, fill=black, draw, inner sep=0pt,
            minimum size = 0.15cm
            }]
                \node[label={[label distance=4]295:$v_0$}] (v0) at (0,0) {};
                \node[label={[label distance=2]150:$w_1$}] (w1) at (-1,1) {};
                \node[label={[label distance=2]210:$w_2$}] (w2) at (-1,-1) {};
                \node[label={[label distance=2]270:$v_1$}] (v1) at (1.5,0) {};
                \node[label={[label distance=2]270:$v_2$}] (v2) at (3,0) {};
                \node[label={[label distance=4]245:$v_3$}] (v3) at (4.5,0) {};
                \draw[very thick,dotted] (5.7,0) circle (1.2cm);
            \end{scope}
    
            \begin{scope}[every edge/.style={draw=black}]
                \path[very thick] (v0) edge node {} (w1);
                \path[very thick] (v0) edge node {} (w2);
                \path[very thick,dashed] (w1) edge node {} (w2);
                \path[very thick,dashed] (v0) edge node {} (v1);
                \path[very thick] (v1) edge node {} (v2);
                \path[very thick,dashed] (v2) edge node {} (v3);
                
                \path (w1) edge[bend left=55] node {} (v1);
            \end{scope}
    
            \begin{scope}[every node/.style={draw=none,rectangle}]
                \node (Blabel) at (-0.65,0) {$B$};
                \node (B1label) at (5.7,0) {$B_1$};
                \node (elabel) at (0.5,1.5) {$e$};
                \node (flabel) at (-0.25,0.75) {$f$};
            \end{scope}
        \end{tikzpicture}
        \label{f:postproc-a}
    \end{subfigure}
    \hspace*{\fill}
    \begin{subfigure}{0.45\textwidth}
        \centering
        \begin{tikzpicture}[scale=0.65]
            \begin{scope}[every node/.style={circle, fill=black, draw, inner sep=0pt,
            minimum size = 0.15cm
            }]
                \node[label={[label distance=4]295:$v_0$}] (v0) at (0,0) {};
                \node[label={[label distance=2]150:$w_1$}] (w1) at (-1,1) {};
                \node[label={[label distance=2]210:$w_2$}] (w2) at (-1,-1) {};
                \node[label={[label distance=2]270:$v_1$}] (v1) at (1.5,0) {};
                \node[label={[label distance=2]270:$v_2$}] (v2) at (3,0) {};
                \node[label={[label distance=4]245:$v_3$}] (v3) at (4.5,0) {};
                \draw[very thick,dotted] (5.7,0) circle (1.2cm);
            \end{scope}
    
            \begin{scope}[every edge/.style={draw=black}]
                \path[very thick] (v0) edge node {} (w1);
                \path[very thick] (v0) edge node {} (w2);
                \path[very thick,dashed] (w1) edge node {} (w2);
                \path[very thick,dashed] (v0) edge node {} (v1);
                \path[very thick] (v1) edge node {} (v2);
                \path[very thick,dashed] (v2) edge node {} (v3);
                
                \path (w1) edge[bend left=35] node {} (v2);
            \end{scope}
    
            \begin{scope}[every node/.style={draw=none,rectangle}]
                \node (Blabel) at (-0.65,0) {$B$};
                \node (B1label) at (5.7,0) {$B_1$};
                \node (elabel) at (1,1.5) {$e$};
                \node (flabel) at (-0.25,0.75) {$f$};
            \end{scope}
        \end{tikzpicture}
        \label{f:postproc-b}
    \end{subfigure}
    \hspace*{\fill}
    \begin{subfigure}{0.45\textwidth}
        \centering
        \begin{tikzpicture}[scale=0.65]
            \begin{scope}[every node/.style={circle, fill=black, draw, inner sep=0pt,
            minimum size = 0.15cm
            }]
                \node[label={[label distance=4]295:$v_0$}] (v0) at (0,0) {};
                \node[label={[label distance=2]150:$w_1$}] (w1) at (-1,1) {};
                \node[label={[label distance=2]210:$w_2$}] (w2) at (-1,-1) {};
                \node[label={[label distance=4]255:$v_1$}] (v1) at (1.5,0) {};
                \node[label={[label distance=2]30:$w_3$}] (w3) at (2.5,1) {};
                \node[label={[label distance=2]330:$w_4$}] (w4) at (2.5,-1) {};
                \draw[very thick,dotted] (1.75,2.75) ellipse (1.2cm and 0.8cm);
                \node[draw=none,fill=none] (e0) at (1,2.75) {};
            \end{scope}
    
            \begin{scope}[every edge/.style={draw=black}]
                \path[very thick] (v0) edge node {} (w1);
                \path[very thick] (v0) edge node {} (w2);
                \path[very thick,dashed] (w1) edge node {} (w2);
                \path[very thick,dashed] (v0) edge node {} (v1);
                \path[very thick] (v1) edge node {} (w3);
                \path[very thick] (v1) edge node {} (w4);
                \path[very thick,dashed] (w3) edge node {} (w4);
                \path (w1) edge[bend left=20] node {} (e0);
            \end{scope}
    
            \begin{scope}[every node/.style={draw=none,rectangle}]
                \node (Blabel) at (-0.65,0) {$B$};
                \node (elabel) at (-0.25,2.5) {$e$};
                \node (flabel) at (-0.25,0.75) {$f$};
                \node (c1label) at (1.75,2.75) {$C_1$};
            \end{scope}
        \end{tikzpicture}
        \label{f:postproc-c}
    \end{subfigure}
    \hspace*{\fill}
    \begin{subfigure}{0.45\textwidth}
        \centering
        \begin{tikzpicture}[scale=0.65]
            \begin{scope}[every node/.style={circle, fill=black, draw, inner sep=0pt,
            minimum size = 0.15cm
            }]
                \node[label={[label distance=4]295:$v_0$}] (v0) at (0,0) {};
                \node[label={[label distance=2]150:$w_1$}] (w1) at (-1,1) {};
                \node[label={[label distance=2]210:$w_2$}] (w2) at (-1,-1) {};
                \node[label={[label distance=2]270:$v_1$}] (v1) at (1.5,0) {};
                \node[label={[label distance=2]270:$v_2$}] (v2) at (3,0) {};
                \node[label={[label distance=4]245:$v_3$}] (v3) at (4.5,0) {};
                \draw[very thick,dotted] (5.7,0) circle (1.2cm);
                \draw[very thick,dotted] (2.75,1.75) ellipse (1.2cm and 0.8cm);
                \node[draw=none,fill=none] (e0) at (2,1.75) {};
            \end{scope}
    
            \begin{scope}[every edge/.style={draw=black}]
                \path[very thick] (v0) edge node {} (w1);
                \path[very thick] (v0) edge node {} (w2);
                \path[very thick,dashed] (w1) edge node {} (w2);
                \path[very thick,dashed] (v0) edge node {} (v1);
                \path[very thick] (v1) edge node {} (v2);
                \path[very thick,dashed] (v2) edge node {} (v3);
                \path (w1) edge[bend left=20] node {} (e0);
            \end{scope}
    
            \begin{scope}[every node/.style={draw=none,rectangle}]
                \node (Blabel) at (-0.65,0) {$B$};
                \node (B1label) at (5.7,0) {$B_1$};
                \node (elabel) at (0.5,2) {$e$};
                \node (flabel) at (-0.25,0.75) {$f$};
                \node (c1label) at (2.75,1.75) {$C_1$};
            \end{scope}
        \end{tikzpicture}
        \label{f:postproc-d}
    \end{subfigure}
    \hspace*{\fill}
    \caption{\protect\centering Illustrations of some cases that could
    arise during post-processing. Edges of \DTWO\ are indicated by
    thick lines. Edges of cost~zero and~one are indicated by dashed and
    solid lines, respectively.}
    \label{f:postproc}
\end{figure}
}
\subsection{ \label{s:credit-invariant} Credit invariant and charging lemma}

Let $H=(V,F)$ denote the current graph of picked edges;
thus, at the start, $H$ is the same as $\wh{\DTWO}$,
and we may assume that $H$ has one or more bridges
(otherwise, bridge~covering is trivial).
By an \textit{original} 2ec-block $B$ of $H$ we mean a 2ec-block (of
the current $H$) such that $B$ is also a 2ec-block of $\wh{\DTWO}$
(i.e., the set of edges incident to $V(B)$ is the same
in both $\wh{\DTWO}$ and the current $H$).
By a \textit{new} 2ec-block of $H$ we mean a 2ec-block (of the current
$H$) that is not an original 2ec-block.
Similarly, we define an original/new connected~component of $H$.

For any subgraph $H'$ of $H$, we use $\credits(H')$ to denote the
sum of the working~credits of the unit-edges of $H'$.
At the start of bridge~covering, $\credits(H)$ is equal to the sum
of the initial~credits of the unit-edges of $\wh{\DTWO}$ minus the number
of unit-edges of $\wh{\DTWO}$.

We call a node $v$ of $H$ a \textit{white} node if $v$ belongs to
a 2ec-block of $H$, otherwise, we call $v$ a \textit{black} node.

\begin{fact}
Suppose that $H$ has one or more black nodes.
Let $v$ be a black node of $H$.
Then all edges of $H$ incident to $v$ are bridges of $H$, and
$v$ is incident to $\ge2$ bridges of $H$.
Every maximal path of $H$ that starts with $v$ contains a white node.
\end{fact}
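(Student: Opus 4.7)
The plan is to unpack the definition of a black node (one not lying in any 2ec-block of $H$) and translate it into the three claims. Recall from the preliminaries that the 2ec-blocks of $H$ are exactly the connected components of order~$\ge2$ of the subgraph obtained from $H$ by deleting all its bridges; hence a node is black precisely when it is an isolated vertex in $H$ minus its bridges.

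For the first claim, I would argue by contradiction: assume $v$ is incident to an edge $e=vw$ of $H$ that is not a bridge. Then $H-e$ contains a $v,w$-path, so $e$ lies on a cycle of $H$; equivalently, $v$ and $w$ lie in the same connected component of $H$ minus all bridges, and that component contains $\ge 2$ nodes. By the correspondence above, this component is a 2ec-block of $H$ containing $v$, contradicting that $v$ is black. For the second claim, note that $H$ is maintained as a 2-edge~cover throughout bridge~covering (it starts out as $\wh{\DTWO}$, which is a 2-edge~cover), so every node—including $v$—has degree~$\ge 2$ in $H$. Combined with the first claim, $v$ is incident to at least two bridges.

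For the third claim, let $P=u_0,u_1,\dots,u_k$ be a maximal path of $H$ with $u_0=v$, and suppose for contradiction that every $u_i$ is black. Apply part~(2) to the endpoint $u_k$: it is incident to at least two bridges of $H$, but only the edge $u_{k-1}u_k$ of $P$ meets $u_k$, so there exists a further edge $e=u_kw$ of $H$ distinct from (and not a parallel copy of) $u_{k-1}u_k$. I split on where $w$ lies. If $w=u_j$ for some $j\le k-2$, the subpath $u_j,u_{j+1},\dots,u_k$ together with $e$ forms a cycle, so $e$ is not a bridge; but $e$ is incident to the black node $u_k$, contradicting part~(1). If $w=u_{k-1}$ (so $e$ is parallel to $u_{k-1}u_k$), the two parallel edges form a 2-cycle, so neither is a bridge, again contradicting part~(1); loops are disallowed, so $w\ne u_k$. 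Hence $w\notin V(P)$, and appending $e$ and $w$ extends $P$, contradicting maximality. Therefore some $u_i$ must be white.

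The only step requiring care is the last one, where I must handle the multi-graph subtlety (parallel copies of $u_{k-1}u_k$) and ensure that I am using the existence of a \emph{second} bridge at $u_k$ rather than re-using the edge of $P$ already incident to it; everything else is an immediate translation of the definitions combined with the fact that $\wh{\DTWO}$, and hence the current $H$, is a 2-edge~cover.
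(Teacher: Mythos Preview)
Your proof is correct. The paper itself does not supply a proof of this Fact; it is stated as a basic observation, with the subsequent Remark (about contracting each 2ec-block of $H$ to obtain a forest $H'$) serving only as informal motivation. Your argument is exactly the natural way to substantiate it: the first two claims follow from the characterization of 2ec-blocks as the nontrivial components of $H$ minus its bridges together with the 2-edge-cover property, and your extension-versus-cycle dichotomy for the maximal-path claim is clean and handles the multigraph subtleties (parallel edges) explicitly, which is appropriate given the paper's setting.
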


\authremark{
Let ${H'}$ denote the graph obtained from $H$ by contracting
each 2ec-block;
thus, each 2ec-block of $H$ maps to a ``contracted'' white node of ${H'}$,
each black node of $H$ maps to a black node of ${H'}$,
and each bridge of $H$ maps to a bridge of ${H'}$.
Clearly, ${H'}$ is a forest; it may have isolated nodes
(these correspond to 2EC~connected~components of $H$).
Clearly, ${H'}$ has $\ge2$ edges incident to each black node.
}

By a \textit{b-path} of $H$ we mean a path consisting
of bridges that starts and ends with arbitrary nodes (white or black)
such that all internal nodes are black nodes.
We say that two 2ec-blocks of a connected~component of $H$ are
\textit{b-adjacent} if there exists a b-path whose terminal nodes are
in these two 2ec-blocks.

Initially, our algorithm picks a connected~component $C_0$ of $H=\wh{\DTWO}$
that has one or more bridges.
If $C_0$ has any pendant 2ec-block that is incident to a zero-bridge,
then the algorithm picks such a 2ec-block and designates it as the \textit{root 2ec-block} $R$
(in this section, $R$ always denotes the root 2ec-block),
otherwise, the algorithm picks any pendant 2ec-block of $C_0$ and
designates it as the {root 2ec-block} $R$.
Also, the algorithm picks the unique bridge of $C_0$ incident to $R$;
we denote this bridge by $ru$, where $r$ is in $R$.
If $ru$ is a zero-bridge,
then since $C_0$ is an original connected~component,
$R$ has $\ge2.25$ credits (by the key~property of $\wh{\DTWO}$);
otherwise, $R$ has $\ge1.5$ credits.
Immediately after designating $R$, we ensure that $R$ has $\ge 2$ credits.
If $R$ is short of credits (i.e., it has only 1.5~credits),
then we borrow 0.5~credits from a 2ec-block of $C_0$ that is b-adjacent to $R$
(in this case, observe that every pendant 2ec-block of $C_0$ is incident to a unit-bridge);
see credit~invariant~(3) below.

Each iteration of bridge~covering maintains the following invariant,
i.e., if the invariant holds at the start of an iteration, then
it holds at the end of that iteration.
(At the end of this section, we argue that this invariant is preserved
at the end of every iteration, see Proposition~\ref{propo:CI}.
Note that the invariant could be temporarily violated within an iteration,
due to various updates.)

\medskip
\noi
\textbf{Credit invariant}:
\begin{enumerate}[(1)]
\item
\textit{
Each unit-edge of $H$ has a unit of retained~credit
(recall that retained~credits are distinct from working~credits, and
the term ``credit'' means working~credit).
Each original 2EC~connected~component has $\ge1.5$ credits, and
each new 2EC~connected~component has $\ge2$ credits.
Moreover, each unit-bridge of $H$ has $0.75$~credits.
}

\item
\textit{
Within each original connected~component of $H$,
each pendant 2ec-block that is incident to a zero-bridge
has $\ge2.25$ credits.
}

\item
\textit{
Suppose that the root 2ec-block $R$ is well defined.
Then either $R$ is incident to a zero-bridge and has $\ge2.25$ credits,
or $R$ has $\ge2$ credits.
Moreover, each 2ec-block that is b-adjacent to $R$ has $\ge1$ credits,
and every other 2ec-block of $H$ has $\ge1.5$ credits.
}
\end{enumerate}

In an arbitrary iteration of the algorithm,
either $R$ is a pendant 2ec-block of an original connected~component
$C_0$ of $H$, or $R$ has been designated as the root 2ec-block by
the previous iteration and there exists a bridge (of $H$) incident to $R$.
In the former case, (as discussed above) the algorithm chooses
$ru$ to be the unique bridge of $C_0$ incident to $R$;
in the latter case, 
if there exists a unit-bridge incident to $R$,
then the algorithm chooses $ru$ to be such a bridge,
otherwise, the algorithm chooses $ru$ to be any zero-bridge incident to $R$.

When we remove the edge ${ru}$ from $C_0$ then we get two connected~components;
let us denote them by $C_0^{r}$ (it contains $r$ but not $u$) and
$C_0^{u}$ (it contains $u$ but not $r$).
Recall that $G-ru$ has a path between $C_0^{u}$ and $C_0^{r}$.
Let $P$ be such a path of $G-ru$ that
starts at some node $\hat{a}$ of $C_0^{u}$,
ends at some node $\hat{z}$ of $C_0^{r}$,
has no internal nodes in $C_0$,
and (subject to the above) minimizes $|E(P)-E(H)|$.
Note that there could be many choices for the nodes $\hat{a}$ and $\hat{z}$;
although the choice of these nodes is important for our analysis
(see Lemma~\ref{lem:findP} and its proof),
the next lemma (Lemma~\ref{lem:thick}) and its proof apply
for all valid choices of these two nodes.
See Figure~\ref{f:PT:intro}.

{
\begin{figure}[ht]
    \centering
    \begin{tikzpicture}[scale=0.5]
    \begin{scope}[every node/.style={circle, fill=black, draw, inner sep=0pt,
    minimum size = 0.2cm
    }]
        \node[fill=none, minimum size = 1.5cm, thick] (R) at (0,0) {};
        \node (r1) at (-2,0) {};
        \node[minimum size = 0.2cm, label={[label distance=2]45:$\hat{z}$}] (r2) at (-3,0) {};
        \node[draw=none, fill=none] (r3) at (-4,0) {};
        
        \node[fill=none, minimum size = 0.5cm, thick] (x1) at (-1,2) {};
        \node (x2) at (-0.5,3) {};
        \node (x3) at (0,4) {};
        \node[fill=none, minimum size = 0.5cm, thick] (x4) at (0.5,5) {};
        
        \node[fill=none, minimum size = 0.5cm, thick] (y1) at (3,2) {};
        \node (y2) at (3.5,3) {};
        \node (y3) at (4,4) {};
        \node[fill=none, minimum size = 0.5cm, thick] (y4) at (3.5,5) {};
        \node[fill=none, minimum size = 0.5cm, thick] (y5) at (5,4.5) {};

       \node[minimum size = 0.2cm, label={[label distance=3]45:$r$}] (rlabel) at (1.25,0) {};
       \node[minimum size = 0.2cm, label={[label distance=3]90:$u$}] (v0) at (3.25,0) {};
       \node[minimum size = 0.2cm, label={[label distance=3]135:$\hat{a}$}] (v1) at (5.5,0) {};
        \node (v2) at (7.75,0) {};
        \node (v3) at (10,0) {};
        \node[draw=none, fill=none] (v4) at (11,0) {};

    \end{scope}
    
    \begin{scope}[every edge/.style={draw=black,thick}]
    \path (R) edge node {} (r1);
    \path (r1) edge node {} (r2);
    \path[dotted] (r2) edge node {} (r3);
    
    \path (R) edge node {} (v0);
    \path (v0) edge node {} (v1);
    \path (v1) edge node {} (v2);
    \path (v2) edge node {} (v3);
    \path[dotted] (v3) edge node {} (v4);
    
    \path (x1) edge node {} (x2);
    \path (x3) edge node {} (x4);
    
    \path (y1) edge node {} (y2);
    \path (y3) edge node {} (y4);
    \path (y3) edge node {} (y5);

    \path[very thick, dashed, bend left=60] (r2) edge node {} (x2);
    \path[very thick, dashed] (x2) edge node {} (x3);
    \path[very thick, dashed] (x3) edge node {} (y2);
    \path[very thick, dashed] (y2) edge node {} (y3);
    \path[very thick, dashed, bend left=30] (y3) edge node {} (v1);

    \end{scope}
    
    \begin{scope}[every node/.style={draw=none,rectangle}]
        \node (Rlabel) at (0,0) {$R$};
        \node (Plabel) at (-3,2.5) {$P$};
        \node (Qlabel) at (3.25,-1) {$Q$};
    \end{scope}
    \end{tikzpicture}
    \caption{Illustration of plan for ``bridge~covering.''
    Large circles indicate 2ec-blocks.
    Dashed lines indicate $P$.
    $Q$ is the path of $C_0$ between $r$ and $\hat{a}$, where $\hat{a}$
    is the unique node of $P$ in $C_0^{u}$.}
    \label{f:PT:intro}
\end{figure}
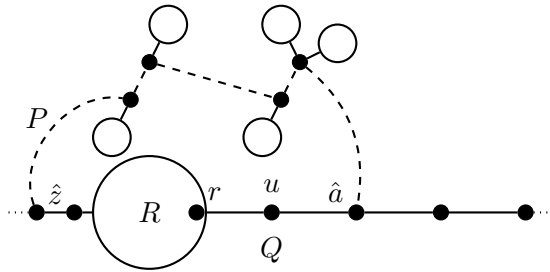
}

Let $H^{new}$ denote $H\cup(E(P)-E(H))$ and let $B^{new}$ denote the
2ec-block of $H^{new}$ that contains $\hat{a}$, $\hat{z}$, and $R$.
We designate $B^{new}$ as the root 2ec-block of $H^{new}$, provided
$B^{new}$ is incident to a bridge of $H^{new}$.

Each iteration may be viewed as an ear-augmentation step
that adds either one open ear to $C_0$ or two open ears to $C_0$, e.g.,
the path $P$ may be viewed as an open~ear of $G$ w.r.t.\ $C_0$.
On the other hand,
our charging scheme (for paying for the edges of $E(P)-E(H)$)
views each iteration as adding an ear w.r.t.\ $R$,
that is, we take the ear to be the union of three paths, namely,
an $r,\hat{a}$ path of $C_0$ that contains $ru$,
the path $P$,
and a path of $C_0$ between $\hat{z}$ and $R$
(we mention that our charging scheme also uses
the credits available in the first of these three paths).
\label{page:r-ear}
To avoid confusion, we call these the $R$-ears, and
unless mentioned otherwise, an ``ear'' means an ear w.r.t.\ $C_0$.

Let us outline our credit scheme (for bridge~covering) and its interaction with
an arbitrary iteration of the algorithm.
Let $\hat{a}\in{V(C_0^{u})}$, $\hat{z}\in{V(C_0^{r})}$, and $P$ be as described above.
Let $C_1,\dots,C_k$ denote the connected components of $H$ that
contain (at least) one internal node of $P$
(thus, $C_0\not=C_i,\;\forall i=1,\dots,k$).
Lemma~\ref{lem:thick} and its proof, see below, explain how the credits of $C_1,\dots,C_k$
can be redistributed such that each of $C_1,\dots,C_k$ releases one unit of credit
such that the credit invariant holds again at the end of the iteration.
Thus, we can buy all-but-one of the unit-edges of $E(P)-E(H)$ using the
credits released by $C_1,\dots,C_k$.
In order to establish the credit invariant at the end of the iteration,
we have to find another unit of credit (to pay for one unit-edge of $E(P)-E(H)$),
and, in addition, we have to ensure that credit~invariant~(3) is maintained
(e.g., we may have to find another 0.25 units of credit).
We defer this issue to Section~\ref{s:bridge-cover-last};
in fact, this is the most intricate part of bridge~covering.

{
\begin{lemma} \label{lem:thick}
Let $H,C_0,C_0^{u},C_0^{r}$ be as stated above, and suppose that $H$ satisfies the
credit invariant.
Let $P$ be an open~ear w.r.t.\ $C_0$ with end~nodes $\hat{a},\hat{z},$
where $\hat{a}$ is in $C_0^{u}$ and $\hat{z}$ is in $C_0^{r}$,
such that each connected component of $H$ that contains an internal
node of $P$ is incident to exactly two edges of $E(P)-E(H)$
(i.e., $E(P)-E(H)$ forms a path in the auxiliary graph obtained
from $G$ by contracting all connected components of $H$ other than $C_0$).
Let $C_1,\dots,C_k$ denote the connected components of $H$ that
contain (at least) one internal node of $P$
(thus, $C_0\not=C_i,\;\forall i=1,\dots,k$).
Then the credits of $C_1,\dots,C_k$ can be redistributed such that
each of $C_1,\dots,C_k$ releases one unit of credit such that the
credit invariant holds again at the end of the iteration.

More precisely, if $C_i$ ($i\in\{1,\dots,k\}$) has a 2ec-block that
contains a node of $P$, then we take one unit from the credit of
one such 2ec-block $B$.
(At the end of the iteration, $B$ is ``merged into'' $B^{new}$,
hence, the credit invariant holds again.)
Otherwise, we take 0.5 units from the credits of two distinct
2ec-blocks $B_1,B_2$ of $C_i$ that are b-adjacent to $P$.
(At the end of the iteration, $B_1$ and $B_2$ are b-adjacent to $B^{new}$,
hence, the credit invariant holds again.)
\end{lemma}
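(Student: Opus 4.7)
The plan is to verify each of the two cases described in the lemma statement, showing that the prescribed credit redistribution preserves the credit invariant for all 2ec-blocks of $C_1,\ldots,C_k$ after the iteration. A useful preliminary observation is that, since each $C_i$ is a connected component of $H$ distinct from $C_0$, no 2ec-block of $C_i$ is b-adjacent to the root $R$ (b-adjacency is intra-component), so every 2ec-block of $C_i$ begins the iteration with $\ge 1.5$ credits by part~(3) of the credit invariant.

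Case~1 is short: I would take one credit from a 2ec-block $B$ of $C_i$ that contains a node of $P$, which is permissible since $B$ has $\ge 1.5$ credits. The remaining $\ge 0.5$ credits are absorbed into $B^{new}$ when $B$ is merged into it. For every other 2ec-block $B'$ of $C_i$, nothing changes; depending on whether the path from $B'$ to $B^{new}$ in the post-iteration block tree passes only through black nodes, $B'$ either becomes b-adjacent to $B^{new}$ (requiring $\ge 1$) or stays non-adjacent to it (requiring $\ge 1.5$), and the original $\ge 1.5$ credits cover both requirements.

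Case~2 is the main technical step. Here $P\cap C_i$ consists entirely of black nodes and bridges, and I would argue via the block tree $T_i$ of $C_i$ --- obtained by contracting each 2ec-block of $C_i$ to a single white vertex while leaving black nodes intact --- that $T_i$ has at least two off-path branches yielding distinct white leaves, i.e., 2ec-blocks, b-adjacent to $P$. Every leaf of $T_i$ is white, because each black vertex has $\ge 2$ incident bridges. The endpoints $x,y$ of $P\cap C_i$ (distinct, or equal when both non-$H$ edges of $P$ incident to $C_i$ meet at a single black vertex) each have degree $\ge 2$ in $T_i$ and use at most one of their incident edges on the path, so at each endpoint at least one tree edge branches off. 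Since $T_i$ is a tree, off-path neighbors at $x$ versus $y$ cannot lie in the same component of $T_i - V(P\cap C_i)$ (a cross-connection would close a cycle); and when $x=y$ the common vertex of degree $\ge 2$ still generates $\ge 2$ components upon deletion. Either way the resulting forest has $\ge 2$ components, each containing a white leaf of $T_i$ reached from some vertex of $P\cap C_i$ by a b-path. These leaves give the required $B_1,B_2$; taking $0.5$ credits from each leaves $\ge 1$ credits, matching the bound for 2ec-blocks b-adjacent to the new root $B^{new}$.

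I expect the degenerate configurations in Case~2 --- namely $P\cap C_i$ collapsing to a single vertex, or $x,y$ each carrying only one off-path edge --- to be the subtlest part of the argument; both are resolved by combining the tree structure of $T_i$ with the $\ge 2$ bridge-degree of every black vertex. Once $B_1,B_2$ are in hand, the credit arithmetic and the verification that all untouched 2ec-blocks of $C_i$ still meet their invariant bounds are routine.
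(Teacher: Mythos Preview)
Your overall plan coincides with the paper's: release one credit per $C_i$, either from a 2ec-block of $C_i$ that $P$ actually enters (Case~1), or split across two 2ec-blocks reachable from the black segment $P\cap C_i$ (Case~2). Case~1 and the preliminary observation are fine.

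There is a genuine gap in Case~2. You locate, in each off-path component of $T_i - V(P\cap C_i)$, a white \emph{leaf} of $T_i$, and then assert that this leaf is ``reached from some vertex of $P\cap C_i$ by a b-path.'' That assertion is false in general: the unique $T_i$-path from $P\cap C_i$ to such a leaf may pass through an intermediate white vertex. For instance, if the branch at the endpoint $x$ is $x,W,L$ with $W$ a non-pendant 2ec-block and $L$ a pendant 2ec-block, then only $W$ is b-adjacent to $P$; $L$ is not. Deducting $0.5$ credits from $L$ then breaks invariant~(3): after the iteration $L$ is still not b-adjacent to the new root $B^{new}$, so it must retain $\ge 1.5$ credits, and your residual of $\ge 1$ is insufficient.

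The repair is immediate and is exactly what the paper does: rather than chasing a leaf, from each endpoint (the paper calls them $s_0,t_0$) follow a maximal b-path in $C_i$ with the edges of $P(s_0,t_0)$ removed, stopping at the \emph{first} white node encountered. By construction these 2ec-blocks $B(s_1),B(t_1)$ are b-adjacent to $P$ (hence to $B^{new}$), and your own tree-component argument already yields $B(s_1)\neq B(t_1)$, since off-path branches at $x$ versus $y$ land in distinct components of $T_i - V(P\cap C_i)$ (and the degenerate $x=y$ case uses $\deg(x)\ge 2$). With that correction your Case~2 goes through and matches the paper.
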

\medskip

\begin{proof}
Our goal is to show that we can pay via the credits of $C_1,\dots,C_k$ for
all-but-one of the edges of $E(P)-E(H)$, while ensuring that the
credit invariant holds again for $H^{new}$.
When we traverse $P$ from $\hat{a}$ to $\hat{z}$, observe that each
of the edges of $E(P)-E(H)$, except the last such edge, has its last
end~node in a distinct connected~component $C_i\not=C_0$ of $H$; we use the
credits available in that connected~component to pay the cost of the edge.
The rest of the proof shows how we can obtain one~unit from the
credit of the relevant connected~component while ensuring that
the credit invariant continues to hold for $H^{new}$.
(Proposition~\ref{propo:CI} below shows that the credit invariant is
preserved at the end of each iteration.)

{
\begin{figure}[htb]
    \centering
    \begin{tikzpicture}[scale=0.6]
    \begin{scope}[every node/.style={circle, fill=black, draw, inner sep=0pt,
    minimum size = 0.2cm
    }]
        \node[fill=none, minimum size = 1.5cm, thick] (bs) at (0,0) {};
        \node[fill=none, minimum size = 1.5cm, thick] (bt) at (10.5,0) {};
        \node[fill=none, minimum size = 1.5cm, thick] (2ec) at (11.5,5) {};
        \node[fill=none, thick] (r') at (4.5,6) {};
        \node[fill=none, thick] (w') at (6.5,0) {};
        \node (v1) at (2,2) {};
        \node (v2) at (3,3) {};
        \node (v3) at (2,4) {};
        \node[draw=none, fill=none] (v4) at (1,5) {};
        \node[label={[label distance=3]270:$s_0$}] (s) at (4.5,3) {};
        \node (v6) at (6,3) {};
        \node (v7) at (6,4.5) {};
        \node[draw=none, fill=none] (v8) at (6,6) {};
        \node (v9) at (7.5,3) {};
        \node[label={[label distance=3]180:$t_0$}] (t) at (8.5,2) {};
        \node (v11) at (8.5,4) {};
        \node[draw=none, fill=none] (v12) at (9.5,5) {};
        \node (v13) at (9.5,3) {};
    \end{scope}
    
    \begin{scope}[every edge/.style={draw=black,thick}]
    \path[<-] (bs) edge node {} (v1);
    \path[<-] (v1) edge node {} (v2);
    \path (v2) edge node {} (v3);
    \path[dotted] (v3) edge node {} (v4);
    \path[<-] (v2) edge node {} (s);
    \path[very thick, dashed] (r') edge node {} (s);
    \path[very thick, dashed] (s) edge node {} (v6);
    \path (v6) edge node {} (v7);
    \path[dotted] (v7) edge node {} (v8);
    \path[very thick, dashed] (v6) edge node {} (v9);
    \path[very thick, dashed] (v9) edge node {} (t);
    \path (v9) edge node {} (v11);
    \path[dotted] (v11) edge node {} (v12);
    \path (t) edge node {} (v13);
    \path (v13) edge node {} (2ec);
    \path[->] (t) edge node {} (bt);
    \path[very thick, dashed] (t) edge node {} (w');
    \end{scope}
    
    \begin{scope}[every node/.style={draw=none,rectangle}]
        \node (bs) at (0,0) {$B({s_1})$};
        \node (bt) at (10.5,0) {$B({t_1})$};
	\node[label={[label distance=5]180:$f$}] (fpc) at (5.0,4.5) {};
	\node[label={[label distance=5]180:$e$}] (epc) at (7.8,1) {};
    \end{scope}
    \end{tikzpicture}
    \caption{Illustration of the notation in Lemma~\ref{lem:thick}.
    A connected~component $C\not=C_0$ of $H$ is shown ($C$ does not contain $R$).
    The large circles indicate 2ec-blocks of $C$.
    The path $P$ is indicated by dashed lines;
    $P$ contains 3~edges of $C$.
    The arrows indicate maximal b-paths ending at 2ec-blocks.}
    \label{f:PT:lemma}
\end{figure}
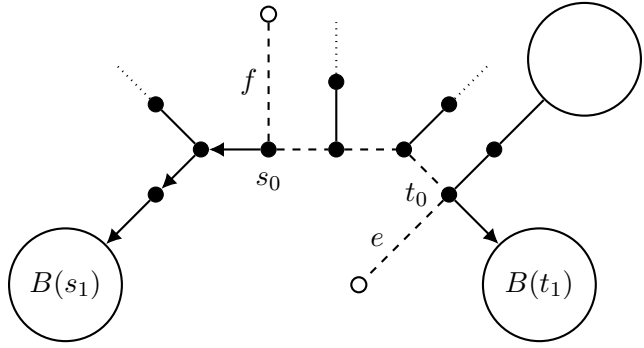
}

Consider any original connected~component $C\not=C_0$ of $H$ that
contains one of the internal nodes of $P$
(thus $C$ is one of $C_1,\dots,C_k$).
See Figure~\ref{f:PT:lemma}.
By our choice of $P$, there is a unique edge of $P$ that ``enters" $C$
and there is a unique edge of $P$ that ``exits" $C$,
i.e., there is a unique edge $f$ with one end~node in $C$ and the
other end~node in the subpath of $P$ between $\hat{a}$ and $C$, and similarly,
      there is a unique edge $e$ with one end~node in $C$ and the
other end~node in the subpath of $P$ between $C$ and $\hat{z}$.

Let $s_0$ denote the end~node of $f$ in $C$, and
let $t_0$ denote the end~node of $e$ in $C$.
Possibly, $s_0=t_0$.
Let $P(s_0,t_0)$ denote the $s_0,t_0$~sub-path of $P$.
Clearly, $P(s_0,t_0)$ is contained in $H$.

First, suppose that $P(s_0,t_0)$ contains a white node;
there is a 2ec-block of $C$, call it $B$, that contains this white node;
then we take one~unit from the credit of $B$
and use that to pay for the edge $f$.
(Recall that $B^{new}$ is designated as the root 2ec-block of
$H^{new}$, and note that $B^{new}$ contains both $s_0$ and $t_0$;
moreover, the 2ec-block $B$ (of $H$) is also contained in $B^{new}$;
hence, at the end of the iteration,
it can be seen that the credit~invariant is maintained in $H^{new}$
although we borrowed one credit from $B$;
see the proof of Proposition~\ref{propo:CI}.)

Otherwise, both $s_0$ and $t_0$ are black nodes. Then we resort to a more complex
scheme.  Let $C(s_0,s_1)$ be any maximal b-path of $C - E(P(s_0,t_0))$
that starts with $s_0$ and ends with a white node $s_1$, and let
$B(s_1)$ denote the 2ec-block that contains the white node $s_1$;
note that $B(s_1)$ has no nodes in common with $B^{new}$.
Similarly, let $B(t_1)$ denote a 2ec-block that contains the terminal
white node $t_1$ (where, $t_1\not=s_1$) of a maximal b-path of
$C - E(P(s_0,t_0))$ that starts with $t_0$;
it can be seen that $B(t_1)$ has no nodes in common with $B^{new}$.\footnote{
If $s_0=t_0$, then note that $s_0$ is incident to $\ge2$ bridges
of $C = C - E(P(s_0,t_0))$, hence, we can ensure that $t_1\not=s_1$.
}
We take 0.5 credits from each of $B(s_1)$ and $B(t_1)$
and use that to pay for the edge $f$.
(In $H^{new}$, observe that both $s_0$ and $t_0$ are contained in the
root 2ec-block $B^{new}$;
moreover, $B(s_1)$ and $B(t_1)$ are 2ec-blocks,
and moreover, both these 2ec-blocks are b-adjacent to $B^{new}$;
hence, at the end of the iteration, the credit~invariant is maintained in $H^{new}$
although we borrowed 0.5 credits from each of $B(s_1)$ and $B(t_1)$;
see the proof of Proposition~\ref{propo:CI}.)

Finally, consider any new connected~component $C\not=C_0$ of $H$ that
contains one of the internal nodes of $P$.  Our algorithm ensures that
every new connected~component of $H$, except for $C_0$, is 2EC, hence,
$C$ is 2EC. We take one unit from the credit of $C$ and use that to pay
for the unique edge of $P$ that ``enters'' $C$. (At the end of the iteration, the credit invariant
is maintained in $H^{new}$ because $C$ is contained in $B^{new}$, the
designated root 2ec-block of $H^{new}$;
see the proof of Proposition~\ref{propo:CI}.)
\end{proof}
}

\subsection{
	\label{s:bridge-cover-last} Algorithm and analysis for bridge~covering}

We present the algorithm and analysis of bridge~covering
based on Lemma~\ref{lem:thick}.

Recall that
$H$ satisfies the credit invariant initially, and that
Lemma~\ref{lem:thick} allows us to pay for all-but-one of the
unit-edges added by a single ear-augmentation.
Our goal is to charge the remaining cost (of one) to a prefix of the $R$-ear that we
denote by $Q$; $Q$ is the maximal path of $C_0$ contained in the $R$-ear
and starting with the edge $ru$.  Let $\pstart$ denote the other end
node of $Q$ (thus, when we traverse the edges (and nodes) of the $R$-ear
starting with the edge $ru$, then $\pstart$ is the first node incident
to an edge of the $R$-ear in $E(G)-E(H)$).
Since our goal is to collect as much credit as possible from $Q$
we choose the $R$-ear such that
either (i)~$Q$ has a white node $w$ ($w\not=r$)
or (ii)~$Q$ has no white nodes (other than $r$),
$Q$ has the maximum cost possible, and subject to this,
$Q$ has the maximum number of bridges possible.
In other words, we choose (the 3-tuple) $P$, $\pstart$, $\pend$
such that $P$ is a path of $G-ru$ with one end~node $\pstart$ in $C_0^{u}$
and the other end~node $\pend$ in $C_0^{r}$,
none of the internal nodes of $P$ is in $C_0$,
the associated prefix $Q$ satisfies condition~(i) or~(ii) (stated above), and
(subject to all the above requirements)
$P$ has the minimum number of edges from $E(G)-E(H)$.
We can easily compute (the 3-tuple) $P$, $\pstart$, $\pend$ in polynomial time
via standard methods from graph algorithms; this is discussed in
the next lemma.

\begin{lemma} \label{lem:findP}
$P$, $\pstart$, $\pend$ satisfying the
requirements stated above can be computed in polynomial time.
\end{lemma}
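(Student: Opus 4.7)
The plan is to reduce the task to a polynomial number of shortest-path and longest-path-in-a-tree computations. Build the auxiliary edge-weighted graph $\Gamma$ on $V(G)$ obtained from $G-ru$ by assigning weight $0$ to every edge of $E(H)$ and weight $1$ to every edge of $E(G)-E(H)$. For any $(\pstart,\pend)$ with $\pstart\in V(C_0^u)$ and $\pend\in V(C_0^r)$, the minimum of $|E(P)-E(H)|$ over $\pstart,\pend$-paths of $G-ru$ whose internal nodes avoid $V(C_0)$ is exactly the shortest-path distance in the subgraph of $\Gamma$ induced by $(V(G)\setminus V(C_0))\cup\{\pstart,\pend\}$, and a corresponding shortest path realizes $P$. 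Thus, once we identify the set of feasible $\pstart$ values arising from conditions~(i) and~(ii), we simply try all resulting pairs $(\pstart,\pend)$ and keep one of smallest distance.

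To identify feasible $\pstart$'s, introduce the bridge-tree $T$ of $C_0$: contract each 2ec-block of $C_0$ to a single ``white'' tree-node, leave each black node of $C_0$ as a ``black'' tree-node, and take the bridges of $C_0$ as tree-edges. For any $\pstart\in V(C_0^u)$, every simple $r,\pstart$-path $Q$ of $C_0$ starting with the bridge $ru$ projects onto the unique tree-path of $T$ from $R$ via $ru$ to the tree-node containing $\pstart$, and the white nodes of $C_0$ appearing on $Q$ are precisely the nodes lying in the 2ec-blocks corresponding to the white tree-nodes on this path. Therefore $Q$ can be chosen to contain a white node $w\neq r$ iff the $R$-to-$\pstart$ path in $T$ uses some white tree-node other than $R$; a single traversal of $T$ yields the set $S_1$ of candidates satisfying~(i) in polynomial time.

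For condition~(ii), the tree-path from $R$ to $\pstart$ must avoid every white tree-node other than $R$, which forces $u$ to be a black tree-node and determines $Q$ uniquely given $\pstart$ (it is the image in $C_0$ of the unique tree-path). Let $T^{\circ}$ be the connected component containing $u$ in the sub-forest of $T$ obtained by deleting $R$ and all white tree-nodes; $T^{\circ}$ is a tree, and $u$-rooted paths in $T^{\circ}$ are in bijection with candidate black-only $Q$'s, with $\cost(Q)$ equal to $\cost(ru)$ plus the total weight of the chosen tree-path and $|E(Q)|$ equal to $1$ plus its length. Finding the lexicographic maximum of $(\cost(Q),|E(Q)|)$ is then the standard longest-weighted-path-from-a-source problem on a tree, solvable by linear-time dynamic programming; the endpoints in $T^{\circ}$ of $u$-rooted paths attaining this maximum form the set $S_2$ of candidates satisfying~(ii). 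The main obstacle is precisely this lex-maximum step, which for an arbitrary graph would be an NP-hard longest-path problem but reduces here to a tree problem and is therefore polynomial. Assembling the pieces, we compute $S_1$ and $S_2$ via traversals of $T$ and $T^{\circ}$, enumerate all pairs in $(S_1\cup S_2)\times V(C_0^r)$, compute a shortest $P$ in $\Gamma$ for each pair, and output the triple of smallest $|E(P)-E(H)|$; correctness is immediate from the bridge-tree characterization and the total running time is polynomial in~$n$.
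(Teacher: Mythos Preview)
Your overall strategy—building a $0/1$-weighted auxiliary graph and reducing to shortest-path queries, together with a bridge-tree characterization of which $a_0\in C_0^u$ make $Q$ satisfy (i) or (ii)—is sound and close in spirit to the paper's proof. The paper encodes the same information more compactly via a single ranking function $\gamma\colon V(C_0^u)\to\mathbb{Z}\cup\{\infty\}$ (set to $\infty$ when every $r,v$-path in $C_0$ contains a white node $\ne r$, and to $n\cdot\cost(Q)+|E(Q)|$ otherwise), and then picks $a_0$ to be a node of maximum $\gamma$-value \emph{among those reachable from $C_0^r$} in the auxiliary graph. There is, however, a genuine gap in your argument that this formulation avoids.

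Condition~(ii) asks that $Q$ attain the lexicographic maximum of $(\cost(Q),|E(Q)|)$ \emph{over all feasible triples $(P,a_0,z_0)$}—that is, over those $a_0$ that are actually reachable from $C_0^r$ by a path of $G-ru$ with no internal node in $C_0$. Your set $S_2$, by contrast, is computed as the lex-max over \emph{all} nodes of $T^\circ$, with no reachability filter. If the lex-max endpoint in $T^\circ$ happens to be unreachable (which is consistent with $G$ being 2NC: a black node of $C_0$ may have all its $G$-neighbors inside $C_0$), then $S_2$ contains no feasible $a_0$, and the true lex-max among \emph{reachable} black nodes—the node that actually satisfies~(ii)—is absent from $S_1\cup S_2$. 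When in addition no reachable node lies in $S_1$, your enumeration returns no valid triple at all, even though one exists. The downstream case analysis (Cases~1--3 of Lemma~\ref{lem:creditQ}) relies on the maximality in~(ii), so this is not cosmetic.

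The fix is easy: first compute, by a single BFS in your induced subgraph of $\Gamma$, the set $A\subseteq V(C_0^u)$ of nodes reachable from $C_0^r$; then take $S_2$ to be the lex-max of $(\cost(Q),|E(Q)|)$ over $A\cap V(T^\circ)$ rather than over all of $T^\circ$. This is still a rooted-tree computation (restrict the dynamic program to the marked nodes), and the rest of your argument goes through. This is exactly what the paper's phrase ``maximum $\gamma$-value among reachable nodes'' accomplishes in one line.
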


\begin{proof}
{
For each node $v\in C_0^{u}$, we define $\gamma(v)$ as follows:
$\gamma(v)=\infty$
if every path of $C_0$ between $v$ and $r$ contains a white node $w$, $w\not=r$;
otherwise, $\gamma(v)=|E(C_0(v,r))| + n\cdot \cost(C_0(v,r))$,
where $C_0(v,r)$ denotes the unique b-path of $C_0$ between $v$ and $r$.
(Informally speaking, $\gamma(v)$ assigns a ``rank'' to each node 
$v$ of $C_0^{u}$;
if every $v,r$ path of $C_0$ contains two or more white nodes,
then the rank is $\infty$, otherwise,
the rank is determined by the unique b-path of $C_0$ between $v$ and $r$,
and we rank according to the 2-tuple consisting of
the cost and the number of bridges of the relevant path.)

Then, we construct the following weighted directed graph:
the directed graph has two oppositely oriented edges for
each edge of $G-V(C_0)$,
it has an edge oriented out of $V(C_0^{r})$ for each edge of $G-ru$
in the cut $\delta_G(V(C_0^{r}))$,
and it has an edge oriented into $V(C_0^{u})$ for each edge of $G-ru$
in the cut $\delta_G(V(C_0^{u}))$
(there are no oriented edges corresponding to other edges of $G$).
We assign weights of zero to the oriented edges associated with the
edges of $H$, and weights of one to the other oriented edges
(associated with the edges of $E(G)-E(H)$).
Then, we apply a reachability computation,
taking all the nodes in $C_0^{r}$ to be the sources. 
We claim that a node $v\in C_0^{u}$ is reachable from $C_0^{r}$ (in
the directed graph) iff $G-ru$ has a path between $C_0^{r}$ and $v$
such that no internal node of the path is in $C_0$.

Thus, we can find $\pstart$ and $\pend$ that satisfy the requirements:
we choose $\pstart$ to be a node $v\in C_0^{u}$
that is reachable from $C_0^{r}$ (in the directed graph) and
that has the maximum $\gamma()$ value, and then
we choose $\pend$ to be a node in $C_0^{r}$ such that
the directed graph has a path from this node to $\pstart$.
Then, we take $P$ to be a shortest $\pend,\pstart$ path in the
(weighted) directed graph.
}
\end{proof}

An outline of the bridge~covering~step follows.

\medskip
\noi
\fbox{ \begin{minipage}{\textwidth}

\textbf{Bridge covering (outline)}:
\begin{itemize}
\item[(0)] compute \DTWO, then post-process \DTWO\ to obtain $\wh{\DTWO}$, and let $H=\wh{\DTWO}$;

\item[(1)] pick any (original) connected~component $C_0$ of $H$ that has a bridge;
	if possible, pick a pendant 2ec-block of $C_0$ that is incident to
a zero-bridge and designate it as the root 2ec-block $R$;
	otherwise, (every pendant 2ec-block of $C_0$ is incident to a
unit-bridge) pick any pendant 2ec-block of $C_0$ and designate it
as the root 2ec-block $R$;

\item[(2)] \textbf{repeat}
	\begin{quote}
	if possible, pick a unit-bridge of $C_0$ incident to $R$,
	otherwise, pick any zero-bridge of $C_0$ incident to $R$,
	and denote the picked bridge by $ru$;
	\\
	apply one ear-augmentation step (add one or two ears, see Cases~1--3 below) to cover a sub-path of
	bridges starting with $ru$, and let $B^{new}$ denote the resulting 2ec-block
	that contains $R$ and $ru$;
	\\
	let $R~~:=~~B^{new}$;
	\end{quote}
	\textbf{until} $R = B^{new}$ has no incident bridges of $H$;

\item[(3)] \textbf{stop} if $H$ has no bridges, otherwise, \textbf{go to} (1).
\end{itemize}
\end{minipage}
}
\medskip

Recall from Section~\ref{s:credit-invariant} that we have to
establish the credit invariant at the end of the iteration, hence,
we have to find another unit of credit (to pay for one unit-edge of $E(P)-E(H)$),
and, in addition, we have to ensure that credit~invariant~(3) is maintained
(e.g., we may have to find another 0.25 units of credit).

\begin{lemma} \label{lem:creditQ}
Let $H, C_0, R, P, Q, B^{new}$ be as stated above, and suppose that $H$ satisfies the
credit invariant.
Except for one case,
$\credits(Q) + \credits(R)$ suffices to
give one unit of credit for $E(P)-E(H)$ and
to give sufficient credit to $B^{new}$ such that
the credit~invariant holds at the end of the iteration.

In the exceptional case (described in Case~3, see below),
the algorithm adds another ear $P_{\star}$
with associated prefix $Q_{\star}$ such that $\credits(Q_{\star}-Q)\ge0.75$ and,
moreover, no connected component of $H-C_0$ is incident to both $P$ and $P_{\star}$,
and ``sells'' a unit-bridge of $Q$
(i.e., a unit-bridge of $\wh{\DTWO}$ is permanently~discarded
and is not contained in the 2-ECSS output by the algorithm,
thereby releasing 1.75~credits).
Thus, the algorithm obtains $\ge2.5$~credits, and this suffices to
give one unit of credit for each of $E(P)-E(H)$ and $E(P_{\star})-E(H)$, and
to give sufficient credit to $B^{new}$ such that
the credit~invariant holds at the end of the iteration.
\end{lemma}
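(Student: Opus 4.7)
The approach is to quantify the credits concentrated at $R$ and along $Q$, and to show that in all but one case they suffice to endow the new root 2ec-block $B^{new}$ with the credits required by the credit invariant (either $2$ or $2.25$, depending on whether $B^{new}$ remains incident to a zero-bridge in $H^{new}$) while also paying $1$ credit for the single unit-edge of $E(P)-E(H)$ left uncovered by Lemma~\ref{lem:thick}.

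First I would collect the credits available at $R$ and on $Q$. By credit invariant~(3), $R$ carries $\ge 2$ credits, and $\ge 2.25$ when $R$ is incident to a zero-bridge in $H$. By invariant~(1), each unit-bridge of $Q$ carries $0.75$ working credits, and by invariant~(3) each intermediate 2ec-block on $Q$ (b-adjacent to $R$) carries $\ge 1$ credit. All of these are freed upon adding $P$, since the cycle obtained from the path $r \to \hat{a}$ in $C_0$, the path $P$, and the path $\hat{z} \to r$ in $C_0$ merges $R$, every intermediate 2ec-block along $Q$, and every unit-bridge of $Q$ into $B^{new}$.

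I would then split into three cases according to the structure of $Q$. In \textbf{Case~1}, $Q$ contains a white node $w \ne r$ (condition~(i) in the choice of $Q$). Then some 2ec-block $B \ne R$ on $Q$ is b-adjacent to $R$ and contributes $\ge 1$ credit; together with $R$'s credits this gives $\ge 3$ credits (or $\ge 3.25$ when $R$ has $2.25$), matching the worst-case requirement of $2.25 + 1 = 3.25$. In \textbf{Case~2}, $Q$ has no white nodes other than $r$, but contains enough unit-bridges so that the $0.75$-per-bridge contributions of $Q$ cover the shortfall relative to $R$'s credits. Here I would exploit the preference rule of the algorithm: whenever $ru$ is a zero-bridge, all bridges incident to $R$ are zero-bridges and $R$ has $\ge 2.25$ credits, while if $ru$ is a unit-bridge it already contributes $0.75$ to $\credits(Q)$. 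A careful sub-case analysis on the cost of $Q$ and its number of bridges (guided by the maximality used in the choice of $Q$) completes this case.

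\textbf{Case~3} is the exceptional case: $Q$ has no white node and too few unit-bridges to cover the shortfall. The plan is to appeal to the well-structured hypotheses on $G$ (2NC, no bad-pairs, no cut-nodes, no redundant 4-cycles) to locate an additional ear $P_{\star}$ of $G-ru$ whose associated prefix $Q_{\star}$ strictly extends $Q$, contributing $\ge 0.75$ extra credit via $\credits(Q_{\star}-Q)$, and such that no connected~component of $H-C_0$ is incident to both $P$ and $P_{\star}$. The disjointness is what allows Lemma~\ref{lem:thick} to apply independently to $P$ and to $P_{\star}$, so that outside one final edge of each, all non-$C_0$ costs are already covered. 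Finally, ``selling'' a unit-bridge of $Q$ releases $1+0.75=1.75$ credits. The resulting $\ge 2.5$ extra credits then pay the two leftover unit-edges (one per ear) and bring $B^{new}$'s credit up to the required level when combined with $\credits(R)$.

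The main obstacle is Case~3: proving the existence of the auxiliary ear $P_{\star}$ with both the extension property $\credits(Q_{\star}-Q)\ge0.75$ and the disjointness property. The delicate step is ruling out that every alternative path from $C_0^r$ to $C_0^u$ in $G-ru$ shares a non-$C_0$ component with $P$; such an obstruction would translate into either a cut-node or a bad-pair of $G$, both of which are excluded by the well-structured assumption. I would also have to verify that the maximal choice of $Q$ (first by cost, then by number of bridges) prevents $Q_{\star}$ from collapsing to a subpath of $Q$, which is precisely what guarantees the $0.75$ extra credit and the availability of a unit-bridge in $Q$ to ``sell.''
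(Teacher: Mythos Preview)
Your overall plan---quantify $\credits(R)+\credits(Q)$ and split into cases based on whether $Q$ contains a non-root white node and how many unit-bridges it has---matches the paper's approach. But your treatment of the exceptional case (your Case~3) has two genuine gaps.

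First, you misidentify the second ear $P_\star$. You describe it as ``an alternative path from $C_0^r$ to $C_0^u$ in $G-ru$,'' but in the paper both endpoints of $P_\star$ lie in $C_0^u$. Concretely, in the exceptional situation $Q$ is the two-bridge path $r=v_0,\,v_1=u,\,v_2=\pstart$ with $\cost(v_0v_1)=0$, $\cost(v_1v_2)=1$, and the other bridge $v_2v_3$ at $v_2$ is a zero-bridge. Since $\{v_2,v_3\}$ is not a bad-pair, $G-\{v_2,v_3\}$ is connected; letting $S$ be the component of $C_0-\{v_2,v_3\}$ containing $R$ and $T=V(C_0)-\{v_2,v_3\}-S$, one finds $P_\star$ between some $\auxpend\in S$ and some $\auxpstart\in T$ with no internal nodes in $C_0$. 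The paper then argues that $\auxpend$ must be $v_1$ or lie in a subtree of $C_0$ hanging off $v_1$ (otherwise the prefix through $v_3$ would have $\ge 3$ bridges and cost $\ge 1$, contradicting the extremal choice of $Q$). So $P_\star$ is an ear with both feet on the $u$-side; the prefix $Q_\star$ runs from $v_1$ through $v_2,v_3$ to $\auxpstart$, and it is the segment $Q_\star-Q$ past $v_2$ that supplies the extra $\ge 0.75$ credits (either a unit-bridge at $v_3$ or a white 2ec-block containing $v_3$).

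Second, your plan for the disjointness of $P$ and $P_\star$ (no shared connected component of $H-C_0$) invokes cut-nodes or bad-pairs, but that is not how it goes. The paper's argument is extremal: if some component $\hat C$ of $H-C_0$ met both $P$ and $P_\star$, then splicing through $\hat C$ would give a single ear from $\pend$ to $\auxpstart$ whose prefix in $C_0$ contains $v_0v_1$ together with $Q_\star$, hence either has a white node or cost $\ge 2$---contradicting the maximality in the choice of $P,\pstart,\pend$. Your structural argument would not yield this.

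A smaller omission: your coarse Case~2 hides a sub-case the paper must explicitly rule out, namely $Q$ consisting of two bridges with $\cost(v_0v_1)=1$, $\cost(v_1v_2)=0$. Here $\credits(Q)=0.75$ and $R$ may have only $2$ credits, which would be short; the paper shows this configuration is impossible using that $\{v_1,v_2\}$ is not a bad-pair together with the maximal choice of $Q$.
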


The rest of this subsection presents a proof of this lemma.
Before presenting the details, we give an informal overview of the credit~scheme
in the following two tables.
The goal is to find one unit of credit for each ear added by the current iteration,
and ensure that $B^{new}$ has sufficient credits at the end of the iteration,
assuming that the credit~invariant holds at the start of the iteration.
If our iteration adds one ear, then clearly it suffices to find 1.25~credits
(but there are cases where the goal can be achieved with fewer credits).
We have two cases: either $Q$ has a white node $w$ other than $r$,
or else $Q$ has no white node other than $r$.

\newlength\replength
\newcommand\repfrac{.33}
\newcommand\dashfrac[1]{\renewcommand\repfrac{#1}}
\setlength\replength{1.5pt}
\newcommand\rulewidth{.6pt}
\newcommand\tdashfill[1][\repfrac]{\cleaders\hbox to \replength{%
  \smash{\rule[\arraystretch\ht\strutbox]{\repfrac\replength}{\rulewidth}}}\hfill}
\newcommand\tabdashline{%
  \makebox[0pt][r]{\makebox[\tabcolsep]{\tdashfill\hfil}}\tdashfill\hfil%
  \makebox[0pt][l]{\makebox[\tabcolsep]{\tdashfill\hfil}}%
  \\[-\arraystretch\dimexpr\ht\strutbox+\dp\strutbox\relax]%
}
\newcommand\tdotfill[1][\repfrac]{\cleaders\hbox to \replength{%
  \smash{\raisebox{\arraystretch\dimexpr\ht\strutbox-.1ex\relax}{.}}}\hfill}
\newcommand\tabdotline{%
  \makebox[0pt][r]{\makebox[\tabcolsep]{\tdotfill\hfil}}\tdotfill\hfil%
  \makebox[0pt][l]{\makebox[\tabcolsep]{\tdotfill\hfil}}%
  \\[-\arraystretch\dimexpr\ht\strutbox+\dp\strutbox\relax]%
}

\begin{center}
\begin{tabulary}{0.95\textwidth}{ |L|L| }
\hline
Properties of $Q$ & Credits available to algorithm excluding $\credits(R)$
\\
\hline
\hline
$Q$ has a white node $w$ other than $r$ & $\ge1.0$ (from 2ec-block containing $w$)
\\
\hline
subcase:
$Q$ contains a unit-bridge & $\ge1.75$ (1.0 from above \& 0.75 from unit-bridge)
\\
subcase:
$Q$ contains no unit-bridges & $\ge1.0$ (1.0 from above; \textbf{note}: $R$ has $\ge2.25$ credits)
\\
\hline
\end{tabulary}
\end{center}

{
\begin{center}
\begin{tabulary}{0.95\textwidth}{ |L| p{1.35in} | }
\hline
Properties of $Q$ & Credits available to algorithm excluding $\credits(R)$
\\
\hline
\hline
$Q$ contains no white nodes other than $r$,
$Q$ has $\ge2$ bridges,
$\cost(Q)\ge1.0$,
node sequence of $Q$: $v_0=r, v_1=u, v_2, \dots, v_k=\pstart$
&
\\
\hline
case:
$\cost(Q) \ge2$ & $\ge1.5$ (this suffices)
\\
\tabdashline & \tabdashline
& \\
case:
$\cost(Q) =1.0$ &  (see 4 subcases below)
\\
\tabdashline & \tabdashline
& \\
subcase:
$Q$ has 2 bridges, thus $\pstart=v_2$, \textbf{and}
$\cost(v_0 v_1)=1, \cost(v_1 v_2)=0$ & not possible
\\
\tabdashline & \tabdashline
& \\
subcase:
$Q$ has 2 bridges, thus $\pstart=v_2$, \textbf{and}
$\cost(v_0 v_1)=0, \cost(v_1 v_2)=1$,  \textbf{and}
$H$ has a   unit-bridge $v_2v_3$ &
	$=0.75$ (this suffices)
\\
\tabdashline & \tabdashline
& \\
subcase:
$Q$ has 2 bridges, thus $\pstart=v_2$, \textbf{and}
$\cost(v_0 v_1)=0, \cost(v_1 v_2)=1$,  \textbf{and}
the only other bridge of $H$ incident to $v_2$ is a zero-bridge $v_2 v_3$ &
\\
& \\
	\mbox{~~~~~~} \textbf{update $H$}:~~add another ear $P_{\star}$, permanently~discard unit-bridge $v_1 v_2$ &
	$\ge2.5$ (this suffices)
\\
\tabdashline & \tabdashline
& \\
subcase:
$Q$ has 3 bridges, thus $\pstart=v_3$, \textbf{and}
$\cost(v_0 v_1)=0, \cost(v_1 v_2)=1, \cost(v_2 v_3)=0$ &
	$=0.75$ (this suffices)
\\
\hline
\end{tabulary}
\end{center}
}

Suppose that the prefix $Q$ of the $R$-ear contains a white node $w$ other than $r$
(thus, $w\not=r$).
Let $B\not=R$ denote the 2ec-block that contains $w$;
clearly, $B$ has at least one credit (see credit~invariant~(3)).
Observe that $B^{new}$ contains $B$ and
the credits of both $B$ and $Q$ are available.
If $Q$ has at least one unit-bridge, then the sum of the credits
of $B$ and $Q$ is $\ge1.75$, and this suffices.
Now, suppose that $Q$ has no unit-bridges.
Thus, $Q$ has only one bridge, namely, the zero-bridge $ru$.
Then, by our choice of $ru$ and the credit invariant, $R$ already
has $\ge2.25$ credits, hence, we need only one credit for the
ear-augmentation step (since $B^{new}$ gets all the credits of $R$).
Thus, $\credits(B)=1$ suffices when $Q$ has no bridge other than $ru$.

Now, assume that the prefix $Q$ has no white node other than $r$;
in particular, the nodes $u$ and $\pstart$ are black.
It is easily seen that $Q$ has $\ge2$ bridges.
(Since $G$ is 2NC, $G-u$ has a path between the
connected~component of $C_0-u$ that contains $R$ and
each of the other connected~components of $C_0-u$;
this implies that there exists an $R$-ear such that
its associated prefix ${Q'}$ has $\ge2$ bridges and has $\cost(Q')\ge1$;
then, by our choice of $Q$, we have $\cost(Q)\ge\cost(Q')\ge1$ and $Q$ has $\ge2$~bridges.)

Suppose that $\cost(Q)\ge2$, thus, $Q$ has $\ge2$ unit-bridges.
Then, $\credits(Q)\ge 1.5$.
This suffices to pay for the current ear-augmentation step and to
re-establish the credit invariant.

Now, we may assume that $\cost(Q)=1$.
Let us denote the node sequence of $Q$ by $v_0,v_1,v_2,\dots,v_k$, where
$v_0=r$, $v_1=u$, and $v_k=\pstart$.
Clearly, we have three possibilities:

\begin{description}{
\item[Case~1: $Q$ consists of 2~bridges and $\cost(v_0v_1)=1$, $\cost(v_1v_2)=0$:~~]
We argue that this possibility cannot occur.
Observe that $v_2=\pstart$.
Observe that $G - \{v_1,v_2\}$ is connected;
otherwise, $\{v_1,v_2\}$ would be a bad-pair.
Consider the connected~components of $C_0 - \{v_1,v_2\}$;
let $S$ denote the set of nodes of the connected~component that contains $R$,
and let $T$ denote $V(C_0) - \{v_1,v_2\} - S$.
Since $G - \{v_1,v_2\}$ is connected,
it has path $P_{\star}$ between a node $\auxpend\in S$ and a node
$\auxpstart\in T$ such that $P_{\star}$ has no internal nodes in $C_0$.
Observe that every path of $C_0$ between $r$ and $\auxpstart$ has cost $\ge2$
(because such a path contains the unit-edge $ru$ as well as another edge of
the cut $\delta_{C_0}(\{v_1,v_2\})$, and all edges of this cut have cost~one);
thus, the prefix $Q_{\star}$ associated with $P_{\star}$ has cost $\ge2$.
This contradicts our choice of $P, \pstart, \pend$
(because the prefix $Q$ has cost~one).
See Figure~\ref{f:bc-case1}.

{
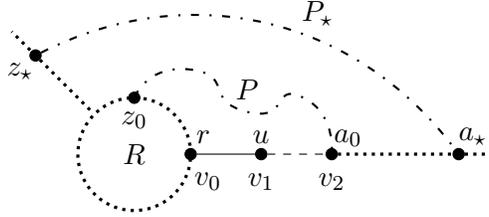
\begin{figure}[ht]
    \centering
    \begin{tikzpicture}[scale=0.75]
        \begin{scope}[every node/.style={circle, fill=black, draw, inner sep=0pt,
        minimum size = 0.15cm
        }]
            \draw[very thick,dotted] (0,0) circle (1cm);
            \node[label={[label distance=2]60:$r$}, label={[label distance=4]290:$v_0$}] (r) at (1,0) {};
            \node[label={[label distance=0]90:$u$}, label={[label distance=2]270:$v_1$}] (u) at (2.25,0) {};
            \node[label={[label distance=0]270:$z_0$}] (z0) at (0,1) {};
            \node[label={[label distance=0]45:$a_0$}, label={[label distance=2]270:$v_2$}] (a0) at (3.5,0) {};
            
            \node[fill=none,draw=none] (x0) at (-0.7071,0.7071) {};
            \node[fill=none,draw=none] (x1) at (-2.25,2.25) {};
            \node[label={[label distance=0]225:$\auxpend$}] (zstar) at (-1.75,1.75) {};
            \node[label={[label distance=0]70:$\auxpstart$}] (astar) at (5.75,0) {};
            \node[fill=none,draw=none] (x2) at (6.5,0) {};
        \end{scope}

        \begin{scope}[every edge/.style={draw=black}]
            \path[very thick,dotted] (x0) edge node {} (x1);
            \path[] (r) edge node {} (u);
            \path[dashed] (u) edge node {} (a0);
            \path[very thick,dotted] (a0) edge node {} (x2);
            
            \draw[thick, loosely dash dot] plot [smooth,tension=1.4] coordinates {(0,1)(1,1.5)(2,0.75)(3,1)(3.5,0)};
            
            \path[thick,loosely dash dot] (zstar) edge[bend left=40] node {} (astar);
        \end{scope}

        \begin{scope}[every node/.style={draw=none,rectangle}]
            \node (Rlabel) at (0,0) {$R$};
            \node (Plabel) at (2,1.15) {$P$};
            \node (Pstarlabel) at (3.25,2.5) {$P_{\star}$};
        \end{scope}
    \end{tikzpicture}
    \caption{\protect\centering Illustration of Case 1 (bridge~covering).}
    \label{f:bc-case1}
\end{figure}
}

\item[Case~2: $Q$ consists of 3~bridges and $\cost(v_0v_1)=0$, $\cost(v_1v_2)=1$, $\cost(v_2v_3)=0$:~~]
Then, we have $0.75$ credits available in $Q$.
We argue that this suffices to pay for the current ear-augmentation
step and to re-establish the credit invariant.
Observe that $v_3=\pstart$.
By our choice of $ru$ and the credit invariant,
$R$ has $\ge2.25$ credits.
Thus, we can pay one~unit for one unit-edge of $E(P)-E(H)$, and we
have $2$~credits (from $R$) left for $B^{new}$.
Moreover, $v_3$ is a black node, and it can be seen that
one of the unit-bridges of $C_0$ incident to $v_3$ becomes
a bridge incident to $B^{new}$;
in other words, at the next iteration, when we designate $B^{new}$
as the root 2ec-block, then only $2$~credits are required for $B^{new}$.
See Figure~\ref{f:bc-case2}.

{
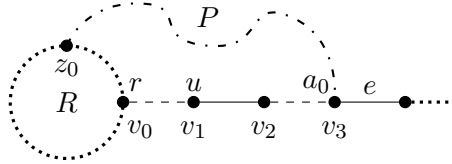
\begin{figure}[ht]
    \centering
    \begin{tikzpicture}[scale=0.75]
        \begin{scope}[every node/.style={circle, fill=black, draw, inner sep=0pt,
        minimum size = 0.15cm
        }]
            \draw[very thick,dotted] (0,0) circle (1cm);
            \node[label={[label distance=2]60:$r$}, label={[label distance=4]290:$v_0$}] (r) at (1,0) {};
            \node[label={[label distance=0]90:$u$}, label={[label distance=2]270:$v_1$}] (u) at (2.25,0) {};
            \node[label={[label distance=2]270:$v_2$}] (v2) at (3.5,0) {};
            \node[label={[label distance=0]270:$z_0$}] (z0) at (0,1) {};
            \node[label={[label distance=1]135:$a_0$}, label={[label distance=2]270:$v_3$}] (a0) at (4.75,0) {};
            \node[] (v4) at (6,0) {};
            
            \node[fill=none,draw=none] (x0) at (7,0) {};
        \end{scope}

        \begin{scope}[every edge/.style={draw=black}]
            \path[dashed] (r) edge node {} (u);
            \path[] (u) edge node {} (v2);
            \path[dashed] (v2) edge node {} (a0);
            \path[] (a0) edge node {} (v4);
            \path[very thick,dotted] (v4) edge node {} (x0);
            
            \draw[thick, loosely dash dot] plot [smooth,tension=1.4] coordinates {(0,1)(1.25,1.75)(2.5,1)(4,1.5)(4.75,0)};
        \end{scope}

        \begin{scope}[every node/.style={draw=none,rectangle}]
            \node (Rlabel) at (0,0) {$R$};
            \node (Plabel) at (2.5,1.5) {$P$};
            \node (elabel) at (5.375,0.25) {$e$};
        \end{scope}
    \end{tikzpicture}
    \caption{\protect\centering Illustration of Case 2 (bridge~covering).}
    \label{f:bc-case2}
\end{figure}
}

\item[Case~3: $Q$ consists of 2~bridges and $\cost(v_0v_1)=0$, $\cost(v_1v_2)=1$:~~]
\label{page:bc-case3}
Observe that $v_2=\pstart$.
Note that $v_2$ is a black node, and it is incident to
a bridge other than $v_2v_1$.
There are two subcases, namely,
either $v_2$ is incident to two unit-bridges or not,
and we choose $v_3$ appropriately.
In the former subcase, we take $v_2v_3$ to be a unit-bridge, and in the
latter subcase we have to take $v_2v_3$ to be the unique zero-bridge incident to $v_2$.

We handle the first subcase (with $\cost(v_2v_3)=1$) similarly to Case~2 above.
By our choice of $ru$ and the credit invariant,
$R$ has $\ge2.25$ credits.
Thus, we can pay one~unit for one unit-edge of $E(P)-E(H)$, and we
have $2$~credits (from $R$) left for $B^{new}$.
Moreover, $v_2$ is a black node, and it can be seen that
the unit-bridge $v_2v_3$ becomes a bridge incident to $B^{new}$;
in other words, at the next iteration, when we designate $B^{new}$
as the root 2ec-block, then only $2$~credits are required for $B^{new}$.

In the second subcase (with $\cost(v_2v_3)=0$), 
observe that $H$ has precisely two bridges incident to $v_2$,
because $v_2$ is incident to only one unit-bridge.
Our plan is to add a second ear and then observe that the edge
$v_1v_2$ becomes redundant after the addition of the two ears,
hence, we can permanently~discard this edge from $H$ thereby gaining $1.75$~credits.
(Note that when we permanently~discard a unit-edge of $\wh{\DTWO}$
from our solution subgraph $H$, then all of the retained~credits and
the working~credits of that edge become available.)
Moreover, we get another $0.75$ (or more) credits from the addition
of the two ears.  Thus we get $\ge 2.5$~credits, and this suffices
to pay for the addition of two ears and to re-establish the credit
invariant.  See Figure~\ref{f:bc-case3}.

{
\begin{figure}[ht]
    \centering
    \begin{tikzpicture}[scale=0.75]
        \begin{scope}[every node/.style={circle, fill=black, draw, inner sep=0pt,
        minimum size = 0.15cm
        }]
            \draw[very thick,dotted] (0,0) circle (1cm);
            \node[label={[label distance=2]60:$r$}, label={[label distance=4]290:$v_0$}] (r) at (1,0) {};
            \node[label={[label distance=0]90:$u$}, label={[label distance=2]270:$v_1$}] (u) at (2.25,0) {};
            \node[label={[label distance=0]45:$a_0$}, label={[label distance=2]270:$v_2$}] (a0) at (3.5,0) {};
            \node[label={[label distance=2]270:$v_3$}] (v3) at (4.75,0) {};
            
            \node[label={[label distance=0]270:$z_0$}] (z0) at (0,1) {};
            \node[label={[label distance=0]180:$w_1$}] (w1) at (2.25,-1.5) {};
            \node[label={[label distance=0]180:$\auxpend$}] (zstar) at (2.25,-2.75) {};
            
            \node[fill=none,draw=none] (x1) at (2.25,-3.5) {};
            \node[label={[label distance=0]70:$\auxpstart$}] (astar) at (7,0) {};
            \node[fill=none,draw=none] (x2) at (7.75,0) {};
        \end{scope}

        \begin{scope}[every edge/.style={draw=black}]
            \path[very thick,dotted] (w1) edge node {} (x1);
            \path[dashed] (r) edge node {} (u);
            \path[] (u) edge node {} (a0);
            \path[dashed] (a0) edge node {} (v3);
            \path[very thick,dotted] (v3) edge node {} (x2);
            
            \path[bend left = 45] (u) edge node {} (w1);
            
            \draw[thick, loosely dash dot] plot [smooth,tension=1.4] coordinates {(0,1)(1,1.5)(2,0.75)(3,1)(3.5,0)};
            
            \draw[thick, loosely dash dot] plot [smooth,tension=1.4] coordinates {(2.25,-2.75)(3.5,-2.75)(4.5,-1.5)(6,-1)(7,0)};
        \end{scope}

        \begin{scope}[every node/.style={draw=none,rectangle}]
            \node (Rlabel) at (0,0) {$R$};
            \node (Plabel) at (2,1.15) {$P$};
            \node (Pstarlabel) at (4.5,-2) {$P_{\star}$};
        \end{scope}
    \end{tikzpicture}
    \caption{\protect\centering Illustration of Case 3 (bridge~covering).}
    \label{f:bc-case3}
\end{figure}
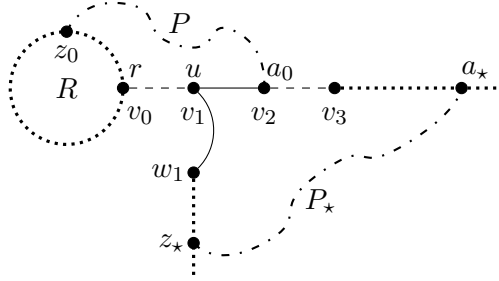
}

Observe that $G - \{v_2,v_3\}$ is connected;
otherwise, $\{v_2,v_3\}$ would be a bad-pair.
Consider the connected~components of $C_0 - \{v_2,v_3\}$;
let $S$ denote the set of nodes of the connected~component that contains $R$,
and let $T$ denote $V(C_0) - \{v_2,v_3\} - S$.
Since $G - \{v_2,v_3\}$ is connected,
it has a path $P_{\star}$ between a node $\auxpend\in S$ and a node
$\auxpstart\in T$ such that $P_{\star}$ has no internal nodes in $C_0$.
Moreover, w.l.o.g.\ we assume that $P_{\star}$ has the minimum number
of edges from $E(G)-E(H)$.
It can be seen that $P_{\star}, \auxpstart, \auxpend$ satisfy the following:

\begin{enumerate}[(i)]
{
\item
Either $\auxpend = v_1$ or there is a bridge $v_1w_1$ (of $C_0$) such
that $v_1w_1\not=v_1v_0$, $v_1w_1\not=v_1v_2$, and $\auxpend$ is in the
connected~component of $C_0 - v_1w_1$ that contains $w_1$.
This follows from a contradiction argument;
the only other possibility is that $\auxpend$ is in $C_0^{r}$
(the connected~component of $C_0-ur$ that contains $r$);
but then we would define the prefix $\wt{Q}_{\star}$ associated with
$P_{\star}$ to be a path of $C_0$ between $r$ and $\auxpstart$;
note that $\wt{Q}_{\star}$ would contain $v_3$ (since $C_0$ has only two
bridges incident to $v_2$),
hence, $\wt{Q}_{\star}$ would have $\ge3$~bridges and we would have
$\cost(\wt{Q}_{\star})\ge1$;
this would contradict our choice of $P, \pstart, \pend$.

\item
Now, we define the prefix $Q_{\star}$ associated with $P_{\star}$ to be
a path of $C_0$ between $v_1$ and $\auxpstart$.
Note that $Q_{\star}$ contains $v_3$
(since $C_0$ has only two bridges incident to $v_2$).
We have two cases:
either $v_3$ is a black node or it is a white node;
in the first case, $Q_{\star}-Q$ contains a unit-bridge incident to $v_3$
so we have $\credits(Q_{\star}-Q)\ge0.75$, whereas in the second case,
$Q_{\star}-Q$ contains the white node $v_3$
so we can obtain $\ge1$ credit from the 2ec-block (of $C_0$) that contains $v_3$.

\item
There is no connected~component of $H - C_0$ that is incident to both
$P$ and $P_{\star}$.  Otherwise, suppose that some connected~component
$\hat{C}$ of $H$ is incident to both P and  $P_{\star}$. Then there is
a path in ($P \cup  P_{\star} \cup \hat{C}$) that starts at $z_0$ and
ends at $a_{\star}$, and whose prefix $Q_{\star} \cup \{v_0v_1\}$ in
$C_0$ either has a white node or has cost $\ge2$, contradicting our
choice of $P,a_0,z_0$.

Hence, we can apply Lemma~\ref{lem:thick} separately to
each of $P_{\star}$ and $P$ and thus pay for all-but-one of the unit-edges
added by each of the two ears.

\item
After adding the two ears, the edge $v_1v_2$ can be permanently~discarded from
$H$ while preserving 2-edge~connectivity, by Proposition~\ref{propo:2ecdiscard}.
To see this, observe that $(C_0 - v_1v_2) \cup P$ contains a $v_1,v_2$
path, and also $(C_0 - v_1v_2) \cup P_{\star}$ contains a $v_1,v_2$
path, and moreover, these two paths have no internal nodes in common
(i.e., $(C_0 - v_1v_2) \cup P \cup P_{\star}$ contains a cycle incident
to $v_1$ and $v_2$).

\item
Consider the credits available from $Q$ and $Q_{\star}-Q$ for
the double ear-augmentation.
``Selling'' the unit-edge $v_1v_2$ gives $1.75$~credits
(since $v_1v_2$ is a unit-bridge of $\wh{\DTWO}$, the sum
of its retained~credit and working~credit is $1.75$, and
since $v_1v_2$ is permanently discarded from $H$,
all of this credit is released).
Moreover, we have $\ge0.75$ credits available in $Q_{\star}-Q$.
Thus, $\ge2.5$ credits are available.

\item
By modifying the arguments in the proof of Lemma~\ref{lem:findP},
we can compute (a 3-tuple) $P_{\star}, \auxpstart, \auxpend$ that
has the required properties in polynomial time.
Hence, a double ear-augmentation can be implemented in polynomial~time.
}
\end{enumerate}

Summarizing, when $v_2v_3$ is a zero-edge, then we add the two ears
$P, P_{\star}$ and permanently~discard the edge $v_1v_2$ from $H$; we have
sufficient credits to pay for the addition of the two ears and to
re-establish the credit invariant.
}
\end{description}

\begin{proposition} \label{propo:CI}
 (i)~The credit~invariant holds for $\wh{\DTWO}$.
(ii)~The credit~invariant holds at the end of every iteration of bridge~covering
(i.e., every iteration of bridge~covering preserves the credit~invariant).
\end{proposition}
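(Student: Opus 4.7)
The plan is to split the proof into the two parts. For part~(i), I would verify the invariant directly on $\wh{\DTWO}$: each unit-edge contributes $1.75$ initial credits, split as $1$ retained and $0.75$ working, so since every 2ec-block has $\ge 2$ unit-edges (as noted at the start of this section) and every 2EC connected~component of $\wh{\DTWO}$ is a single 2ec-block, part~(1) holds with $\ge 1.5$ credits per component. Part~(2) follows directly from the key property of $\wh{\DTWO}$: pendant 2ec-blocks incident to a zero-bridge have cost $\ge 3$, hence $\ge 3$ unit-edges, hence $\ge 2.25$ credits. Part~(3) is vacuous before the root is designated, and is established the moment the root is picked in step~(1) of bridge~covering: either $R$ is a pendant 2ec-block incident to a zero-bridge with $\ge 2.25$ credits by part~(2), or else $R$ is pendant with a unit-bridge and $\ge 1.5$ credits, to which we add $0.5$ credits borrowed from a b-adjacent 2ec-block that thereby still has $\ge 1$ credit as required.

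For part~(ii) I would argue by induction on iterations, assuming the invariant at the start of an iteration. An iteration fixes $R$, chooses $ru$, computes the $R$-ear $P$ (plus $P_{\star}$ in Case~3 of Lemma~\ref{lem:creditQ}), and merges every 2ec-block along the augmentation into $B^{new}$. For each non-$C_0$ connected~component $C_i$ touched by an internal node of $P$, Lemma~\ref{lem:thick} releases one unit of credit while leaving the donor 2ec-blocks either inside $B^{new}$ (so they no longer have an individual credit requirement) or b-adjacent to $B^{new}$ (where the requirement drops from $\ge 1.5$ to $\ge 1$, absorbing the $0.5$-credit donation); this pays for all but one edge of $E(P)-E(H)$. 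The remaining unit, together with $B^{new}$'s own endowment of $\ge 2$ credits (or $\ge 2.25$ if $B^{new}$ is incident to a zero-bridge in the next iteration), is supplied by Lemma~\ref{lem:creditQ} from $\credits(Q)+\credits(R)$. In the exceptional Case~3, the permanent discard of $v_1v_2$ contributes $1.75$ credits and $\credits(Q_{\star}-Q)\ge 0.75$ yields a total of $\ge 2.5$ credits, which pays for both ears and re-endows $B^{new}$.

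The main obstacle will be the bookkeeping needed to confirm that part~(3) transfers correctly to $B^{new}$ as the next iteration's root and to the 2ec-blocks now b-adjacent to it. Concretely, one must check: (a)~the ``$\ge 2$ vs $\ge 2.25$'' split in part~(3) matches what Lemma~\ref{lem:creditQ} delivers, with the extra $0.25$ available precisely when the next-chosen bridge will be a zero-bridge (handled case-by-case in Lemma~\ref{lem:creditQ}); (b)~2ec-blocks newly b-adjacent to $B^{new}$ but previously non-adjacent to $R$ started with $\ge 1.5$ credits and lost at most $0.5$ during Lemma~\ref{lem:thick}, so they retain $\ge 1$; and (c)~in Case~3, discarding $v_1v_2$ preserves 2-edge-connectivity (by the cycle through $v_1,v_2$ in $(C_0-v_1v_2)\cup P\cup P_{\star}$ and Proposition~\ref{propo:2ecdiscard}) and releases its retained plus working credit cleanly, without disturbing anyone else's credit account.
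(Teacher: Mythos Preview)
Your proposal is correct and follows essentially the same route as the paper's proof: part~(i) is verified directly from the credit allocation and the key property of $\wh{\DTWO}$, and part~(ii) is an induction on iterations that leans on Lemmas~\ref{lem:thick} and~\ref{lem:creditQ} to pay for the added edges and then checks part~(3) by partitioning 2ec-blocks into those absorbed into $B^{new}$, those b-adjacent to $B^{new}$ (which lose at most $0.5$ credit and hence retain $\ge 1$), and those untouched. The paper is a bit more explicit than you about the three exceptional cases in the $\credits(R)\to\credits(R^{new})$ transfer (the zero-bridge prefix with a white node, and Cases~2 and~3(first~subcase) where $R^{new}$ drops to $\ge 2$ but is guaranteed a unit-bridge), while you are a bit more explicit about why part~(i) holds; neither difference is substantive.
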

\label{page:CIproof}
\begin{proof}
It is easily seen that (i)~holds.

Now, focus on any iteration.
We use the notation of this section; in particular, we use $P$ to
denote the first ear added in an iteration;
moreover, we use $\earP$ to denote the corresponding $R$-ear (see
page~\pageref{page:r-ear}); also, let $R^{new}$ denote the root
2ec-block of $H^{new}$ (assume it exists).

It is easy to verify that parts~(1) and~(2) of the credit~invariant are
preserved by every iteration.
Whenever we add a unit-edge $e$ to $H$ (in an ear-augmentation)
then we ensure that $e$ has a unit of retained~credit,
by Lemmas~\ref{lem:thick},~\ref{lem:creditQ}.
Whenever we take~away credits (either working~credit or retained~credit) from a
unit-bridge $e$ of $C_0$ (e.g., when $e$ is in the prefix $Q$ of
$\earP$), then we retain sufficient credits for $e$ (if $e$ stays in
$H^{new}$ then it keeps $\ge1$ retained~credit and $\ge0$ working~credit, otherwise, it keeps $\ge0$
retained~credit and $\ge0$ working~credit);
moreover, if we take~away credit from a bridge $e$ of $C_0$, then
either $e$ is contained in the 2ec-block $R^{new}$ of $H^{new}$ or else
$e$ is permanently~discard from $H$ (thus, at most one iteration
can take~away credit from a unit-bridge of $C_0$).

Consider part~(3) of the credit~invariant.
First, let us consider the credits of $R$ and $R^{new}$.
Suppose that $R$ has $\alpha$ credits at the start of an iteration.
Then, aside from three exceptions, at the start of the next iteration,
$R^{new}$ has $\ge0.5+\alpha$ credits
(see the tables placed after the statement of Lemma~\ref{lem:creditQ}).
The first exception occurs when $Q$ has a white node $w$ other than $r$
and $Q$ consists of one zero-bridge;
in this case, $\credits(R^{new}) = \credits(R) \ge 2.25$.
The other two exceptions occur in cases~2 and~3(first~subcase);
in these two cases, $R$ has $\alpha\ge2.25$ credits (since $ru$ is a zero-bridge)
while $R^{new}$ has $\ge\alpha-0.25\ge2$ credits and
$R^{new}$ is guaranteed to be incident to a unit-bridge of $H^{new}$.
Hence, the credit~invariant pertaining to $R$ is preserved.

Consider any single ear-augmentation.
Consider the credits of the other (non-root) 2ec-blocks of $H$ and
$H^{new}$.  Any 2ec-block of $H$ that contains a node of $\earP$ is
``merged'' into $R^{new}$, hence, the credit~invariant is not relevant
for such 2ec-blocks of $H$ (we argued above that $R^{new}$ satisfies
the credit~invariant).
Consider a 2ec-block $B$ of $H$ such that there is a b-path (of $H$)
between $B$ and a black~node of $\earP$.  Then, in $H^{new}$, $R^{new}$
is b-adjacent to $B$, hence, $B$ is required to have $\ge1$ credit (by
the credit~invariant).  Although we may remove credits from such
2ec-blocks (e.g., see the proof of Lemma~\ref{lem:thick}, and note that
we take away 0.5~credits from each of $B(s_1)$ and $B(t_1)$), we ensure
that each such 2ec-block has at least one credit at the next iteration.
Lastly, consider any 2ec-block $B$ of $H$ that is (node) disjoint from
$R\cup\earP$ and is not b-adjacent to any black node of $\earP$.  Clearly, $B$
has the same credit in both $H$ and $H^{new}$.
It follows that credit~invariant~(3) is preserved in
every single ear-augmentation.

Now, consider a double ear-augmentation; thus, we have the second
subcase of case~3 that adds the two ears $P, P_{\star}$.
Let $\dblearP$ denote the union of the $R$-ear corresponding to $P$
and the $R$-ear corresponding to $P_{\star}$.
The arguments in the previous paragraph can be re-applied with one change:
we replace $\earP$ by $\dblearP$.
It follows that credit~invariant~(3) is preserved in
every double ear-augmentation.
\end{proof}

This concludes the discussion of bridge~covering.

\begin{proposition}
At the termination of the bridge~covering~step,
 $H$ is a bridgeless 2-edge~cover and the credit invariant holds
(thus, every original 2ec-block of $H$ has $\ge1.5$ credits and
every new 2ec-block of $H$ has $\ge2$ credits).
The bridge~covering~step can be implemented in polynomial time.
\end{proposition}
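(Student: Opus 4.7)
My plan is to assemble this final proposition from three ingredients already developed in the section: the preservation of the credit invariant (Proposition~\ref{propo:CI}), the per-iteration ear-augmentation analysis (Lemmas~\ref{lem:thick} and~\ref{lem:creditQ}), and the polynomial-time subroutines of Lemma~\ref{lem:findP}. The crux is a termination argument together with a verification that the 2-edge-cover property is never destroyed.

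For termination, I would argue that each iteration strictly decreases the number of bridges of $H$. Indeed, in every iteration the chosen bridge $ru$ lies on the path $Q \cup P$ (respectively on $Q \cup P \cup P_\star$ in the double-ear case), and after the augmentation all edges of $Q$ together with the edges of $P$ (and $P_\star$) are contained in the new 2ec-block $B^{new}$, so $ru$ is no longer a bridge. Because $\wh{\DTWO}$ has at most $O(n)$ bridges to begin with, at most $O(n)$ iterations occur before the \textbf{until} condition in step~(2) fails for the current $C_0$, after which step~(3) either exits or selects another bridged component; globally the loop performs $O(n)$ iterations.

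Next I would check that $H$ remains a 2-edge cover throughout. The initial graph $\wh{\DTWO}$ is a 2-edge cover by Proposition (from the post-processing subsection), and each ear-augmentation only adds edges, which can only increase node degrees. The sole place where an edge is permanently discarded is the second subcase of Case~3 in bridge~covering, where the unit-edge $v_1v_2$ is removed; but after adding the ears $P$ and $P_\star$, both $v_1$ and $v_2$ lie in the new 2ec-block $B^{new}$ (as shown in the argument via Proposition~\ref{propo:2ecdiscard} in point~(iv) of that case), so each of $v_1,v_2$ retains degree $\ge 2$ in $H^{new}$. Hence $H$ stays a 2-edge cover. Combined with the termination condition (no bridges), at the end $H$ is a bridgeless 2-edge cover.

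For the credit invariant at termination, I would simply invoke Proposition~\ref{propo:CI}: part~(i) gives the base case for $\wh{\DTWO}$, and part~(ii) gives the inductive step for each iteration, so the invariant holds throughout and in particular at termination. Since $H$ has no bridges at termination, parts~(2) and~(3) of the invariant are vacuous, and part~(1) yields exactly the claimed bounds: each original 2EC connected component (now coinciding with a 2ec-block) carries $\ge 1.5$ credits, while each new 2EC connected component carries $\ge 2$ credits. Polynomial running time follows from the $O(n)$ iteration bound, together with Lemma~\ref{lem:findP} for the shortest-ear computation for $(P,\pstart,\pend)$ in each iteration and the analogous polynomial-time routine for $(P_\star,\auxpstart,\auxpend)$ indicated in point~(vi) of Case~3; the credit accounting itself is purely local. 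The main obstacle in writing the proof is simply keeping careful track of the case analysis of Case~3 (double-ear augmentation with edge deletion) to confirm that both 2-edge-coverage and the credit invariant survive simultaneously, but both have already been verified in Lemma~\ref{lem:creditQ} and Proposition~\ref{propo:CI}.
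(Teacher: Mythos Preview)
Your proposal is correct and matches the paper's intent. The paper states this proposition without proof, treating it as a summary of the section; your argument makes explicit exactly the implicit reasoning the paper relies on---termination via a strict decrease in the number of bridges per iteration, preservation of the 2-edge-cover property (with the only deletion, $v_1v_2$ in Case~3, handled by point~(iv) and Proposition~\ref{propo:2ecdiscard}), the credit invariant via Proposition~\ref{propo:CI}, and polynomial time via Lemma~\ref{lem:findP} and point~(vi) of Case~3.
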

}

\section{ \label{s:algo-last}  The gluing step}
{
In this section and in Section~\ref{s:path-thicken}, we assume that
the input is a well-structured MAP instance.

In this section, we focus on the last step of the algorithm, namely,
the gluing step. Our goal here is to show that
the credits in $H$ suffice to update $H$ to a 2-ECSS of $G$ by adding
some edges and deleting some edges (i.e., the difference between the
number of edges added and the number of edges deleted in the
gluing~step is $\leq$ the credit of $H$ at the start of the
gluing~step).
The following result summarizes this section:

\begin{proposition}
At the termination of the bridge-covering step,
let $H$ denote the bridgeless 2-edge~cover computed by the algorithm
and suppose that the credit invariant holds;
let $\gamma$ denote $\credits(H)$.
Then the gluing step augments $H$ to a 2-ECSS $H'$ of $G$
(by adding edges and deleting edges)
such that $\cost(H') \leq \cost(H) + \gamma$.
The gluing~step can be implemented in polynomial time.
\end{proposition}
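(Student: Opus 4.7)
The plan is to prove the proposition by maintaining, throughout the gluing step, the invariant that $H$ is a bridgeless 2-edge~cover of $G$ satisfying the credit invariant of Section~\ref{s:credit-invariant}, and that the net cost of edges added minus edges deleted during any iteration is at most the credit released by that iteration. The invariant implies that every connected~component of $H$ is 2EC, since $H$ is bridgeless. If $H$ has only one connected~component, then $H$ itself is a 2-ECSS of $G$ and we output $H' = H$. Otherwise, I would iterate a merging step, each iteration strictly decreasing the number of connected~components of $H$, so the process terminates in $O(n)$ iterations.

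In an iteration, pick a ``root'' 2EC block $R^*$ (a connected~component of the current $H$). Because $G$ is 2NC (by the well-structured hypothesis), the auxiliary multi-graph obtained from $G$ by contracting each connected~component of $H$ is 2EC; in particular there is an ear $P$ of $G$ with both end~nodes in $R^*$, no internal node in $R^*$, and at least one internal node in some other connected~component of $H$. As in Lemma~\ref{lem:findP}, I would pick such a $P$ that minimizes $|E(P)-E(H)|$, subject to a secondary preference that will be used to handle a boundary case. Let $C_1,\dots,C_k$ be the connected~components of $H$ other than $R^*$ that contain an internal node of $P$. By the choice of $P$, each $C_i$ is incident to exactly two edges of $E(P)-E(H)$.

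The charging scheme follows the pattern of Lemma~\ref{lem:thick}: for $i=1,\dots,k$, use one unit of credit from $C_i$ to pay for the edge of $E(P)-E(H)$ ``entering'' $C_i$ when $P$ is traversed from $R^*$ outward; this is possible because by the credit invariant each original $C_i$ has $\ge 1.5$ credits and each new $C_i$ has $\ge 2$ credits. After this, the last unpaid edge of $E(P)-E(H)$ (the ``exit'' edge back into $R^*$) still needs to be paid for, and the merged block $B^{new}$ containing $R^*\cup C_1\cup\dots\cup C_k$, which is now a new 2EC~block, must be endowed with $\ge 2$ credits by the end of the iteration. In the generic case, the residual $\ge 0.5$ (or $\ge 1$) credit left in each $C_i$, together with the credit still available in $R^*$, covers both remaining requirements. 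The delicate boundary case (referenced as Case~2, page~\pageref{page:gluing-case2}) occurs when the available credits fall short; here, as in Case~3 of bridge~covering, I would invoke the absence of bad-pairs and the absence of $\{0,1\}$-edge-pairs to find a second ear $P_\star$ whose addition makes some unit-edge of the current $H$ redundant (in the sense of Proposition~\ref{propo:2ecdiscard}), permanently discard that unit-edge to release its $1.75$ credits, and thereby pay for the double-ear augmentation while re-establishing the credit invariant.

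The main obstacle will be the boundary case: verifying that whenever the root and the touched components do not already supply enough credits, the well-structured hypothesis (no cut~nodes, no bad-pairs, no redundant~4-cycles, no $\{0,1\}$-edge-pairs) guarantees the existence of a suitable second ear $P_\star$ and a deletable unit-edge, in complete analogy with the case analysis of Lemma~\ref{lem:creditQ}. Granting this, summing the per-iteration inequality ``net edges paid $\le$ credits consumed'' telescopes to $\cost(H')-\cost(H)\le \gamma$. Polynomial-time implementation follows because an ear $P$ with the required optimality properties can be computed by a reachability/shortest-path calculation in the contracted auxiliary graph (as in the proof of Lemma~\ref{lem:findP}), and only $O(n)$ iterations are performed.
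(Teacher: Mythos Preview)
Your high-level plan (iteratively merge 2EC components via ear augmentations, pay for the added edges from the credits of the absorbed components) matches the paper. Two things diverge, and the second is a genuine gap.

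First, you do not need to endow $B^{new}$ with $\ge 2$ credits. The paper explicitly \emph{drops} all credit requirements on the root: it assumes $R$ has zero credits and shows each iteration is self-funding from the absorbed blocks alone. Since every non-root 2ec-block has $\ge 1.5$ credits, a cycle $\hat{C}$ of $\hat{G}$ through $R$ with $|\hat{C}|\ge 3$, or with $|\hat{C}|=2$ and the non-root block having $\ge 2$ credits, already pays for itself. Maintaining $2$ credits on $B^{new}$ is an unnecessary constraint that only complicates your accounting.

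Second, and more importantly, your handling of the boundary case (the only hard case: $|\hat{C}|=2$ and the non-root block $B$ has exactly $1.5$ credits) is incorrect as stated and misses the key structural idea. Concretely: discarding a unit-edge here releases only \emph{one} credit (its retained credit), not $1.75$; the $0.75$ of working credit you are counting on is already part of $B$'s $1.5$ block credits, and the edge in question is not a unit-bridge (there are no bridges after bridge-covering). So the analogy with Case~3 of bridge-covering does not transfer.

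The paper's actual argument for this case rests on a structural observation you do not make: a 2ec-block $B$ with exactly $1.5$ credits has at most two unit-edges, hence $|V(B)|\in\{2,3,4\}$, and for $|V(B)|\in\{3,4\}$ it is a cycle. Using 2-node-connectivity of $G$ one finds two edges $e,f$ from $R$ to \emph{distinct} nodes $v_e,u_f\in V(B)$. If $v_eu_f\in E(B)$ (Case~1), then using the absence of bad-pairs one can choose $e,f$ so that $v_eu_f$ is a unit-edge; buying $e,f$ and selling $v_eu_f$ costs $2-1=1\le 1.5$, no second ear needed. If $v_e,u_f$ are non-adjacent in $B$ (Case~2), then $B$ is a $4$-cycle with two zero-edges; the absence of redundant $4$-cycles forces an edge from $\{v_2,v_4\}$ to some third block $B'\ne R$, and a short path $\widetilde{P}$ back to $B$ lets one sell a unit-edge of the $4$-cycle while absorbing $B'$ (and the blocks along $\widetilde{P}$). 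In both cases the iteration ends with a nonnegative surplus. This ``$B$ is tiny, so delete an internal edge of $B$'' idea is the crux of the gluing step, and it is absent from your sketch.
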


It is convenient to define the following multi-graph:
let $\hat{G}$ be the multi-graph obtained from $G$ by contracting each 2ec-block
$B_i$ of $H$ into a single node that we will denote by $B_i$.
Observe that $\hat{G}$ is 2EC.
Note that the algorithm ``operates'' on $G$ and never refers to $\hat{G}$;
but, for our discussions and analysis, it is convenient to refer to $\hat{G}$.

At the start of the gluing step,
recall that each original 2ec-block of $H$ has $\ge1.5$ credits and
each new 2ec-block of $H$ has $\ge2$ credits.
We pick any 2ec-block $R_0$ of $H$ and designate it as the root $R$;
then we apply iterations;
each iteration adds to $H$ the edges of an ear whose start~node
and end~node are in $R$; some iterations may add a second ear.
After each iteration, we update the notation so that $R$
denotes the 2ec-block of the current subgraph $H$ that contains $R_0$.
Our plan is to keep adding ears to $H$ until all of the nodes are in
$R$. Thus, we terminate when $H$ is 2EC, and on termination $H$ is a 2-ECSS of $G$.
In the following discussion, we assume that $R$ has zero credits, i.e.,
we ignore the credits available in $R$ while paying for the
edges added in ear-augmentation steps.

Consider a cycle $\hat{C}$ of $\hat{G}$ incident to $R$;
such a cycle exists since $\hat{G}$ is 2EC.
Let $|\hat{C}|$ denote the number of edges of $\hat{C}$; clearly, $|\hat{C}|=|V(\hat{C})|$.
Observe that each of the non-root nodes of $\hat{C}$ has $\ge1.5$ credits, hence,
we have $\ge1.5(|\hat{C}|-1)$ credits available from (the nodes of) $\hat{C}$
and this is $\ge|\hat{C}|$ whenever $|\hat{C}|\ge3$.
Thus, we have enough credit to buy all the edges of $\hat{C}$ whenever
$|\hat{C}|\ge3$. Moreover, if $|\hat{C}|=2$ and the non-root node of
$\hat{C}$ has $\ge2$ credits, then we have enough credit to buy all the
edges of $\hat{C}$.  Thus, we have insufficient credit only when
$|\hat{C}|=2$ and the non-root node of $\hat{C}$ has exactly 1.5 credits.

In what follows, we assume that all the cycles of $\hat{G}$ incident
to $R$ have insufficient credit.
Let $\hat{C}$ be a cycle of $\hat{G}$ that has insufficient credit,
where $\hat{C}=R,B,R$ and $B$ denotes the non-root node of $\hat{C}$.
Clearly, $B$ is an original 2ec-block of $H$ (otherwise, it
would have 2~credits rather than 1.5~credits).
Let $b$ denote the number of nodes\footnote{From this sentence
till the end of Section~\ref{s:algo-last},
``node'' means a node of $G$ (and not a node of $\hat{G}$).}
of $B$.
We have $b\in\{2,3,4\}$ because $B$ has $\ge\ceiling{b/2}$ unit-edges
($B$ has $\ge b$ edges by 2EC and has $\le\floor{b/2}$ zero-edges),
hence, $\credits(B)<2$ implies $b<5$.
Moreover, for $b\in\{3,4\}$, note that $B$ cannot have a cut~node
(otherwise, $B$ would have $\ge1+\ceiling{b/2}$ unit-edges), and hence,
(since $B$ is 2NC) $B$ must contain a spanning cycle.
For $b\in\{3,4\}$, let $Q(B)$ denote any spanning cycle of $B$.

The two edges of $\hat{C}$ correspond to two edges of $G$ between
$B$ and $R$; let $e$ denote one of these edges, and let $v_e$ denote
the end~node of $e$ in $B$.
Since $G$ is 2NC, $G-{v_e}$ has a path between
$(B-{v_e})$ and $R$.  Each such path has all its internal
nodes in $V(R) \cup (V(B)-\{v_e\})$
(by our assumption on cycles of $\hat{G}$ incident to $R$), hence,
there exists an edge $f$ of $G$ between $(B-{v_e})$ and $R$;
let $u_f$ denote the end~node of $f$ in $B-{v_e}$.
See Figure~\ref{f:gluing:1}.

{\begin{figure}[ht]
\centering
	\begin{tikzpicture}[thick,scale=0.70]
\draw (0,0) circle [radius=1.5];
\draw (0,0) node(r) {$R$};
\draw (5,0) circle [radius=1.5];
\draw (5,0) node(b) {$B$};
\draw (80:1) -- (3.5,0);
\filldraw
(30:1) -- (3.5,0) circle(2pt) node[pos=0.5, below]{$e$};
\draw (3.5,0) node[above left](ve)  {$v_e$};
\filldraw
(300:1) -- (3.93934,-1.06066) circle(2pt) node[pos=0.5, below]{$f$};
\draw (3.93934,-1.06066) node[below left](uf)  {$u_f$};
\end{tikzpicture}
\caption{\protect\centering Illustration of $R$, $B$, and $v_e, u_f$.}
	\label{f:gluing:1}
\end{figure}
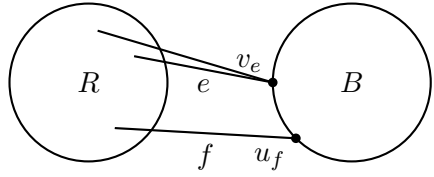
}

Now, we have two cases, depending on whether $v_e$ and $u_f$ are
adjacent in $B$ or not.

\begin{description}{
\item[Case~1: $e, f$ can be chosen such that
	$B$ has an edge between $v_e$ and $u_f$:\quad]
In this case,
we claim that $e,f$ can be chosen such that $B$ has a unit-edge
between $v_e$ and $u_f$.
By way of contradiction, suppose that $v_eu_f$ is a zero-edge.
Then we have $b\in\{3,4\}$
(otherwise, if $b=2$, then $B$ would consist of two parallel unit-edges),
and moreover, $G-\{v_e,u_f\}$ is connected
(since $\{v_e,u_f\}$ is not a bad-pair).
Hence, $G$ has an edge $rv_0$ such that $r$ is in $R$ and
$v_0$ is in $B-\{v_e,u_f\}$
(we also used our assumption on cycles of $\hat{G}$ incident to $R$).
Moreover, the spanning cycle~$Q(B)$ has an edge between
$v_0$ and $\{v_e,u_f\}$.
W.l.o.g.\ suppose that $B$ has the edge $v_0 u_f$;
this is a unit-edge (since $v_eu_f$ is a zero~edge).
Then by replacing the pair of edges $e,f$ by $rv_0,f$, we have two
edges between $R$ and $B$ such that their end~nodes in $B$ are distinct
and there exists a unit-edge of $B$ between these two end~nodes.
Our claim follows.

Now, observe that the graph $H \cup \{e,f\} - \{v_e u_f\}$ has two
edge-disjoint $v_e,u_f$~paths.
We buy the edges $e,f$ and permanently~discard the unit-edge ${v_e u_f}$;
that is, we add the two edges $e, f$ to $H$ and remove the edge
 ${v_e u_f}$ from $H$.
(In the gluing~step, when we permanently~discard a unit-edge from our solution
subgraph $H$, then one unit of retained~credit of that edge become available.)
In the resulting graph $H^{new}$,
the connected component containing $R_0$ (as well as $R$ and $B$) is 2EC,
by Proposition~\ref{propo:2ecdiscard}.
This step results in a surplus of 0.5~credits
(we get 1.5~credits from $B$, one credit from selling ${v_e u_f}$,
and we pay two credits for the edges $e,f$).

\item[Case~2: for any choice of $e, f$
	there is no edge between $v_e$ and $u_f$ in $B$:\quad]
\label{page:gluing-case2}
Then, clearly $b=|V(B)|=4$, and
$B$ has a spanning cycle $Q(B)$.
Let $Q=v_1,v_2,v_3,v_4,v_1$ denote $Q(B)$, where w.l.o.g.\
$v_e=v_1$ and $u_f=v_3$.  Since $B$ has 1.5~credits, two of the
(non-adjacent) edges of $Q$ must be zero-edges.
There must be one or more edges of $G$ incident to $v_2$ or  $v_4$, otherwise,
$Q$ would be a redundant 4-cycle of $G$.

Suppose that $G$ has the edge ${v_2v_4}$.  
See Figure~\ref{f:gluing:main}(a).
We buy the three edges $e,f,{v_2v_4}$ and we permanently~discard the two
unit-edges of $Q$.
In the resulting graph $H^{new}$,
the connected component containing $R_0$ (as well as $R$ and $B$) is 2EC,
by Proposition~\ref{propo:2ecdiscard}.
This step results in a surplus of 0.5~credits
(we get 1.5~credits from $B$, two credits from selling the two unit~edges of $Q$,
and we pay three credits for the edges $e,f,\;{v_2v_4}$).

Lastly, suppose that $v_2$ and $v_4$ are nonadjacent in $G$. Then
$G$ has an edge between another 2ec-block
${B'}$ of $H$ (where ${B'}\not=B$ and ${B'}\not=R$) and
one of $v_2$ or $v_4$, say $v_2$; let us
denote this edge by $\widetilde{e}$.  Since $G$ is 2NC, there is a path in
$G-\{v_2\}$ between ${B'}$ and $B-\{v_2\}$.  Let $\widetilde{P}$
denote such a path that has the fewest edges of $E(G)-E(H)$.  In
$\hat{G}$, observe that $\widetilde{P} \cup \{\widetilde{e}\}$ corresponds to a
cycle $\hat{C}_{{B'}}$ that is incident to $B$ and ${B'}$.
Moreover, in $\hat{G}$, note that $\hat{C}_{{B'}}$ cannot be
incident to $R$
(by our assumption on cycles of $\hat{G}$ incident to $R$).
See Figure~\ref{f:gluing:main}(b).
Let $e_Q$ denote the unit-edge of $Q$ that has its end~nodes among
$v_1, v_2, v_3$.
It can be seen that
$H \cup \{e,f,\widetilde{e}\} \cup E(\widetilde{P}) - \{e_Q\}$
has two edge-disjoint paths between the end~nodes of $e_Q$.
We buy $e,f$, we permanently~discard $e_Q$, and moreover, we buy the edges of
$\big(E(\widetilde{P})-E(H)\big) \cup \{\widetilde{e}\}$.
In the resulting graph $H^{new}$,
the connected component containing $R_0$ (as well as $R$ and $B$) is 2EC,
by Proposition~\ref{propo:2ecdiscard}.
(For example, suppose that $e_Q=v_1v_2$ and the end~node of $\widetilde{P}$ in $B$ is $v_3$;
then, we add the ear formed by $e,v_1v_4,v_4v_3,f$, and after that,
we add the edges in $E(G)-E(H)$ of the closed ear formed by $v_3v_2,\widetilde{e},\widetilde{P}$.)
This step results in a surplus of credits;
note that the sum of the credits of the 2ec-blocks (excluding $B$)
incident to $\widetilde{P}$ minus the size of
$\big(E(\widetilde{P})-E(H)\big) \cup \{\widetilde{e}\}$ is at least~$-0.5$.
}
\end{description}
{
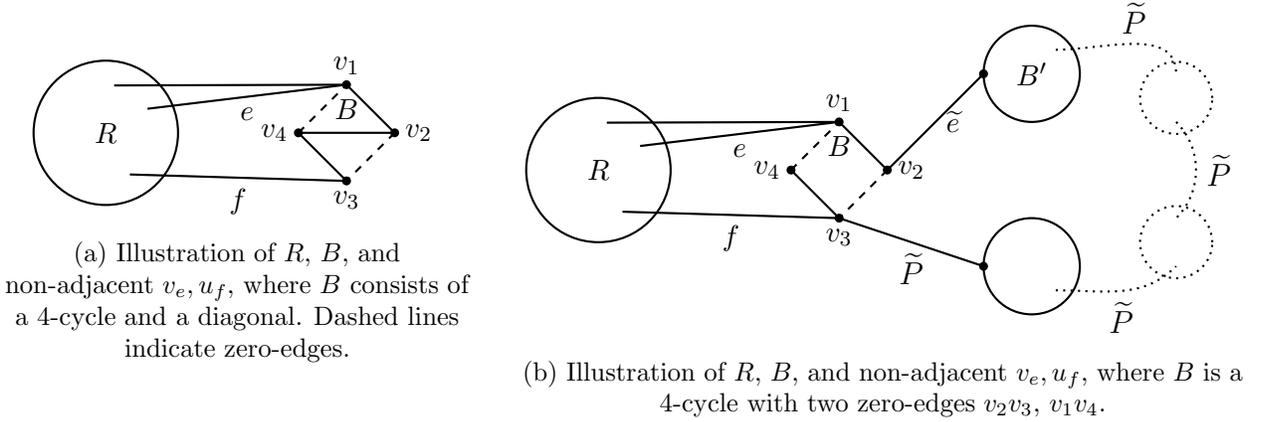
\begin{figure}[htb]
\centering
\begin{subfigure}{0.38\textwidth}
    \centering
    \begin{tikzpicture}[thick,scale=0.64]
    \draw (0,0) circle [radius=1.5];
    \draw (0,0) node(r) {$R$};
    \draw[thick] (5,1) node[above] (v1){$v_1$} -- (6,0);
    \draw[dashed]	(6,0) node[right] (v2){$v_2$} -- (5,-1);
    \draw[thick]	(5,-1) node[below] (v3){$v_3$} -- (4,0);
    \draw[dashed]	(4,0) node[left] (v4){$v_4$} -- (5,1);
    \draw (5,0.5) node(b) {$B$};
    \draw[thick]	(v2) -- (v4);
    \draw (80:1) -- (5,1);
    \filldraw
    (30:1) -- (5,1) circle(2pt) node[pos=0.5, below]{$e$};
    \filldraw
    (300:1) -- (5,-1) circle(2pt) node[pos=0.5, below]{$f$};
    \filldraw (6,0) circle(2pt);
    \filldraw (4,0) circle(2pt);
    \end{tikzpicture}
    \caption{\protect\centering Illustration of $R$, $B$, and
    non-adjacent $v_e, u_f$, where $B$ consists of a 4-cycle and a
    diagonal. Dashed lines indicate zero-edges.}
	\label{f:gluing:2}
\end{subfigure}
\hspace*{\fill}
\begin{subfigure}{0.58\textwidth}
    \centering
    \begin{tikzpicture}[thick,scale=0.64]
    \draw (0,0) circle [radius=1.5];
    \draw (0,0) node(r) {$R$};
    \draw[thick] (5,1) node[above] (v1){$v_1$} -- (6,0);
    \draw[dashed] (6,0) node[right] (v2){$v_2$} -- (5,-1);
    \draw[thick] (5,-1) node[below] (v3){$v_3$} -- (4,0);
    \draw[dashed] (4,0) node[left] (v4){$v_4$} -- (5,1);
    \draw (5,0.5) node(b) {$B$};
    \draw (80:1) -- (5,1);
    \filldraw
    (30:1) -- (5,1) circle(2pt) node[pos=0.5, below]{$e$};
    \filldraw
    (300:1) -- (5,-1) circle(2pt) node[pos=0.5, below]{$f$};
    \filldraw (6,0) circle(2pt);
    \filldraw (4,0) circle(2pt);
    \draw (9,2) node(hb0) {${B'}$} circle [radius=1];
    \filldraw
    (6,0) -- (8,2) circle(2pt) node[pos=0.5, right]{$\widetilde{e}$};
    \draw[dotted] (12,1.5)  node(hb1) {} circle [radius=.75];
    \draw[dotted] (12,-1.5) node(hb2) {} circle [radius=.75];
    \draw (9,-2) node(hb3) {} circle [radius=1];
    \draw[dotted] (9.5,2.5)  to[out=360,in=90]
    	node[pos=0.5, above]{\large $\widetilde{P}$}
    	(12,2.0);
    \draw[dotted] (12, 1.0) to[out=315,in=45]
    	node[pos=0.5, right]{\large $\widetilde{P}$}
    	(12,-1.0);
    \draw[dotted] (9.5,-2.5) to[out=360,in=225]
    	node[pos=0.5, below]{\large $\widetilde{P}$}
    	(12,-2.0);
    \filldraw
    (5,-1) -- (8,-2) circle(2pt) node[pos=0.5, below]{\large $\widetilde{P}$};
    \end{tikzpicture}
    \caption{\protect\centering Illustration of $R$, $B$, and
    non-adjacent $v_e, u_f$, where $B$ is a 4-cycle with two zero-edges $v_2v_3$, $v_1v_4$.}
	\label{f:gluing:3}
\end{subfigure}
\caption{\protect\centering Illustrations for Case~2.}
\label{f:gluing:main}
\end{figure}
}
}

{
\section{ \label{s:lowerbounds} Examples showing lower bounds}

This section presents two examples that give
lower~bounds on our results on MAP;
each example is a well-structured instance of MAP.
The first example gives a construction
such that $\opt \approx \frac74 \dtwocost$.
This shows that Theorem~\ref{thm:approxbydtwo} is essentially tight.
The second example gives a construction such that the cost of the
solution computed by our algorithm is $\approx\frac74\opt$.

\subsection{Optimal solution versus min-cost 2-edge~cover}

{
\begin{figure}[ht]
    \centering
    \begin{tikzpicture}[scale=0.8]
        \begin{scope}[every node/.style={circle, fill=black, draw, inner sep=0pt,
        minimum size = 0.15cm
        }]
            
            \node[label={[label distance=2]45:$v_1$}] (v1) at (0,0) {};
            \node[label={[label distance=2]45:$v_2$}] (v2) at (2,0) {};
            \node[label={[label distance=2]45:$v_3$}] (v3) at (2,2) {};
            \node[label={[label distance=2]45:$v_4$}] (v4) at (0,2) {};
            \node[label={[label distance=2]45:$v_5$}] (v5) at (4,0) {};
            \node[label={[label distance=2]45:$v_6$}] (v6) at (6,0) {};
            \node[label={[label distance=2]45:$v_7$}] (v7) at (6,2) {};
            \node[label={[label distance=2]45:$v_8$}] (v8) at (4,2) {};
            
            \node[fill=none, thick, dotted, minimum size=1.4cm] (b0) at (-2,3.25) {};
        \end{scope}

        \begin{scope}[every edge/.style={draw=black}]
            \path (v1) edge[bend left=30] node {} (b0);
            \path (v3) edge[bend right=30] node {} (b0);
            \path (v1) edge[min distance=4.4cm,in=315,out=345] node {} (v7);
            
            \path (v1) edge node {} (v2);
            \path[dashed] (v2) edge node {} (v3);
            \path (v3) edge node {} (v4);
            \path[dashed] (v4) edge node {} (v1);
            \path (v5) edge node {} (v6);
            \path[dashed] (v6) edge node {} (v7);
            \path (v7) edge node {} (v8);
            \path[dashed] (v8) edge node {} (v5);
            \path (v3) edge node {} (v8);
            \path (v2) edge node {} (v5);
        \end{scope}

        \begin{scope}[every node/.style={draw=none,rectangle}]
            \node (b0label) at (-2,3.25) {$B_0$};
        \end{scope}
    \end{tikzpicture}
    \caption{\protect\centering
	Example graph (with $k=1$ copy of the 8-node gadget) where
  $\opt\geq7k+\epsilon$ and $\dtwocost\leq4k+\epsilon$, where $\epsilon$
  is a constant.
  Edges of cost~zero and~one are illustrated by dashed and solid lines, respectively.
  $B_0$ is the root 2ec-block; $\epsilon$ is the cost of an optimal
  solution on $B_0$.}
	\label{f:tightD2algo}
\end{figure}
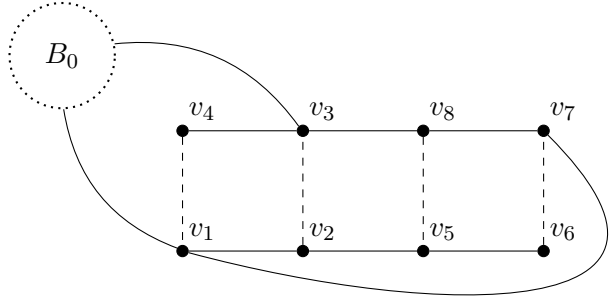
}

\begin{proposition} \label{propos:example-opt}
For any $k \in \mathbb{N}$, there exists a well-structured $MAP$ instance $G_k$ such
that $\dtwocost(G_k) \leq 4k+3$ and $\opt(G_k) \geq 7k+3$.
\end{proposition}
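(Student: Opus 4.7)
The plan is to describe the construction $G_k$ explicitly, establish the upper bound on $\dtwocost(G_k)$ by exhibiting a concrete 2-edge cover, and then establish the lower bound on $\opt(G_k)$ by a per-gadget cut/degree argument. The construction generalizes the $k=1$ picture in Figure~\ref{f:tightD2algo}: take a root 2ec-block $B_0$ (e.g., a triangle) of cost $3$, and attach $k$ disjoint copies of the \emph{8-node gadget}. Each gadget consists of two node-disjoint 4-cycles $C^{(1)}=v_1v_2v_3v_4v_1$ and $C^{(2)}=v_5v_6v_7v_8v_5$ where $v_2v_3, v_1v_4, v_6v_7, v_5v_8$ are zero-edges and the remaining cycle edges are unit-edges, together with two unit-edges $v_2v_5$ and $v_3v_8$ joining the two 4-cycles and three external unit-edges (to $B_0$ and/or adjacent gadgets) attached at $v_1, v_3$, and $v_7$ precisely so that only $v_4$ and $v_6$ retain degree~2 and so that the only bad-pair-like configurations are destroyed; one then verifies routinely (by checking the four defining properties) that $G_k$ is well-structured.

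\textbf{Upper bound on $\dtwocost$.} For the first inequality, I will exhibit the 2-edge cover $F^* = E(B_0')\cup \bigcup_{i=1}^k \bigl(E(C^{(1)}_i)\cup E(C^{(2)}_i)\bigr)$, where $B_0'$ is a minimum 2-edge cover of $B_0$ (cost $\le 3$) and the $C^{(j)}_i$'s are the 4-cycles of the $i$-th gadget. Each 4-cycle is 2-regular on its four nodes, so each of $v_1,\dots,v_8$ is incident to exactly two edges of $F^*$, and every node of $B_0$ is covered by $B_0'$. Each 4-cycle costs $2$ (two unit-edges), so $\cost(F^*)\le 2\cdot 2k+3=4k+3$.

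\textbf{Lower bound on $\opt$.} This is the main technical step; I will argue per gadget and then sum. Fix any 2-ECSS $H^*$ of $G_k$ and consider one gadget. Since $v_4$ has degree~$2$ in $G_k$, both $v_1v_4$ (zero) and $v_3v_4$ (unit) lie in $H^*$; likewise both $v_5v_6$ (unit) and $v_6v_7$ (zero) lie in $H^*$. Next, examining the node cut $\{v_2,v_3\}$ (which separates $\{v_1,v_4\}$ from the rest inside the gadget), one of the edges $v_1v_2,v_3v_4$ must be in $H^*$ and in fact both are forced: $v_1v_4$ must not be a bridge in $H^*$, so there is a $v_1,v_4$ path in $H^*-v_1v_4$, and the only such path available that avoids using $v_3v_4$ twice forces $v_1v_2\in E(H^*)$. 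The symmetric argument on $C^{(2)}$ forces $v_7v_8\in E(H^*)$. Thus all four unit-edges of the two 4-cycles lie in $H^*$, contributing $4$ per gadget.

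The remaining contribution of $3$ per gadget comes from the cuts crossing between the two 4-cycles and the cut separating the gadget from the rest. First, the cut $\delta(\{v_5,v_6,v_7,v_8\})$ of $G_k$ consists of the two unit-edges $v_2v_5,v_3v_8$ together with the external unit-edge at $v_7$; by 2-edge-connectivity, $|H^*\cap \delta(\{v_5,v_6,v_7,v_8\})|\ge 2$, contributing at least $2$ more unit-edges that are disjoint from the $4$ counted above. Second, the cut separating the gadget's node set from $V(G_k)\setminus(\text{gadget})$ consists solely of the three external unit-edges and must meet $H^*$ in $\ge 2$ edges; at least one such edge is not among those counted previously (since the previous count included at most one external edge, the one at $v_7$). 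Altogether each gadget contributes at least $7$ unit-edges of $H^*$, disjointly across gadgets. Finally, any 2-ECSS restricted to $B_0$ costs at least $\opt(B_0)=3$, and these edges are disjoint from the gadget edges. Summing yields $\cost(H^*)\ge 7k+3$.

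\textbf{Main obstacle.} The delicate part is the second half of the lower bound: showing that after the $4k$ mandatory unit-edges of the 4-cycles, each gadget still contributes three more \emph{disjoint} unit-edges via the two cuts (inter-cycle and gadget-to-outside). The external edges incident to $v_1,v_3,v_7$ must be placed carefully in the construction so that (i) $G_k$ is 2NC with no bad-pair, and (ii) the two cuts used in the lower-bound argument intersect in exactly one edge (at $v_7$) so that double-counting is controlled. I expect to verify this by a small case analysis on which pairs of external edges are chosen by $H^*$.
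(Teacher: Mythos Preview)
Your construction is not the same as the paper's, and the difference is fatal for the lower bound. In the paper, each gadget has an \emph{internal} unit-edge $v_1v_7$ and only \emph{two} external unit-edges (at $v_1$ and $v_3$, both going to $B_0$). You replaced the internal $v_1v_7$ by a third external edge at $v_7$. With your gadget (say for $k=1$, with the three externals landing at distinct vertices of a unit-cost triangle $B_0$) the following is a 2-ECSS:
take all of $B_0$, the two external edges at $v_1$ and $v_7$, the path $v_1,v_4,v_3,v_2,v_5$ inside the gadget, and the full $4$-cycle $v_5,v_6,v_7,v_8,v_5$. This has an ear decomposition (big cycle $B_0,v_1,v_4,v_3,v_2,v_5,v_6,v_7,B_0$ plus the ear $v_5,v_8,v_7$ plus the ear through the unused vertex of $B_0$), so it is 2EC, and its cost is $3+2+4=9<7\cdot 1+3$. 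For general $k$ one gets cost $6k+3$, so $\opt(G_k)\le 6k+3$ and your claimed bound $\opt(G_k)\ge 7k+3$ fails.

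The place where your argument breaks is the step ``$v_1v_4$ must not be a bridge \dots\ forces $v_1v_2\in E(H^*)$''. The alternate $v_1,v_4$ path in $H^*-v_1v_4$ can leave the gadget through the external edge at $v_1$, traverse $B_0$, re-enter through the external edge at $v_7$ (or $v_3$), and reach $v_4$ via $v_3v_4$; so $v_1v_2$ is \emph{not} forced. Consequently your ``four forced cycle unit-edges'' premise is wrong, and the subsequent $3$-from-two-cuts count cannot recover the missing unit. The paper's proof avoids this entirely: with only two external edges (both forced, since they constitute the whole cut $\delta(V(J_i))$) and the two degree-forced unit-edges $v_3v_4,v_5v_6$, it already has $4$ units, and then does a short case analysis on which of the three edges of the internal cut $\delta(\{v_5,v_6,v_7,v_8\})=\{v_1v_7,v_2v_5,v_3v_8\}$ are present, using a degree constraint at $v_2$, $v_7$, or $v_8$ to pick up the seventh unit-edge when only two cut edges are chosen. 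The internal edge $v_1v_7$ is exactly what makes that cut have three edges while keeping the gadget attached to the rest of $G_k$ by only two edges.
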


\begin{proof}
The graph $G_k$ consists of a root 2-ec block $B_0$ and $k$ copies
$J_1,\dots,J_k$ of a gadget subgraph $J$.
The gadget subgraph $J$ consists of 8 nodes $v_1,\dots,v_8$ and 11 edges;
there are four zero-edges $v_1v_4,v_2v_3,v_5v_8,v_6v_7$, and
seven unit-edges $v_1v_2,v_1v_7,v_2v_5,v_3v_4,v_3v_8,v_5v_6,v_7v_8$;
see the subgraph induced by the nodes $v_1,\dots,v_8$ in
Figure~\ref{f:tightD2algo}; observe that 8~of the 11~edges form two
4-cycles (namely, $v_1,v_2,v_3,v_4,v_1$ and $v_5,v_6,v_7,v_8,v_5$)
and the other three edges are $v_2v_5$, $v_3v_8$, and $v_1v_7$.

Let $B_0$ be a 6-cycle $w_1,\dots,w_6,w_1$ that has 3~unit-edges and 3~zero-edges.

$G_k$ has two unit-edges between each copy of the gadget subgraph $J_i$
($i=1,\dots,k$) and $B_0$; these two edges are incident to the nodes
$v_1$ and $v_3$ of $J_i$ (see the illustration in
Figure~\ref{f:tightD2algo}) and to the nodes $w_1$ and $w_4$ of $B_0$.
Observe that the subgraph of $G_k$ consisting of $B_0$ and the two
4-cycles (namely, $v_1,v_2,v_3,v_4,v_1$ and $v_5,v_6,v_7,v_8,v_5$) of
each copy of the gadget subgraph is a (feasible) 2-edge~cover of $G_k$
of cost $4k+3$.
Hence, $\dtwocost(G_k)\leq 4k+3$.

Finally, we claim that $\opt(G_k) \geq 7k+3$.
In what follows, we use $\optsol$ to denote an optimal solution of
$G_k$, i.e., $\optsol$ denotes an arbitrary but fixed min-cost
2-ECSS of $G_k$.
Clearly, $\optsol$ has to contain all the edges of $B_0$ as well as
the two edges between $B_0$ and each copy of the gadget subgraph.
Now, we focus on one copy $J_i$ of the gadget subgraph, and let
$\opt(G_k,J_i)$ denote the cost of the edges of $\optsol$ incident to $J_i$.
We will show that $\opt(G_k,J_i)\geq7$, hence, it follows that
$\opt(G_k)=3+\sum_{i=1}^{k}\opt(G_k,J_i)\geq 7k+3$.
Since $\deg(v_4)=\deg(v_6)=2$, $\optsol$ must pick the edges ${v_1 v_4}$
and ${v_3 v_4}$ as well as the edges ${v_5 v_6}$ and ${v_6 v_7}$.
Consider the cut $\delta_{G_k}(\{v_5, v_6, v_7, v_8\})$.
This cut has three unit-edges: ${v_1 v_7},\; {v_2 v_5},\; {v_3 v_8}$.
We have the following cases:\\
Case 1: $\optsol$ picks all three edges of the cut. Then $\opt(G_k,J_i) \geq 7$.\\
Case 2: $\optsol$ does not pick all three edges of the cut. Then we
will show that it must pick two edges from the cut and one more
unit-edge, thus giving us $\opt(G_k,J_i)\geq7$. We have three subcases:\\
	Case 2.1: ${v_1 v_7} \;\not\in\; \optsol$: 
	  then $\optsol$ must pick ${v_7 v_8}$ and the other 2 edges of the cut.\\
	Case 2.2: ${v_2 v_5} \;\not\in\; \optsol$: 
	  then $\optsol$ must pick ${v_1 v_2}$ and the other 2 edges of the cut.\\
	Case 2.3: ${v_3 v_8} \;\not\in\; \optsol$: 
	  then $\optsol$ must pick ${v_7 v_8}$ and the other 2 edges of the cut.\\
Hence, $\opt(G_k,J_i) \geq 7$ and we have $\opt(G_k) \geq 7k+3$. This completes the proof.
\end{proof}

\subsection{Optimal solution versus algorithm's solution}

We present a family of graphs $G_k$, $k = 1,2,3\ldots$
for which the ratio of the cost of a solution obtained by our algorithm
and the cost of an optimal solution approaches $\frac{7}{4}$.
Let the solution subgraph found by applying our algorithm to $G_k$ be
denoted by $\ALGO(G_k)$.

\begin{proposition} \label{propos:alg-opt}
For any $k \in \mathbb{N}$, there exists a well-structured $MAP$ instance $G_k$
such that $\cost(\ALGO(G_k)) \geq 7k + 6$ and
$\opt(G_k) \leq 4k + 7$.
\end{proposition}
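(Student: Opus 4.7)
The plan is to construct, for each $k \in \mathbb{N}$, a well-structured MAP~instance $G_k$ consisting of a small constant-size root 2ec-block $B_0$ together with $k$ identical copies $J_1,\dots,J_k$ of a gadget subgraph $J$, attached to $B_0$ in such a way that:
(i) a min-cost 2-edge cover $\DTWO^*$ of cost $\approx 4k$ exists for which the bridge-covering and gluing phases are forced to add $\approx 3k$ unit-edges, and
(ii) a ``diagonally different'' 2-ECSS of cost $\approx 4k$ exists that does not contain $\DTWO^*$ as a subgraph.
The intended gadget is a close cousin of the 8-node gadget of Proposition~\ref{propos:example-opt}: one wants $J$ to admit two natural 2-edge covers of cost~$4$, one of which (played by $\DTWO^* \cap J$) is a disjoint union of two 4-cycles each forming its own 2ec-block so that the algorithm sees two pendant 2ec-blocks connected only through $B_0$, while the other (played by $\opt \cap J$) is a single Hamiltonian 4-cycle plus chords that is already 2EC.

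First I would exhibit a 2-ECSS $H^{opt}_k$ of $G_k$ of cost $\le 4k+7$ by taking the ``good'' 2-ECSS on each gadget together with a constant-cost 2-ECSS of $B_0$; verifying 2-edge-connectivity reduces to checking that each gadget's subgraph is 2EC and that the two attachment unit-edges to $B_0$ close up the global 2-edge-connectivity. This immediately yields $\opt(G_k) \le 4k+7$.

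Second, and this is where the real work lies, I would show $\cost(\ALGO(G_k)) \ge 7k+6$. The strategy is to argue that (a)~some min-cost 2-edge cover $\DTWO^*$ of $G_k$ of cost $4k+O(1)$ is a valid output of step~(1) of the algorithm (i.e., it is tied with the ``good'' 2-edge cover used by $\opt$), so the algorithm is allowed to pick it; and (b)~starting from $\DTWO^*$, every execution of bridge~covering followed by gluing contributes at least~$7$ unit-edges per gadget. For~(b), I would show, gadget by gadget, that the two pendant 4-cycles of $\DTWO^* \cap J_i$ force at least one bridge-covering ear into $B_0$ that must traverse $J_i$, costing $\ge 2$ unit-edges beyond $\DTWO^*$, and that the remaining 2ec-block created inside $J_i$ is then in the ``Case~2'' situation of the gluing step (a $4$-node 2ec-block with two zero-edges and no diagonal), so that gluing must spend another unit-edge to attach it to the root. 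A careful bookkeeping over the $k$ gadgets and the constant correction from $B_0$ yields the bound $7k+6$.

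The main obstacle is step~(b): the algorithm has many degrees of freedom (the choice of the ear $P$ and of $\pstart,\pend$ in Lemma~\ref{lem:findP}, the choice of the edges $e,f$ in the gluing step, the choice of the root 2ec-block $R$, etc.), and I must argue that no sequence of such choices can save the three extra unit-edges per gadget. This in turn forces the gadget $J$ to be designed so that the non-$\DTWO^*$ edges of $G$ between $V(J_i)$ and $V(G_k)-V(J_i)$ are restricted to a small set of carefully placed unit-edges, thereby ruling out any short-cut ears that would merge multiple gadgets into a single 2ec-block cheaply. I would also need to verify throughout that $G_k$ is genuinely well-structured, i.e.\ it contains no $\{0,1\}$-edge pairs, no redundant 4-cycles, no cut nodes and no bad-pairs, so that the comparison with our $\frac{7}{4}$-approximation algorithm is meaningful and does not get absorbed by the pre-processing of Section~\ref{s:pre-proc1}.
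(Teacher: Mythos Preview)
Your outline has the right shape (root block plus $k$ gadgets, two competing 2-edge covers, one cheap for $\opt$ and one expensive for $\ALGO$), but several concrete points are off compared to the paper's construction and would block a proof as written.

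First, the mechanism you describe is not the one that actually bites. In the paper's $G_k$, the chosen $\wh{\DTWO}$ is already bridgeless: it consists of $B_0$ together with the two disjoint 4-cycles of each gadget, as separate 2EC connected components. There are no bridges and hence no ``pendant 2ec-blocks''; the bridge-covering phase does nothing. All of the extra cost is incurred in the \emph{gluing step}, specifically in Case~2: in each iteration the root $R$ sees a 4-cycle $B$ with two zero-edges whose two attachment points $v_e,u_f$ are non-adjacent, so the algorithm buys $e,f$ and the auxiliary ear through $B'$ (the second 4-cycle of the same gadget), selling one unit-edge, for a net of $7$ unit-edges per gadget. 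Your plan to charge two extra edges to bridge-covering and one to gluing does not match this picture.

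Second, the attachment pattern matters. The paper does \emph{not} hang all $J_i$ directly off $B_0$; it chains them: $J_1$ attaches to $B_0$ via $e_1,f_1$, and $J_i$ attaches to $J_{i-1}$ via $e_i,f_i$ for $i\ge 2$. This chaining is what makes $\opt(G_k)\le 4k+7$: the optimal solution threads two long paths through the whole chain using only four unit-edges $e_i,f_i,g_i,h_i$ per gadget, closed off by a single extra edge $w_2^k w_3^k$ at the end. If every gadget attached independently to $B_0$, each gadget would need at least five unit-edges in any 2-ECSS (you would have to close the 8-cycle inside each $J_i$), and the $4k+7$ bound would fail.

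Finally, you are making the lower bound harder than necessary. The statement only asks for \emph{some} execution of the (nondeterministic) algorithm with cost $\ge 7k+6$; the paper simply exhibits one such run by fixing the tie-breaking (choice of $\wh{\DTWO}$, choice of root, choice of $e,f,\widetilde e,\widetilde P$ in each gluing iteration). There is no need to prove that \emph{every} sequence of choices pays $7$ per gadget.
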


\begin{proof}
Let $J$ be a gadget on nodes $v_1,v_2,v_3,v_4, w_1,w_2,w_3,w_4$, with
edges $v_1v_4$, $v_2v_3$, $w_1w_2$, and $w_3w_4$ of cost~zero, and
edges $v_1v_2$, $v_3v_4$, $w_1w_4$, $w_2w_3$, $g=v_4w_1$ and $h=v_2v_4$
of cost~one, as shown in Figure~\ref{f:algo:gadget} (where dashed and solid
lines represent edges of cost~zero and~one, respectively).

{
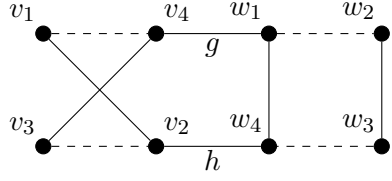
\begin{figure}[ht]
	\centering
	\begin{tikzpicture}[scale=0.75]
	\begin{scope}[every node/.style={circle, fill=black, draw, inner sep=0pt,
	minimum size = 0.2cm
	}]
	\node[label={[label distance=2]135:$v_1$}] (v1) at (0,0) {};
	\node[label={[label distance=2]45:$v_4$}] (v4) at (2,0) {};
	\node[label={[label distance=2]135:$v_3$}] (v3) at (0,-2) {};
	\node[label={[label distance=2]45:$v_2$}] (v2) at (2,-2) {};
	\node[label={[label distance=2]135:$w_1$}] (w1) at (4,0) {};
	\node[label={[label distance=2]135:$w_2$}] (w2) at (6,0) {};
	\node[label={[label distance=2]135:$w_4$}] (w4) at (4,-2) {};
	\node[label={[label distance=2]135:$w_3$}] (w3) at (6,-2) {};
	\end{scope}
	\begin{scope}[every edge/.style={draw=black}]
		\path[dashed] (v1) edge node {} (v4);
		\path[dashed] (v3) edge node {} (v2);
		\path[dashed] (w1) edge node {} (w2);
		\path[dashed] (w3) edge node {} (w4);
		\path (v1) edge node {} (v2);
		\path (v3) edge node {} (v4);
		\path (w1) edge node {} (w4);
		\path (w2) edge node {} (w3);
		\path (v4) edge node {} (w1);
		\path (v2) edge node {} (w4);
	\end{scope}
	\begin{scope}[every node/.style={draw=none,rectangle}]
	    \node (g) at (3,-0.25) {$g$};
	    \node (h) at (3,-2.25) {$h$};
	\end{scope}
	\end{tikzpicture}
	\caption{\protect\centering
	The gadget $J$.
  Edges of cost~zero and~one are illustrated by dashed and solid lines, respectively.}
	\label{f:algo:gadget}
\end{figure}
}

The graph $G_k = (V_k, E_k)$ is constructed as follows. We start with a
6-cycle $B_0 = b_1,b_2,b_3,b_4,b_5,b_6,b_1$ of unit-edges, of cost $6$. We
place $k$ copies $J_1,\ldots,J_k$ of the gadget $J$ in the following
manner. Let $v_j^i$ and $w_{\ell}^i$ denote the nodes $v_j$ and
$w_{\ell}$ of $J_i$, and let $g_i$ and $h_i$ denote the edges $g$ and
$h$ of $J_i$. First, we attach $J_1$ to $B_0$ by adding the two
unit-edges $e_1 = b_1v_1^1$ and $f_1 = b_4v_3^1$. Then, for each
$i\in\{2,\ldots,k\}$, we attach $J_i$ to $J_{i-1}$ by adding the two
unit-edges $e_i = w_2^{i-1}v_1^i$ and $f_i = w_3^{i-1}v_3^i$. The graph
$G_k$ is illustrated in Figure~\ref{f:algo:Gk}. Observe that $G_k$ is
a well-structured MAP instance.

{
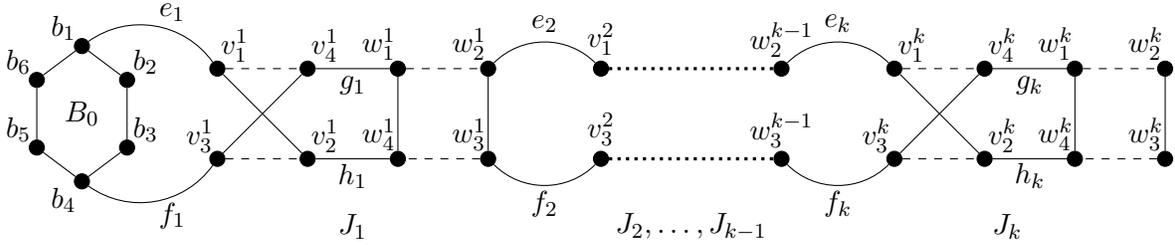
\begin{figure}[ht]
	\centering
	\begin{tikzpicture}[scale=0.6]
	\begin{scope}[every node/.style={circle, fill=black, draw, inner sep=0pt,
	minimum size = 0.2cm
	}]
    	\node[label={[label distance=0]135:$b_1$}] (b1) at (-3,0.5) {};
    	\node[label={[label distance=0]45:$b_2$}] (b2) at (-2,-0.25) {};
    	\node[label={[label distance=0]45:$b_3$}] (b3) at (-2,-1.75) {};
    	\node[label={[label distance=0]225:$b_4$}] (b4) at (-3,-2.5) {};
    	\node[label={[label distance=0]135:$b_5$}] (b5) at (-4,-1.75) {};
    	\node[label={[label distance=0]135:$b_6$}] (b6) at (-4,-0.25) {};
    	
    	\node[label={[label distance=0]75:$v_1^1$}] (v1) at (0,0) {};
    	\node[label={[label distance=0]75:$v_4^1$}] (v4) at (2,0) {};
    	\node[label={[label distance=0]105:$v_3^1$}] (v3) at (0,-2) {};
    	\node[label={[label distance=0]75:$v_2^1$}] (v2) at (2,-2) {};
    	\node[label={[label distance=0]105:$w_1^1$}] (w1) at (4,0) {};
    	\node[label={[label distance=0]105:$w_2^1$}] (w2) at (6,0) {};
    	\node[label={[label distance=0]105:$w_4^1$}] (w4) at (4,-2) {};
    	\node[label={[label distance=0]105:$w_3^1$}] (w3) at (6,-2) {};
    	
    	\node[label={[label distance=0]90:$v_1^2$}] (v21) at (8.5,0) {};
    	\node[label={[label distance=0]90:$v_3^2$}] (v23) at (8.5,-2) {};
    	
    	\node[label={[label distance=-6]90:$w_2^{k-1}$}] (w2k1) at (12.5,0) {};
    	\node[label={[label distance=-6]90:$w_3^{k-1}$}] (w3k1) at (12.5,-2) {};
    	
    	\node[label={[label distance=0]75:$v_1^k$}] (vv1) at (15,0) {};
    	\node[label={[label distance=0]75:$v_4^k$}] (vv4) at (17,0) {};
    	\node[label={[label distance=0]105:$v_3^k$}] (vv3) at (15,-2) {};
    	\node[label={[label distance=0]75:$v_2^k$}] (vv2) at (17,-2) {};
    	\node[label={[label distance=0]105:$w_1^k$}] (ww1) at (19,0) {};
    	\node[label={[label distance=0]105:$w_2^k$}] (ww2) at (21,0) {};
    	\node[label={[label distance=0]105:$w_4^k$}] (ww4) at (19,-2) {};
    	\node[label={[label distance=0]105:$w_3^k$}] (ww3) at (21,-2) {};
	
	\end{scope}
	\begin{scope}[every edge/.style={draw=black}]
		\path (b1) edge node {} (b2);
		\path (b2) edge node {} (b3);
		\path (b3) edge node {} (b4);
		\path (b4) edge node {} (b5);
		\path (b5) edge node {} (b6);
		\path (b6) edge node {} (b1);
		
		\path (b1) edge[bend left=45] node {} (v1);
		\path (b4) edge[bend right=45] node {} (v3);
		
		\path[dashed] (v1) edge node {} (v4);
		\path[dashed] (v3) edge node {} (v2);
		\path[dashed] (w1) edge node {} (w2);
		\path[dashed] (w3) edge node {} (w4);
		\path (v1) edge node {} (v2);
		\path (v3) edge node {} (v4);
		\path (w1) edge node {} (w4);
		\path (w2) edge node {} (w3);
		\path (v4) edge node {} (w1);
		\path (v2) edge node {} (w4);
		
		\path (w2) edge[bend left=45] node {} (v21);
		\path (w3) edge[bend right=45] node {} (v23);
		
		\path[very thick,dotted] (v21) edge node {} (w2k1);
		\path[very thick,dotted] (v23) edge node {} (w3k1);
		
		\path (w2k1) edge[bend left=45] node {} (vv1);
		\path (w3k1) edge[bend right=45] node {} (vv3);
		
		\path[dashed] (vv1) edge node {} (vv4);
		\path[dashed] (vv3) edge node {} (vv2);
		\path[dashed] (ww1) edge node {} (ww2);
		\path[dashed] (ww3) edge node {} (ww4);
		\path (vv1) edge node {} (vv2);
		\path (vv3) edge node {} (vv4);
		\path (ww1) edge node {} (ww4);
		\path (ww2) edge node {} (ww3);
		\path (vv4) edge node {} (ww1);
		\path (vv2) edge node {} (ww4);
		
	\end{scope}
	\begin{scope}[every node/.style={draw=none,rectangle}]
    	\node (e1) at (-1,1.25) {$e_1$};
    	\node (f1) at (-1,-3.25) {$f_1$};
	    \node (g1) at (3,-0.375) {$g_1$};
	    \node (h1) at (3,-2.375) {$h_1$};
	    
	    \node (e2) at (7.25,1) {$e_2$};
    	\node (f2) at (7.25,-3) {$f_2$};
	    \node (ek) at (13.75,1) {$e_k$};
    	\node (fk) at (13.75,-3) {$f_k$};
    	\node (gk) at (18,-0.375) {$g_k$};
	    \node (hk) at (18,-2.375) {$h_k$};
	    
	    \node (B0) at (-3,-1) {$B_0$};
	    \node (J1) at (3,-3.5) {$J_1$};
	    \node (J2) at (10.5,-3.5) {$J_2,\ldots,J_{k-1}$};
	    \node (Jk) at (17.5,-3.5) {$J_k$};
	\end{scope}
	\end{tikzpicture}
	\caption{\protect\centering
	The graph $G_k$ has copies $J_1,\dots,J_k$ of the gadget.
	All edges have cost~one, except the zero-edges of $J_1,\dots,J_k$.}
	\label{f:algo:Gk}
\end{figure}
} 

Note that the cost of any 2-edge~cover is $\ge 4k+6$, since it contains
all edges of $B_0$, as well as (at least) one unit-edge incident to
each of the eight nodes of each gadget. W.l.o.g, $\wh{\DTWO}$ consists
of $B_0$ and the two 4-cycles $v_1,v_2,v_3,v_4,v_1$ and
$w_1,w_2,w_3,w_4,w_1$ of each gadget. Note that our choice of
$\wh{\DTWO}$ is a bridgeless 2-edge~cover.

Consider the working of the algorithm on $G_k$.
Since $\wh{\DTWO}$ has no bridges, the algorithm proceeds to the
gluing~step.  We use the notation of Section~\ref{s:algo-last} to
describe the working of the gluing~step on $G_k$.

In each iteration of the gluing~step, we choose the 2ec-block
containing $B_0$ to be the root 2ec-block. In the first iteration,
$R=B_0$ is the root 2ec-block, and the block $B$ is the cycle $v_1,
v_2, v_3, v_4, v_1$ of $J_1$, i.e., the ear-augmentation step picks the
``ear'' $R,B,R$ and takes the edges $e,f$ (see Figure~\ref{f:gluing:1})
to be the edges $e_1,f_1$ of $G_k$.  Since the end nodes of $e_1$ and
$f_1$ are non-adjacent in $B$, we apply case~2 of the gluing~step by
taking $B'$ to be the 4-cycle $w_1, w_2, w_3, w_4, w_1$ of $J_1$ (see
Figure~\ref{f:gluing:3}); moreover, we take $\tilde{e}$ to be the edge
$g_1$ (of $G_k$), and we take $\tilde{P}$ to consist of $h_1$ and its
two end~nodes (in $G_k$). The algorithm buys the four edges $e_1, f_1,
g_1, h_1$ and ``sells'' one unit-edge (say $v_3v_4$), so the algorithm
incurs a cost of $7$ for $J_1$. In subsequent iterations, the same case
of the gluing~step is applied to each of the copies $J_2,\dots,J_k$ of
the gadget, hence, $\ALGO(G_k)$ incurs a cost of $7$ for each copy of
the gadget; thus, we have $\cost(\ALGO(G_k)) = 7k+6$.

On the other hand, the subgraph $G^*$ of $G_k$ (described below) is a
2-ECSS of cost $4k+7$; $E(G^*)$ consists of the union of $k+3$ sets of
edges, namely,
the set of edges $\{e_i, f_i, g_i, h_i\}$ for each $i \in \{1,\ldots,k\}$, 
the set of zero-edges of $G_k$, 
$E(B_0)$,
and the singleton $\{w_2^kw_3^k\}$ (the edge $w_2^kw_3^k$ is indicated
by the right-most vertical line in Figure~\ref{f:algo:Gk}).  Hence,
$\opt(G_k) \leq 4k+7$.
\end{proof}
}

\bigskip
\noi\textbf{Acknowledgments}: We are grateful to several colleagues for
their careful reading of preliminary drafts and for their comments.
We thank an anonymous reviewer for a thorough review.

{\small
\bibliographystyle{abbrv}
\bibliography{map}

\begin{thebibliography}{10}

\bibitem{Adj17}
D.~Adjiashvili.
\newblock Beating approximation factor two for weighted tree augmentation with
  bounded costs.
%
\newblock In: P.~N.~Klein (ed.)
{\em Proceedings of the Twenty-Eighth Annual {ACM-SIAM} Symposium on
Discrete Algorithms, {SODA} 2017}, pages 2384--2399. {SIAM}, 2017.

\bibitem{Diestel}
R.~Diestel.
\newblock {\em Graph Theory (4th ed.)}.
\newblock Graduate Texts in Mathematics, Volume 173. Springer-Verlag,
  Heidelberg, 2010.

\bibitem{EFKN09}
G.~Even, J.~Feldman, G.~Kortsarz, and Z.~Nutov.
\newblock A 1.8~approximation algorithm for augmenting edge-connectivity of a
  graph from~1 to~2.
\newblock {\em {ACM} Trans. Algorithms}, 5(2):21:1--17, 2009.

\bibitem{FGKS17}
S.~Fiorini, M.~Gro{\ss}, J.~K{\"{o}}nemann, and L.~Sanit{\`{a}}.
\newblock {Approximating weighted tree augmentation via {C}hv{\'{a}}tal-{G}omory cuts},
\newblock In: A.~Czumaj (ed.)
{\em Proceedings of the Twenty-Ninth Annual {ACM-SIAM} Symposium on
Discrete Algorithms, {SODA} 2018}, pages 817--831. {SIAM}, 2018.


\bibitem{FJ:sicomp81}
G.~N. Frederickson and J.~J{\'{a}}J{\'{a}}.
\newblock Approximation algorithms for several graph augmentation problems.
\newblock {\em {SIAM} J. Comput.}, 10(2):270--283, 1981.

\bibitem{FJ:tcs82}
G.~N. Frederickson and J.~J{\'{a}}J{\'{a}}.
\newblock On the relationship between the biconnectivity augmentation and
  traveling salesman problems.
\newblock {\em Theor. Comput. Sci.}, 19:189--201, 1982.

\bibitem{GGTW09}
H.~N. Gabow, M.~X. Goemans, {\'{E}}.~Tardos, and D.~P. Williamson.
\newblock Approximating the smallest \emph{k}-edge connected spanning subgraph
  by {LP}-rounding.
\newblock {\em Networks}, 53(4):345--357, 2009.

\bibitem{GW95}
M.~X. Goemans and D.~P. Williamson.
\newblock A general approximation technique for constrained forest problems.
\newblock {\em {SIAM} J. Comput.}, 24(2):296--317, 1995.

\bibitem{GKZ18}
F.~Grandoni, C.~Kalaitzis, and R.~Zenklusen.
\newblock Improved approximation for tree augmentation: saving by rewiring.
\newblock {\em Proc.\ 50th ACM Symposium on Theory of Computing, {STOC}},
pages 632--645, 2018.

\bibitem{J01}
K.~Jain.
\newblock A factor 2 approximation algorithm for the generalized {S}teiner
  network problem.
\newblock {\em Combinatorica}, 21(1):39--60, 2001.

\bibitem{KLS15}
M.~Karpinski, M.~Lampis, and R.~Schmied.
\newblock New inapproximability bounds for {TSP}.
\newblock {\em J. Comput. Syst. Sci.}, 81(8):1665--1677, 2015.

\bibitem{KN:talg16}
G.~Kortsarz and Z.~Nutov.
\newblock A simplified 1.5-approximation algorithm for augmenting
  edge-connectivity of a graph from 1 to 2.
\newblock {\em {ACM} Trans. Algorithms}, 12(2):23:1--20, 2016.

\bibitem{LRS:book}
L.~C.~Lau, R.~Ravi, and M.~Singh.
\newblock {\em Iterative Methods in Combinatorial Optimization}.
\newblock Cambridge Texts in Applied Mathematics (No. 46). Cambridge University
  Press, 2011.

\bibitem{LP09:book}
L.~Lov{\`a}sz and M.~D.~Plummer.
\newblock {\em Matching Theory}, volume 367 of {\em AMS/Chelsea Publishing}.
\newblock American Mathematical Society, 2009.

\bibitem{Na03}
H.~Nagamochi.
\newblock An approximation for finding a smallest 2-edge connected subgraph
  containing a specified spanning tree.
\newblock {\em Discrete Applied Mathematics}, 126:83--113, 2003.

\bibitem{Schrijver}
A.~Schrijver.
\newblock {\em Combinatorial Optimization: Polyhedra and Efficiency}.
\newblock Algorithms and Combinatorics, Volume 24. Springer-Verlag,
  Berlin~Heidelberg, 2003.

\bibitem{SV:cca}
A.~Seb{\"{o}} and J.~Vygen.
\newblock Shorter tours by nicer ears: 7/5-approximation for the graph-{TSP},
  3/2 for the path version, and 4/3 for two-edge-connected subgraphs.
\newblock {\em Combinatorica}, 34(5):597--629, 2014.

\bibitem{VV00}
S.~Vempala and A.~Vetta.
\newblock Factor 4/3 approximations for minimum 2-connected subgraphs.
%
\newblock In K.~Jansen and S.~Khuller, (eds.) {\em Approximation Algorithms
  for Combinatorial Optimization, Third International Workshop, {APPROX} 2000,
  Proceedings}, {LNCS 1913}, pages 262--273. Springer, 2000.

\bibitem{W32}
H.~Whitney.
\newblock Non-separable and planar graphs.
\newblock {\em Trans.\ Amer.\ Math.\ Soc.}, 34:339--362, 1932.

\bibitem{WS:book}
D.~P. Williamson and D.~B. Shmoys.
\newblock {\em The Design of Approximation Algorithms}.
\newblock Cambridge University Press, 2011.

\end{thebibliography}
}

\end{document}